\DeclareRobustCommand{\SkipTocEntry}[4]{} 
\newcommand{\1}{\mathds{1}}
\newcommand{\C}{\mathbb{C}}
\newcommand{\N}{\mathbb{N}}
\newcommand{\R}{\mathbb{R}}
\newcommand{\bA}{{\bm{A}}}
\newcommand{\bF}{{\bm{F}}}
\newcommand{\boB}{\mathcal{B}}
\newcommand{\boC}{\mathcal{C}}
\newcommand{\cF}{\mathcal{F}}
\newcommand{\boF}{\mathcal{F}}
\newcommand{\boH}{\mathcal{H}}
\newcommand{\boI}{\mathcal{I}}
\newcommand{\boJ}{\mathcal{J}}
\newcommand{\boQ}{\mathcal{Q}}
\newcommand{\boR}{\mathcal{R}}
\newcommand{\boS}{\mathcal{S}}
\newcommand{\boT}{\mathcal{T}}
\newcommand{\boW}{\mathcal{W}}
\newcommand{\bsalpha}{\bm{\alpha}}
\newcommand{\bsbeta}{\bm{\beta}}
\newcommand{\bssigma}{\bm{\sigma}}
\newcommand{\bsSigma}{\bm{\Sigma}}
\newcommand{\ga}{\mathfrak{a}}
\newcommand{\gm}{\mathfrak{m}}
\newcommand{\gv}{\mathfrak{v}}
\newcommand{\gS}{\mathfrak{S}}
\DeclareMathOperator{\curl}{{\rm curl}}
\newcommand{\E}{\mathscr{L}_{\rm PV}}
\DeclareMathOperator{\tr}{{\rm tr}}
\DeclareMathOperator{\sign}{{\rm sign}}
\renewcommand{\div}{\mathop{\mathrm{div}}\nolimits}
\newcommand\ii{{\ensuremath {\infty}}}
\newcommand\pscal[1]{{\ensuremath{\left\langle #1 \right\rangle}}}
\newcommand{\norm}[1]{ \left| \! \left| #1 \right| \! \right| }
\newcommand{\Hdiv}{\dot{H}^1_{\rm div}(\R^3)}
\newtheorem{cor}{Corollary}
\newtheorem{lemma}{Lemma}
\newtheorem{prop}{Proposition}
\newtheorem{theorem}{Theorem}
\theoremstyle{definition}
\newtheorem*{merci}{Acknowledgements}
\newtheorem{remark}{Remark}
\numberwithin{cor}{section}
\numberwithin{equation}{section}
\numberwithin{lemma}{section}
\numberwithin{prop}{section}
\numberwithin{remark}{section}
\numberwithin{theorem}{section}
\begin{document}

\title[Dirac's vacuum in electromagnetic fields]{Construction of the Pauli-Villars-regulated Dirac vacuum in electromagnetic fields}

\author[P. Gravejat]{Philippe GRAVEJAT}
\address{Centre de Math\'ematiques Laurent Schwartz (UMR 7640), \'Ecole Polytechnique, F-91128 Palaiseau Cedex, France.}
\email{gravejat@math.polytechnique.fr}

\author[C. Hainzl]{Christian HAINZL}
\address{Mathematisches Institut, Auf der Morgenstelle 10, D-72076 T\"ubingen, Germany.}
\email{christian.hainzl@uni-tuebingen.de}

\author[M. Lewin]{Mathieu LEWIN}
\address{CNRS \& Laboratoire de Math\'ematiques (UMR 8088), Universit\'e de Cergy-Pontoise, F-95000 Cergy-Pontoise, France.}
\email{mathieu.lewin@math.cnrs.fr}

\author[\'E. S\'er\'e]{\'Eric S\'ER\'E}
\address{Ceremade (UMR 7534), Universit\'e Paris-Dauphine, Place du Mar\'echal de Lattre de Tassigny, F-75775 Paris Cedex 16, France.}
\email{sere@ceremade.dauphine.fr}

\date{December 11, 2012. Final version to appear in \textit{Arch. Rat. Mech. Anal.}}

\begin{abstract}
Using the Pauli-Villars regularization and arguments from convex analysis, we construct solutions to the classical time-independent Maxwell equations in Dirac's vacuum, in the presence of small external electromagnetic sources. The vacuum is not an empty space, but rather a quantum fluctuating medium which behaves as a nonlinear polarizable material. Its behavior is described by a Dirac equation involving infinitely many particles. The quantum corrections to the usual Maxwell equations are nonlinear and nonlocal. Even if photons are described by a purely classical electromagnetic field, the resulting vacuum polarization coincides to first order with that of full Quantum Electrodynamics.
\end{abstract}

\maketitle

\tableofcontents

\section{Introduction}

In classical Physics, a time-independent external density of charge $\rho_{\rm ext}$ and a charge current $j_{\rm ext}$ induce a static electromagnetic field $(E_{\rm ext}=-\nabla V_{\rm ext},B_{\rm ext}=\curl A_{\rm ext})$, which solves Maxwell's equations in Coulomb gauge:
\begin{equation}
\begin{cases}
-\Delta V_{\rm ext}= 4\pi\,e\,\rho_{\rm ext},\\
-\Delta A_{\rm ext}= 4\pi\,e\,j_{\rm ext},\\
\div A_{\rm ext}=0,
\end{cases}
\label{eq:Maxwell}
\end{equation}
where $e$ is the elementary charge. It is convenient to gather the electrostatic and magnetic potentials in a unique vector $\bA_{\rm ext}=(V_{\rm ext},A_{\rm ext})$ called the \emph{four-potential}, which we will do in the whole paper. The electromagnetic potential $\bA_{\rm ext}$ solving~\eqref{eq:Maxwell} is the unique critical point of the (time-independent) Maxwell Lagrangian action functional
\begin{multline}
\mathscr{L}^{\rho_{\rm ext},j_{\rm ext}}(\bA)=\frac1{8\pi}\int_{\R^3}\Big(|\nabla V(x)|^2-|\curl A(x)|^2\Big)\,dx\\
-e\int_{\R^3}\rho_{\rm ext}(x)V(x)\,dx+e\int_{\R^3}j_{\rm ext}(x)\cdot A(x)\,dx,
\end{multline}
which is strictly convex with respect to $V$ and strictly concave with respect to $A$. In particular we can obtain $\bA_{\rm ext}$ by a min-max procedure:
$$\mathscr{L}^{\rho_{\rm ext},j_{\rm ext}}(\bA_{\rm ext})=\min_{V}\max_A\mathscr{L}^{\rho_{\rm ext},j_{\rm ext}}(V,A)=\max_A\min_{V}\mathscr{L}^{\rho_{\rm ext},j_{\rm ext}}(V,A)$$
where the constraint $\div A=0$ is always assumed.

The situation is much more complicated in Dirac's vacuum. It has been known for a long time that, in reality, the vacuum is not an empty space, but rather a quantum fluctuating medium which behaves as a nonlinear polarizable material~\cite{EulKoc-35,Euler-36,HeiEul-36,Schwinger-51a,GreSch-08}. In this medium,  virtual electron-positron pairs induce a polarization in response to external fields. The resulting electromagnetic field which is observed in experiments has to take into account the vacuum polarization effects. The corresponding four-potential $\bA_*$ solves coupled nonlinear Maxwell equations of the form
\begin{equation}
\begin{cases}
-\Delta V_*= 4\pi\,e\,\big(\rho_{\rm vac}(e\bA_*)+\rho_{\rm ext}\big),\\
-\Delta A_*= 4\pi\,e\,\big(j_{\rm vac}(e\bA_*)+j_{\rm ext}\big),\\
\div A_*=0.
\end{cases}
\label{eq:Maxwell-vp}
\end{equation}
Here $\rho_{\rm vac}(e\bA_*)$ and $j_{\rm vac}(e\bA_*)$ are respectively the charge density and the charge current induced in the vacuum. As we shall explain, they are nonlinear and nonlocal functions of $e\bA_*$.

The Dirac vacuum is described by Quantum Field Theory, that is, by a second-quantized fermion field. The charge and current densities 
$\rho_{\rm vac}(e\bA_*)$ and $j_{\rm vac}(e\bA_*)$ are obtained by minimizing the energy of this field in the presence of the given potential $e\bA_*$. In this model the interaction between the Dirac particles is mediated by the classical electromagnetic field which accounts for photons. This approach to vacuum polarization is usual in the Physics literature (see, e.g.,~\cite{Schwinger-51a,GreRei-08}).

The main idea behind the nonlinear Maxwell equations~\eqref{eq:Maxwell-vp} is that the vacuum behaves as a nonlinear medium, and the form of the equation is reminiscent of nonlinear optics. The nonlinear effects are in practice rather small since $e$ has a small physical value, but they become important in the presence of strong external sources, which can produce electron-positron pairs in the vacuum. Already in 1933, Dirac computed in~\cite{Dirac-34b} the first order term obtained by expanding $\rho_{\rm vac}(e\bA_*)$ in powers of $e$. The nonlinear equations~\eqref{eq:Maxwell-vp} was then studied by Euler, Heisenberg, Kockel and Weisskopf among others~\cite{EulKoc-35,Euler-36,HeiEul-36,Weisskopf-36}. In a celebrated paper, Schwinger~\cite{Schwinger-51a} used~\eqref{eq:Maxwell-vp} (and a time-dependent version of it) to derive the probability of pair creation by tunneling in a strong electrostatic field. For more recent works on the subject, see the references in~\cite{GreRei-08}. Several ongoing experiments aim at detecting some nonlinear effects of the vacuum in the laboratory~\cite{Zavattini-10}.

Like for the usual Maxwell equations~\eqref{eq:Maxwell}, the nonlinear equations~\eqref{eq:Maxwell-vp} in Dirac's vacuum arise from an effective Lagrangian action, which now includes the vacuum energy 
\begin{multline}
\mathscr{L}^{\rho_{\rm ext},j_{\rm ext}}(\bA)=\frac1{8\pi}\int_{\R^3}\Big(|\nabla V(x)|^2-|\curl A(x)|^2\Big)\,dx\\
-e\int_{\R^3}\rho_{\rm ext}(x)V(x)\,dx+e\int_{\R^3}j_{\rm ext}(x)\cdot A(x)\,dx-\boF_{\rm vac}(e\bA).
\label{eq:effective_Lagrangian_intro}
\end{multline}
Here $\boF_{\rm vac}(e\bA)$ is the ground state energy of Dirac's vacuum in the potential $e\bA$. The densities of the vacuum are then defined by
\begin{equation}
e\,\rho_{\rm vac}(\bA):=\frac{\partial}{\partial V}\boF_{\rm vac}(e\bA)\quad\text{and}\quad e\,j_{\rm vac}(\bA):=-\frac{\partial}{\partial A}\boF_{\rm vac}(e\bA). 
\label{eq:diff_PV_intro}
\end{equation}
We note that the vacuum correction $-\boF_{\rm vac}(e\bA)$ to Maxwell's Lagrangian has been computed to first order in the semi-classical approximation in~\cite{KarNeu-50,Schwinger-51a}.

It is not so easy to provide a rigorous definition of the vacuum energy $\boF_{\rm vac}(e\bA)$. It is well-known that this quantity is divergent in the high energy regime and an ultraviolet regularization has to be imposed. In this paper we use the famous Pauli-Villars regularization method~\cite{PauVil-49} to properly define the vacuum energy $\boF_{\rm vac}(e\bA)$ (Theorem~\ref{thm:defF} below). Then we are able to state our main result (Theorem~\ref{thm:polarized} below), which gives the existence of a critical point of the effective Lagrangian action~\eqref{eq:effective_Lagrangian_intro}, when the external sources $\rho_{\rm ext}$ and $j_{\rm ext}$ are not too large. As a corollary, we obtain solutions to the nonlinear Maxwell equations~\eqref{eq:Maxwell-vp}.

This article is the continuation of several works dealing with the Hartree-Fock approximation of Quantum ElectroDynamics (QED), some of them in collaboration with Solovej,~\cite{HaiLewSer-05a, HaiLewSer-05b, HaiLewSol-07, HaiLewSerSol-07, GraLewSer-09, GraLewSer-11}, and which originated from a seminal paper of Chaix and Iracane~\cite{ChaIra-89}. There, only the purely electrostatic case was considered. To our knowledge, the present work is the first dealing with electromagnetic fields in interaction with Dirac's vacuum.

\begin{merci}
M.L. and \'E.S. acknowledge support from the French Ministry of Research (Grant ANR-10-0101). M.L. acknowledges support from the European Research Council under the European Community's Seventh Framework Programme (FP7/2007-2013 Grant Agreement MNIQS 258023). We are also grateful to the referees for their careful reading of the manuscript.
\end{merci}

\section{Main results}

\subsection{Elementary properties of electromagnetic Dirac operators}
\label{sub:preliminary}

Before entering the main subject of this article, we recall some elementary spectral properties of the Dirac operator in the presence of electromagnetic fields~\cite[Chap. 4]{Thaller}. 

We work in a system of units such that the speed of light and Planck's reduced constant are both set to one, $c=\hbar=1$. We introduce the Dirac operator with mass $m$, elementary charge $e$ and electromagnetic four-potential $\bA = (V, A)$,
\begin{equation}
\label{def:DmA}
D_{m, e\bA} := \bsalpha \cdot \big( - i \nabla - e A(x) \big)+eV(x) + m \bsbeta
\end{equation}
which is an operator acting on $L^2(\R^3,\C^4)$. Here the four Dirac matrices $\bsalpha = (\bsalpha_1, \bsalpha_2 , \bsalpha_3)$ and $\bsbeta$ are equal to
$$\bsalpha_k := \begin{pmatrix} 0 & \bssigma_k \\ \bssigma_k & 0 \end{pmatrix} \quad {\rm and} \quad \bsbeta := \begin{pmatrix} I_2 & 0 \\ 0 & - I_2 \end{pmatrix},$$
the Pauli matrices $\bssigma_1$, $\bssigma_2$ and $\bssigma_3$ being defined by
$$\bssigma_1 := \begin{pmatrix} 0 & 1 \\ 1 & 0 \end{pmatrix},\quad \bssigma_2 := \begin{pmatrix} 0 & - i \\ i & 0 \end{pmatrix} \quad {\rm and} \quad \bssigma_3 := \begin{pmatrix} 1 & 0 \\ 0 & - 1 \end{pmatrix}.$$
The spectrum of the free Dirac operator is not semi-bounded~\cite{Thaller},
$$\sigma(D_{m,0})=(-\ii,-m]\cup[m,\ii).$$
As we will recall below, the unbounded negative spectrum of $D_{m,0}$ led Dirac to postulate the existence of the positron, and to assume that the vacuum is a much more complicated object than expected. The mathematical difficulties arising from the negative spectrum are reviewed for instance in~\cite{EstLewSer-08}.

In our setting, the natural space for the four-potential $\bA=(V,A)$ is the Coulomb-gauge homogeneous Sobolev space
\begin{multline}
\label{eq:def_homogeneous_Sobolev}
\Hdiv := \Big\{ \bA = (V, A) \in L^6(\R^3, \R^4) \ : \\ \div A = 0 \ {\rm and} \ \bF=(-\nabla V, \curl A) \in L^2(\R^3, \R^6) \Big\},
\end{multline}
endowed with its norm
\begin{equation}
\| \bA \|_{\Hdiv}^2 := \| \nabla V \|_{L^2(\R^3)}^2 + \| \curl A \|_{L^2(\R^3)}^2 = \| \bF \|_{L^2(\R^3)}^2.
\end{equation}
Here and everywhere, the equation $\div A=0$ is understood in the sense of distributions. The requirement $\bF\in L^2(\R^3)$ simply says that the electrostatic field $E=-\nabla V$ and the magnetic field $B=\curl A$ have a finite energy,
$$\int_{\R^3}|E^2|+|B|^2<\ii.$$

\begin{lemma}[Elementary spectral properties of $D_{m,\bA}$]
\label{lem:spectre}
Let $m > 0$.

\smallskip

\noindent $(i)$ Any four-potential $\bA \in \Hdiv$ is $D_{m, 0}$--compact. The operator $D_{m, \bA}$ is self-adjoint on $H^1(\R^3)$ and its essential spectrum is 
$$\sigma_{\rm ess}(D_{m, \bA}) = (- \infty, - m] \cup [m, \infty).$$

\smallskip

\noindent $(ii)$ The eigenvalues of $D_{m, \bA}$ in $(- m, m)$ are Lipschitz functions of $\bA$ in the norm $\| \bA \|_{\Hdiv}$.

\smallskip

\noindent $(iii)$ There exists a universal constant $C$ such that, if 
\begin{equation}
\label{eq:cond-petit-A}
\| \bA \|_{\Hdiv} \leq \eta \sqrt{m},
\end{equation}
for some number $\eta < 1/C$, then
$$\sigma(D_{m, \bA}) \cap (- m (1 - C \eta), (1 - C \eta) m) = \emptyset.$$

\smallskip

\noindent $(iv)$ Finally, if $V \equiv 0$, then $\sigma(D_{m, \bA}) \cap (- m, m) = \emptyset$.
\end{lemma}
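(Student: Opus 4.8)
The plan is to reduce all four items to a single elementary bound on the perturbation $W:=\bsalpha\cdot A-V$ of the free operator $D_{m,0}$ (note $W$ is symmetric since $V,A$ are real-valued and the $\bsalpha_k$ are Hermitian). For $\bA=(V,A)\in\Hdiv$ and $\psi\in H^1(\R^3)$, Hölder's inequality together with the Sobolev embeddings $\dot H^1(\R^3)\hookrightarrow L^6(\R^3)$ and $\dot H^{1/2}(\R^3)\hookrightarrow L^3(\R^3)$ gives
\begin{multline*}
\|W\psi\|_{L^2}\le\big(\|V\|_{L^6}+\|A\|_{L^6}\big)\,\|\psi\|_{L^3}\le C\,\|\bA\|_{\Hdiv}\,\|\psi\|_{\dot H^{1/2}}\\
\le C\,\|\bA\|_{\Hdiv}\,\|\psi\|_{L^2}^{1/2}\,\|\nabla\psi\|_{L^2}^{1/2},
\end{multline*}
where $\|A\|_{L^6}\le C\|\nabla A\|_{L^2}=C\|\curl A\|_{L^2}$ because $\div A=0$, and the last step is Cauchy--Schwarz in Fourier. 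In particular $W$ is infinitesimally $D_{m,0}$-bounded.

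For $(i)$ I would first check that $\bA$ is $D_{m,0}$-compact: since $(1+|\xi|^2)^{-1/2}\in L^p(\R^3)$ for every $p>3$, the Kato--Seiler--Simon inequality places $V(1-\Delta)^{-1/2}$ and $|A|(1-\Delta)^{-1/2}$ in the Schatten class $\gS^6$, so that $W(D_{m,0}-i)^{-1}$, obtained by composing with the bounded operator $(1-\Delta)^{1/2}(D_{m,0}-i)^{-1}$, is compact. Self-adjointness of $D_{m,\bA}=D_{m,0}+W$ on $H^1(\R^3)$ then follows from the infinitesimal bound and the Kato--Rellich theorem, and the invariance of the essential spectrum under relatively compact perturbations (Weyl's theorem) gives $\sigma_{\rm ess}(D_{m,\bA})=\sigma_{\rm ess}(D_{m,0})=(-\infty,-m]\cup[m,\infty)$.

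Item $(iii)$ is the only part requiring genuine work, and I expect it to be the main obstacle. For $\psi\in H^1(\R^3)$ with $\|\psi\|_{L^2}=1$, set $g:=\|\nabla\psi\|_{L^2}$; then $\|D_{m,\bA}\psi\|\ge\|D_{m,0}\psi\|-\|W\psi\|\ge\sqrt{g^2+m^2}-C\eta\sqrt m\,\sqrt g$, using $\|D_{m,0}\psi\|^2=g^2+m^2$, the bound above and the hypothesis \eqref{eq:cond-petit-A}. A case split finishes it: for $g\le m$ use $\sqrt{g^2+m^2}\ge m$, and for $g\ge m$ use $\sqrt{g^2+m^2}\ge g$ together with $\sqrt m\le\sqrt g$; either way the right-hand side is $\ge(1-C\eta)m$. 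Hence $\|D_{m,\bA}\psi\|\ge(1-C\eta)m\,\|\psi\|$ for every $\psi$ in the domain $H^1(\R^3)$ of the self-adjoint operator $D_{m,\bA}$, which is equivalent to $\sigma(D_{m,\bA})\cap\big(-(1-C\eta)m,(1-C\eta)m\big)=\emptyset$ (enlarging $C$ if necessary so that $1-C\eta>0$).

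For $(ii)$, I would interpolate along $\bA_t:=\bA+t(\bA'-\bA)$, $t\in[0,1]$; since $D_{m,\bA_t}=D_{m,0}+W_{\bA}+t\,W_{\bA'-\bA}$ with $W_{\bA'-\bA}$ infinitesimally $D_{m,0}$-bounded, this is a self-adjoint holomorphic family of type (A), whose eigenvalues in the fixed gap $(-m,m)$ are locally analytic branches $\lambda(t)$ with $\lambda'(t)=\pscal{\psi_t,\,W_{\bA'-\bA}\,\psi_t}$ for a normalized eigenvector $\psi_t$ (Feynman--Hellmann). The $\dot H^1$ norm of $\psi_t$ stays bounded along the segment --- from $\|\nabla\psi_t\|_{L^2}^2+m^2=\|(\lambda(t)-W_{\bA_t})\psi_t\|^2\le\big(m+C\|\bA_t\|_{\Hdiv}\|\nabla\psi_t\|_{L^2}^{1/2}\big)^2$, which caps $\|\nabla\psi_t\|_{L^2}$ in terms of $m$ and $\max_t\|\bA_t\|_{\Hdiv}$ --- so Cauchy--Schwarz and the bound above give $|\lambda'(t)|\le\|W_{\bA'-\bA}\psi_t\|_{L^2}\le C\,\|\bA'-\bA\|_{\Hdiv}$; integrating over $t$ yields the (locally) Lipschitz dependence, and relabelling the ordered eigenvalues at crossings preserves it. Finally, for $(iv)$, with $V\equiv0$ I would write $\bm{\pi}:=-i\nabla-A$ and $\tau:=\bssigma\cdot\bm{\pi}$ on $L^2(\R^3,\C^2)$, so that $D_{m,\bA}=\begin{pmatrix} m & \tau \\ \tau & -m \end{pmatrix}$ in $2\times2$ block form; expanding $\|D_{m,\bA}\psi\|^2$ for $\psi=(\psi_1,\psi_2)\in H^1(\R^3)$, the cross terms cancel because $\tau$ is symmetric and $m$ a scalar, leaving $\|D_{m,\bA}\psi\|^2=\|\tau\psi_1\|^2+\|\tau\psi_2\|^2+m^2\|\psi\|^2\ge m^2\|\psi\|^2$, hence $D_{m,\bA}^2\ge m^2$ and $\sigma(D_{m,\bA})\cap(-m,m)=\emptyset$. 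To summarize, $(i)$ is routine (Kato--Seiler--Simon, Kato--Rellich, Weyl), $(ii)$ is first-order perturbation theory once the eigenfunctions are controlled in $H^1$, $(iv)$ is the supersymmetry of the purely magnetic Dirac operator, and the quantitative minimisation in $(iii)$ is where the work lies.
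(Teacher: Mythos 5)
Your proof is correct and follows essentially the same route as the paper's: Kato--Seiler--Simon plus Sobolev for the relative compactness and self-adjointness in $(i)$, standard analytic perturbation theory for $(ii)$, a quantitative lower bound on $\|D_{m,\bA}\psi\|$ for $(iii)$, and the supersymmetric block structure of the purely magnetic Dirac operator for $(iv)$ --- where the paper simply cites \cite{Kato,ReeSim2,Thaller}, you supply the computations behind those references. The only step handled differently is $(iii)$: the paper bounds the operator norm $\|(V-\bsalpha\cdot A)(D_{m,0})^{-1}\|$ by $C\eta$ directly via the $\gS_6$ estimate, yielding $\|D_{m,\bA}\psi\|\ge(1-C\eta)\|D_{m,0}\psi\|\ge(1-C\eta)m\|\psi\|$ in one line, whereas you pass through the form bound $\|W\psi\|\le C\|\bA\|_{\Hdiv}\|\psi\|_{L^2}^{1/2}\|\nabla\psi\|_{L^2}^{1/2}$ and a case split on $\|\nabla\psi\|_{L^2}$ versus $m$; both rest on the same Kato--Seiler--Simon/H\"older estimate and give the same conclusion.
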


\begin{proof}
Recall the Kato-Seiler-Simon inequality (see~\cite{SeiSim-75} and~\cite[Thm. 4.1]{Simon-79})
\begin{equation}
\label{eq:KSS}
\forall p \geq 2, \ \big\| f(x) g(-i \nabla) \big\|_{\gS_p} \leq \frac{1}{(2 \pi)^\frac{3}{p}} \big\| f \big\|_{L^p} \big\| g \big\|_{L^p},
\end{equation}
where $\gS_p$ is the usual Schatten class~\cite{Simon-79}. Applying~\eqref{eq:KSS} with $p = 6$ together with the Sobolev inequality, we obtain 
$$\Big\| V \frac{1}{D_{m,0}} \Big\|_{\gS_6} \leq \frac{C}{\sqrt{m}} \big\| V \big\|_{L^6} \leq \frac{C}{\sqrt{m}} \big\| \nabla V \big\|_{L^2},$$
and, similarly,
$$\Big\| \bsalpha \cdot A \frac{1}{D_{m,0}} \Big\|_{\gS_6} \leq \frac{C}{\sqrt{m}} \big\|A \big\|_{L^6} \leq \frac{C}{\sqrt{m}} \big\| \curl A \big\|_{L^2},$$
where we have used that $\div A = 0$. Since all the operators in $\gS_6$ are compact, statements $(i)$ and $(ii)$ follow from usual perturbation theory~\cite{Kato,ReeSim2}. Concerning $(iii)$, we notice that
$$D_{m, \bA}\,(D_{m, 0})^{-1} = \Big( I + \big( V - \bsalpha \cdot A \big) \frac{1}{D_{m, 0}} \Big) $$
so that, under condition~\eqref{eq:cond-petit-A},
$$\big| D_{m, \bA} \big| \geq \big( 1 - C \eta) \big| D_{m, 0} \big|.$$
Statement $(iii)$ then follows from $(i)$, whereas $(iv)$ is~\cite[Thm 7.1]{Thaller}.
\end{proof}

\subsection{The Pauli-Villars-regulated vacuum energy}
\label{sub:definition}

\subsubsection{Derivation}
We are now ready to define the energy of the vacuum, using Quantum Field Theory. We consider a fermionic second-quantized field, placed in a given electromagnetic potential $\bA$. Later, the potential $\bA$ which describes light and external sources will be optimized. In this section it is kept fixed and we shall look for the ground state energy of the Dirac field in the given $\bA$. Our second-quantized field only interact through the potential $\bA$, there is no instantaneous interaction between the fermions.

The Hamiltonian of the field  reads
\begin{equation}
\mathbb{H}^{e\bA}:=\frac12\int_{\R^3} \Big(\Psi^*(x)D_{m,e\bA} \Psi(x) - \Psi(x)D_{m,e\bA} \Psi^*(x)\Big) dx,
\label{eq:formal-Hamiltonian}
\end{equation}
where $\Psi(x)$ is the second-quantized field operator which annihilates an electron at $x$ and satisfies the anti-commutation relation
\begin{equation}
\label{CAR}
\Psi^*(x)_\sigma \Psi(y)_\nu + \Psi(y)_\nu \Psi^*(x)_\sigma = 2 \delta_{\sigma, \nu} \delta(x - y).
\end{equation}
Here $1\leq\sigma,\nu\leq4$ are the spin variables and $\Psi(x)_\sigma$ is an operator-valued distribution. The Hamiltonian $\mathbb{H}^{e\bA}$ formally acts on the fermionic Fock space $\cF=\C\oplus\bigoplus_{N\geq1}\bigwedge_1^NL^2(\R^3,\C^4)$.
The proper interpretation of the expression in parenthesis in~\eqref{eq:formal-Hamiltonian} is
\begin{multline*}
\Psi^*(x)D_{m,e\bA} \Psi(x) - \Psi(x)D_{m,e\bA} \Psi^*(x)\\:=\sum_{\mu,\nu=1}^4\Psi^*(x)_\mu\big(D_{m,e\bA}\big)_{\mu\nu}\Psi(x)_\nu-\Psi(x)_\mu\big(D_{m,e\bA}\big)_{\mu\nu}\Psi^*(x)_\nu.
\end{multline*}
This choice is made to impose charge-conjugation invariance~\cite{HaiLewSol-07},
$$\mathscr{C}\mathbb{H}^{e\bA}\mathscr{C}^{-1}=\mathbb{H}^{-e\bA}$$ 
where $\mathscr{C}$ is the charge-conjugation operator in Fock space~\cite{Thaller}.

For any fixed $\bA$, the Hamiltonian~\eqref{eq:formal-Hamiltonian} is a quadratic polynomial in the creation and annihilation operators $\Psi^*(x)$ and $\Psi(x)$. It is therefore very well understood. The expectation value in any state in Fock space can be expressed as
\begin{equation}
\pscal{\mathbb{H}^{e\bA}}=\tr D_{m,e\bA}\left(\gamma-\frac12\right)
\label{eq:HF-energy} 
\end{equation}
where $\gamma$ is the one-particle density matrix of the chosen state, namely
$$\gamma(x, y)_{\sigma,\nu} = \langle \Psi^*(x)_\sigma \Psi(y)_\nu \rangle.$$
The subtraction of half the identity comes from charge-conjugation invariance. The details of this calculation can be found in~\cite{HaiLewSol-07}. Because electrons are fermions, it is known that $\gamma$ must satisfy the constraint $0\leq\gamma\leq 1$ on $L^2(\R^3,\C^4)$, which is called the Pauli principle. Conversely, any operator $\gamma$ such that $0\leq\gamma\leq1$, arises (at least formally) from one state in Fock space. Since the energy only depends on the quantum state of the electrons through the operator $\gamma$, we can refrain from formulating our model in Fock space and only use the simpler operator $\gamma$ and the corresponding energy~\eqref{eq:HF-energy}. 

We note that the energy~\eqref{eq:HF-energy} is gauge invariant. Namely, it does not change if we replace $A$ by $A+\nabla\chi$ and $\gamma$ by $e^{ie\chi}\gamma e^{-ie\chi}$ for any function $\chi$.

\begin{remark}
A preferred state among those having $\gamma$ as one-particle density matrix is the unique associated quasi-free state~\cite{BacLieSol-94}, also called (generalized) Hartree-Fock state. So the main simplification with the model of this paper is that, when the electromagnetic field is purely classical, the ground state of the Hamiltonian is always a quasi-free state. Hartree-Fock theory is exact here, it is not an approximation. This simplification does not occur when the photon field is quantized.
\end{remark}

We are interested in finding the ground state of the vacuum, which corresponds to minimizing~\eqref{eq:HF-energy} with respect to $\gamma$. 
For atoms and molecules, we would impose a charge constraint of the form
$$\tr \Big( \gamma - \frac{1}{2} \Big) = N.$$
In this paper we restrict ourselves to the vacuum case for simplicity, and we thus do not have any other constraint than $0\leq\gamma\leq1$. The formal minimizer of the energy~\eqref{eq:HF-energy} is the negative spectral projector
$$\gamma = \1_{(- \infty, 0)} \Big( D_{m,e\bA}\Big).$$
The interpretation is that the polarized vacuum consists of particles filling all the negative energies of the Dirac operator $D_{m,e\bA}$, in accordance with the original ideas of Dirac~\cite{Dirac-33,Dirac-34a,Dirac-34b}.\footnote{For atoms and molecules, the vacuum projector $\1_{(- \infty, 0)}(\cdots)$ has to be replaced by a spectral projector of the form $\1_{(- \infty, \mu)}(\cdots)$, for some chemical potential $\mu$ which is chosen to ensure the correct number $N$ of electrons in the gap (more precisely the correct total charge of the system). Except from this change of chemical potential, the equations take exactly the same form.}

The corresponding ground state energy is
\begin{equation}
\label{eq:def_total_energy}
\min_{0\leq\gamma\leq1} \tr D_{m,e\bA}\left(\gamma-\frac12\right) = - \frac{1}{2} \tr \big| D_{m, e\bA} \big|.
\end{equation}
Of course this energy is infinite, except if our model is settled in a box with an ultraviolet cut-off~\cite{HaiLewSol-07}.

Here and everywhere in the paper, the absolute value of an operator is defined by the functional calculus
$$|C| := \sqrt{C^* C}.$$
It is in general not a scalar operator, that is, it may still depend on the spin. In the special case of $D_{m, 0}$, it does not depend on the spin, however, since it is the scalar pseudo-differential operator
$$\big| D_{m, 0} \big| = \sqrt{- \Delta + m^2}.$$

In order to give a clear mathematical meaning to~\eqref{eq:def_total_energy}, we argue as follows. First, we can subtract the (infinite) energy of the free Dirac sea and define the relative energy as
\begin{equation}
\label{eq:def_relative_energy}
\boF_{\rm rel}(e\bA):=\frac{1}{2} \tr \Big( \big| D_{m, 0} \big| - \big| D_{m, e \bA}\big| \Big).
\end{equation}
Since we have removed an (infinite) constant, we formally do not change the variational problem in which we are interested, hence we also do not change the resulting equations.

Unfortunately, the functional~\eqref{eq:def_relative_energy} is not yet well-defined, because the model is known to have ultraviolet divergences. Indeed, the operator $|D_{m, 0}| - |D_{m, e \bA}|$ is not trace-class when $A \neq 0$, which is reminiscent of the fact that the difference of the two corresponding negative projectors is never Hilbert-Schmidt~\cite{NenSch-78}. This can be seen by formally expanding the trace in a power series of $e\bA$. The first order term vanishes and the second order term is computed in Section~\ref{sec:2ndorder} below. It is infinite if no high energy cut-off is imposed.

In order to remove these divergences, an ultraviolet cut-off has to be imposed. The choice of this regularization is extremely important. When the trace of $|D_{m, 0}| - |D_{m, e \bA}|$ is expanded as a power series of $e\bA$, several terms which look divergent actually vanish because of gauge invariance. In addition to obvious physical motivations, it is necessary to keep the gauge symmetry for this reason. Some simple choices in the spirit of what we have done in the purely electrostatic case (see, e.g.~\cite{GraLewSer-09} for two different choices) would not work here, because of their lack of gauge symmetry.

In 1949, Pauli and Villars~\cite{PauVil-49} have proposed a very clever way to regularize QED, while keeping the appropriate invariances. It is this technique that we will use in this paper (but there are other choices, like the famous dimensional regularization~\cite{Leibbrandt-75}). The Pauli-Villars method consists in introducing $J$ fictitious particles in the model, with very high masses $m_1,...,m_J$ playing the role of ultraviolet cut-offs.\footnote{Since in our units $c=\hbar=1$, $m$ has the dimension of a momentum.} These particles have no physical significance and their role is only to regularize the model at high energies. Because of their large masses, they do not participate much in the low energy regime where everyday Physics takes place. 

In our language, the Pauli-Villars method consists in introducing the following energy functional
\begin{equation}
\label{eq:def_total_energy_intro}
\boxed{\boF_{\rm PV}(e\bA) := \frac{1}{2} \tr \Bigg( \sum_{j = 0}^J c_j \, \Big( \big| D_{m_j, 0} \big| - \big| D_{m_j, e \bA} \big| \Big) \Bigg),}
\end{equation}
see Remark~\ref{rmk:PV} below for some details.
Here $m_0 = m$ and $c_0 = 1$, and the coefficients $c_j$ and $m_j$ are chosen such that 
\begin{equation}
\label{cond:PV}
\sum_{j = 0}^J c_j = \sum_{j = 0}^J c_j \, m_j^2 = 0. 
\end{equation}

It is well-known in the Physics literature~\cite{PauVil-49, GreRei-08, BjorkenDrell-65} that only two auxiliary fields are necessary to fulfill these conditions, hence we shall take $J=2$ in the rest of the paper. In this case, the condition~\eqref{cond:PV} is equivalent to
\begin{equation}
\label{cond:coef}
c_1 = \frac{m_0^2 - m_2^2}{m_2^2 - m_1^2} \quad {\rm and} \quad c_2 = \frac{m_1^2 - m_0^2}{m_2^2 - m_1^2}.
\end{equation}
We will always assume that $m_0 < m_1 < m_2$, which implies that $c_1 < 0$ and $c_2 > 0$.

The role of the constraint~\eqref{cond:PV} is to remove the worst (linear) ultraviolet divergences. In the limit $m_1, m_2 \to \infty$, the regularization does not prevent a logarithmic divergence, which is best understood in terms of the averaged ultraviolet cut-off $\Lambda$ defined as
\begin{equation}
\label{def:Lambda}
\boxed{\log(\Lambda^2) := - \sum_{j = 0}^2 c_j \log (m_j^2).}
\end{equation}
The value of $\Lambda$ does not determine $m_1$ and $m_2$ uniquely. In practice, the latter are chosen as functions of $\Lambda$ such that $c_1$ and $c_2$ remain bounded when $\Lambda$ goes to infinity.

That the model is still logarithmically divergent in the averaged cut-off $\Lambda$ can again be seen by looking at the second order term in the expansion, given by \eqref{eq:formF2} and~\eqref{est:M} below. Removing this last divergence requires a \emph{renormalization} of the elementary charge $e$. This can be done following the method that we used in the purely electrostatic case with a sharp cut-off in~\cite{GraLewSer-11}, but it is not our goal in this paper. 

\begin{remark}\label{rmk:PV}
In our language the fictitious particles of the Pauli-Villars scheme are described by density matrices $\gamma_j$, with $\gamma_0 = \gamma$ and the divergent energy of the vacuum is chosen in the form
$$\sum_{j = 0}^J c_j \, \tr D_{m_j, e \bA}(\gamma_j - 1/2)$$
instead of~\eqref{eq:HF-energy}. When optimizing the energy subject to the Pauli principles $0 \leq \gamma_j \leq 1$, one has to minimize over the matrices $\gamma_j$ such that $c_j > 0$ and maximize over those such that $c_j < 0$. Adding the infinite constant $\sum_{j = 0}^J c_j \, \tr |D_{m_j, 0}|/2$ gives Formula~\eqref{eq:def_total_energy_intro}.
\end{remark}

\subsubsection{Rigorous definition}
Our main result below says that the energy functional $\bA\mapsto \boF_{\rm PV}\big(\bA\big)$ is well defined for a general four-potential $\bA = (V, A)$ in the energy space $\Hdiv$ (and which therefore satisfies the Coulomb gauge condition $\div A = 0$). 

\begin{theorem}[Proper definition of $\boF_{\rm PV}$]
\label{thm:defF}
Assume that $c_j$ and $m_j$ satisfy
\begin{equation}
\label{cond:c_i_thm}
c_0 = 1, \quad m_2 > m_1 > m_0 > 0 \quad {\rm and} \quad \sum_{j = 0}^2 c_j = \sum_{j = 0}^2 c_j m_j^2 = 0.
\end{equation}

\noindent {\it (i)} Let
\begin{equation}
\label{def:TA}
T_\bA := \frac{1}{2} \sum_{j = 0}^2 c_j \Big( \big| D_{m_j, 0} \big| - \big| D_{m_j, \bA} \big| \Big).
\end{equation}
For any $\bA \in L^1(\R^3, \R^4) \cap \Hdiv$, the operator $\tr_{\C^4} T_\bA$ is trace-class on $L^2(\R^3, \C)$. In particular, $\boF_{\rm PV}(\bA)$ is well-defined in this case, by
\begin{equation}
\label{eq:proper-def-F}
\boxed{\boF_{\rm PV}(\bA) = \tr \big( \tr_{\C^4} T_{\bA} \big).}
\end{equation}

\smallskip

\noindent {\it (ii)} The functional $\boF_{\rm PV}$ can be uniquely extended to a continuous mapping on $\Hdiv$.

\smallskip

\noindent {\it (iii)} Let $\bA \in \Hdiv$. We have
\begin{equation}
\label{eq:devF}
\boxed{\boF_{\rm PV}(\bA) = \boF_2(\bF) + \boR(\bA),}
\end{equation}
where $\bF := (E, B)$, with $E = - \nabla V$ and $B = \curl A$. The functional $\boR$ is continuous on $\Hdiv$ and satisfies
\begin{equation}
\label{est:R}
|\boR(\bA)| \leq K \Bigg( \bigg( \sum_{j = 0}^2 \frac{|c_j|}{m_j} \bigg) \big\| \bF \big\|_{L^2}^4 + \bigg( \sum_{j = 0}^2 \frac{|c_j|}{m_j^2} \bigg) \big\| \bF \big\|_{L^2}^6 \Bigg),
\end{equation}
for a universal constant $K$.

\smallskip

\noindent {\it (iv)} The functional $\boF_2$ is the non-negative and bounded quadratic form on $L^2(\R^3, \R^4)$ given by 
\begin{equation}
\label{eq:formF2}
\boF_2(\bF) = \frac{1}{8 \pi} \int_{\R^3} M(k) \Big( \big| \widehat{B}(k) \big|^2 - \big| \widehat{E}(k) \big|^2 \Big) \, dk, 
\end{equation}
where
\begin{equation}
\label{def:M}
M(k) := - \frac{2}{\pi} \sum_{j = 0}^2 c_j \int_0^1 u (1 - u) \log \big( m_j^2 + u (1 - u) |k|^2 \big) \, du.
\end{equation}
The function $M$ is positive and satisfies the uniform estimate
\begin{equation}
\label{est:M}
0 < M(k) \leq M(0) = \frac{2 \log(\Lambda)}{3 \pi},
\end{equation}
where $\Lambda$ was defined previously in~\eqref{def:Lambda}.
\end{theorem}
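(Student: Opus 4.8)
My plan is to combine the integral representation $|D|=\frac1\pi\int_{\R}\frac{D^2}{D^2+\omega^2}\,d\omega$, valid for any self-adjoint operator $D$, with a resolvent expansion that keeps the full perturbed resolvent in the remainder. Since $\frac{D^2}{D^2+\omega^2}=1+\frac{i\omega}{2}\big((D-i\omega)^{-1}-(D+i\omega)^{-1}\big)$ and the constant cancels in the difference, this expresses $T_\bA$ through the resolvents $G_j(z):=(D_{m_j,0}-z)^{-1}$ and $\widetilde G_j(z):=(D_{m_j,\bA}-z)^{-1}$ at $z=\pm i\omega$:
\[ T_\bA=\frac{i}{4\pi}\sum_{j=0}^2 c_j\int_{\R}\omega\,\Big[\big(G_j(i\omega)-\widetilde G_j(i\omega)\big)-\big(G_j(-i\omega)-\widetilde G_j(-i\omega)\big)\Big]\,d\omega. \]
Writing $\mathbb{V}:=V-\bsalpha\cdot A$, which is \emph{independent of $j$} and satisfies $D_{m_j,\bA}=D_{m_j,0}+\mathbb{V}$, I would then iterate the resolvent identity $\widetilde G_j(z)-G_j(z)=-G_j(z)\mathbb{V}\,\widetilde G_j(z)$: each application brings down exactly one factor of $\mathbb{V}$, so truncating after the degree-$5$ term leaves a remainder of the form $\big(G_j(z)\mathbb{V}\big)^6\widetilde G_j(z)$ of \emph{exact} degree $6$ in $\bA$, the final perturbed resolvent obeying the uniform bound $\|\widetilde G_j(\pm i\omega)\|\le|\omega|^{-1}$. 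Keeping this perturbed resolvent rather than expanding it further is what prevents powers of $\|\bF\|_{L^2}$ higher than those in \eqref{est:R} from appearing.

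For part~(i), when $\bA\in L^1(\R^3,\R^4)\cap\Hdiv$ one has $\mathbb{V}\in L^p(\R^3)$ for every $p\in[1,6]$. I would estimate the building blocks $\mathbb{V}\,|G_j(z)|$, $|G_j(z)|^{1/2}\mathbb{V}\,|G_j(z)|^{1/2}$, etc., in the appropriate Schatten classes via \eqref{eq:KSS} and $\big\|(\,|k|^2+m_j^2+\omega^2)^{-1/2}\big\|_{L^p_k}\sim(m_j^2+\omega^2)^{\frac{3}{2p}-\frac12}$, and check that every term of the expansion above, \emph{after} the partial trace $\tr_{\C^4}$, is trace-class on $L^2(\R^3,\C)$. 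For a single index $j$ several of these quantities, and the resulting $\omega$-integral, are only logarithmically convergent or even divergent; the point is that the two conditions $\sum_jc_j=\sum_jc_jm_j^2=0$ make the \emph{sum} $\sum_jc_j(\cdots)$ trace-class. This is clearest for the quadratic term via $\int_{\R}\omega^2(a^2+\omega^2)^{-1}C(b^2+\omega^2)^{-1}\,d\omega=\pi\int_0^\infty e^{-sa}C\,e^{-sb}\,ds$: the small-$s$ behaviour of $\sum_jc_j\,e^{-s|D_{m_j,0}|}\mathbb{V}^2e^{-s|D_{m_j,0}|}$ is $O(s)$ in trace norm precisely because $\sum_jc_j=\sum_jc_jm_j^2=0$. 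This gives $\tr_{\C^4}T_\bA\in\gS_1$ and $\boF_{\rm PV}(\bA)=\tr(\tr_{\C^4}T_\bA)$. Taking $\tr_{\C^4}$ first is essential: the partial spin trace produces the gauge (Ward-type) cancellations needed here and for the uniform bounds below.

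The heart of the matter is the low-order terms, giving parts~(iii)--(iv). Charge-conjugation invariance makes $|D_{m_j,\bA}|$ and $|D_{m_j,-\bA}|$ isospectral, so $\boF_{\rm PV}(\bA)=\boF_{\rm PV}(-\bA)$ and the odd-degree contributions vanish after the traces; in particular $\boR$ starts at degree $4$. For the degree-$2$ term I would take $\tr_{\C^4}$ (using $\tr_{\C^4}\bsalpha_a=\tr_{\C^4}\bsbeta=0$, $\tr_{\C^4}(\bsalpha_a\bsalpha_b)=4\delta_{ab}$, etc.), pass to Fourier variables, perform the $\omega$-integral, combine the two resolvent denominators with a Feynman parameter $\frac1{AB}=\int_0^1(uA+(1-u)B)^{-2}\,du$, and evaluate the momentum integral; the would-be quadratic and logarithmic divergences are cancelled by $\sum_jc_jm_j^2=0$ and $\sum_jc_j=0$, leaving exactly $M(k)$ as in \eqref{def:M}. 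Gauge invariance --- transversality of the vacuum polarization tensor, i.e.\ the Ward identity --- forces the result to depend on $(V,A)$ only through $\bF=(-\nabla V,\curl A)$ and only in the combination $|\widehat B|^2-|\widehat E|^2$ (using $\div A=0$), which is \eqref{eq:formF2}. The stated properties then follow from the explicit formula: $\int_0^1u(1-u)\,du=\frac16$ gives $M(0)=-\frac1{3\pi}\sum_jc_j\log m_j^2=\frac1{3\pi}\log\Lambda^2=\frac{2\log\Lambda}{3\pi}$ by \eqref{def:Lambda}, and, with $g(s):=-\sum_jc_j\log(m_j^2+s)$, a partial-fraction computation using \eqref{cond:PV} yields $g'(s)=-\,(m_1^2-m_0^2)(m_2^2-m_0^2)\big/\prod_{l=0}^2(m_l^2+s)<0$ while $g(+\infty)=0$, so $g>0$ is strictly decreasing and $M(k)=\frac2\pi\int_0^1u(1-u)\,g\big(u(1-u)|k|^2\big)\,du$ satisfies $0<M(k)\le M(0)$, which is \eqref{est:M}.

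Finally I would bound the remainder $\boR(\bA)=\boF_{\rm PV}(\bA)-\boF_2(\bF)$. Its degree-$4$ part is ultraviolet-finite (``light-by-light scattering'') and is computed in Fourier variables exactly like the degree-$2$ term, the spin trace again making the loop integral converge; by scaling in $\omega$ and $k$ its modulus is $\le\big(\sum_j|c_j|/m_j\big)\|\bF\|_{L^2}^4$. The degree-$6$ remainder $\sum_jc_j\int_{\R}\omega\,\big(G_j(\pm i\omega)\mathbb{V}\big)^6\widetilde G_j(\pm i\omega)\,d\omega$ is estimated in $\gS_1$ after $\tr_{\C^4}$ by distributing its six free resolvents among its six factors of $\mathbb{V}$ through \eqref{eq:KSS} --- each block $\mathbb{V}\,|G_j(\pm i\omega)|$ being bounded in $\gS_6$ by $C\|\bF\|_{L^2}(m_j^2+\omega^2)^{-1/4}$ thanks to the Sobolev inequalities $\|V\|_{L^6}\le C\|\nabla V\|_{L^2}$ and $\|A\|_{L^6}\le C\|\curl A\|_{L^2}$ --- and using $\|\widetilde G_j(\pm i\omega)\|\le|\omega|^{-1}$; the remaining integral $\int_{\R}(m_j^2+\omega^2)^{-3/2}\,d\omega=2/m_j^2$ gives the bound $\big(\sum_j|c_j|/m_j^2\big)\|\bF\|_{L^2}^6$. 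These estimates yield \eqref{est:R}, the continuity of $\boR$, hence of $\boF_{\rm PV}=\boF_2+\boR$, on $\Hdiv$, and --- since $L^1(\R^3,\R^4)\cap\Hdiv$ is dense in $\Hdiv$ --- the unique continuous extension in~(ii) together with \eqref{eq:devF}. I expect the main obstacle to be exactly this bookkeeping: arranging the resolvent expansion so that each piece is trace-class only after $\tr_{\C^4}$, i.e.\ genuinely using the Ward cancellations, while simultaneously tracking how $\sum_jc_j=\sum_jc_jm_j^2=0$ remove the ultraviolet divergences, and carrying the one-loop Fourier computations far enough to identify $M(k)$ and the $|\widehat B|^2-|\widehat E|^2$ structure.
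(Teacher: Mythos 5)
Your strategy is the one the paper actually implements: the integral formula $|D|=\pi^{-1}\int_\R D^2(D^2+\omega^2)^{-1}\,d\omega$, a resolvent expansion iterated to sixth order with the perturbed resolvent kept in the remainder, Kato--Seiler--Simon and Schatten bounds, Furry's theorem for the odd orders, and the Fourier/Feynman-parameter computation of the second-order term. The weak link is exactly where the paper works hardest: the fourth-order term. You assert that after $\tr_{\C^4}$ it is ``computed in Fourier variables exactly like the degree-$2$ term'' and that ``by scaling'' its modulus is $\le(\sum_j|c_j|/m_j)\|\bF\|_{L^2}^4$, invoking transversality of the four-photon amplitude to pass from $\bA$ to $\bF$. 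But the direct Schatten estimates on the fourth order only give control in $\|\bA\|_{L^4}^4$ (cf.~\eqref{eq:4th-final}), which is \emph{not} dominated by $\|\bF\|_{L^2}^4$, and the Ward/transversality identity is not available to you for free: the trace and the $\omega$-integral are only meaningful \emph{after} the Pauli--Villars summation, so you cannot shift loop momenta or integrate by parts term by term to derive it. The paper even flags this obstruction explicitly and, instead of an abstract gauge argument, splits $T_4$ into a ``good'' part controlled by $\|\bA\|_{\Hdiv}^4$ (after commuting all momenta to the outside) plus three ``singular'' operators $\boS_{4,1},\boS_{4,2},\boS_{4,3}$ whose traces involve the non-gauge-invariant quantities $\int V^4$, $\int|A|^2V^2$, $\int|A|^4$; it then computes these traces and verifies the exact cancellation $\tr\boS_{4,1}-\tr\boS_{4,2}+\tr\boS_{4,3}=0$, which reduces to $\int_\R\big((1+\omega^2)^{-3}-12(1+\omega^2)^{-4}+16(1+\omega^2)^{-5}\big)\omega^2\,d\omega=0$. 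Without this (or an equally rigorous substitute), part (iii) is not proved.

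A second, smaller gap: the pointwise bound $\|\boR_6(\bA)\|_{\gS_1}\le C\|\bA\|_{\Hdiv}^6$ does not by itself yield continuity of $\boR_6$ on $\Hdiv$, because the perturbed resolvent $(D_{m_j,\bA}+i\omega)^{-1}$ sits inside and is not multilinear in $\bA$. One has to estimate the difference $R_6'(\omega,\bA)-R_6'(\omega,\bA')$ directly via the resolvent identity, balancing the crude bound $\|i\omega\big((D_{m_j,\bA}+i\omega)^{-1}-(D_{m_j,\bA'}+i\omega)^{-1}\big)\|\le 2$ against an $O(\|\bA-\bA'\|_{\Hdiv}/|\omega|)$ bound, and one only obtains local H\"older (not Lipschitz) continuity. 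This step is needed before the density of $L^1\cap\Hdiv$ can be invoked to extend $\boF_{\rm PV}$, so you should add it.
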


Let us emphasize the presence of the $\C^4$--trace in statement $(i)$ about the trace-class property of $\tr_{\C^4} T_\bA$. We do not believe that the operator is trace-class without taking first the $\C^4$--trace, except when $V \equiv 0$. If we are allowed to take more fictitious particles by increasing the numbers of auxiliary masses, it is possible to obtain a trace-class operator under the additional conditions 
$$\sum_j c_j \, m_j = \sum_j c_j \, m_j^3 = 0.$$
At least four auxiliary masses are then necessary. The terms which are not trace-class when only two fictitious particles are used, actually do \emph{not} contribute to the final value of the energy functional $\boF_{\rm PV}$ (their trace formally vanishes). For this reason, we have found more convenient to first take the $\C^4$--trace (which is enough to discard the problematic terms) and limit ourselves to two fictitious particles, as is usually done in the Physics literature. This suffices to provide a clear meaning to the energy.

The function $M$ describes the linear response of the Dirac sea. It is well-known in the Physics literature~\cite[Eq. (5.39)]{GreRei-08}. We will see below that
\begin{equation}
\label{eq:limitM}
\lim_{\Lambda \to \infty} \Big( \frac{2 \log \Lambda}{3 \pi} - M(k) \Big) = U(k) := \frac{|k|^2}{4 \pi} \int_0^1 \frac{z^2 - z^4/3}{1 + |k|^2 (1 - z^2)/4} \, dz.
\end{equation}
The function in the right-hand side of~\eqref{eq:limitM} was first computed by Serber~\cite{Serber-35} and Ueh\-ling~\cite{Uehling-35}. The same function $U$ already appeared in our previous works dealing with pure electrostatic potentials~\cite{HaiSie-03, HaiLewSer-05b, GraLewSer-11}. This is a consequence of the gauge and relativistic invariances of full QED.

Our proof of Theorem~\ref{thm:defF} in Sections~\ref{sec:atrace} and \ref{sec:estimates} below, consists in expanding the energy $\boF_{\rm PV}(\bA)$ in powers of the four-potential $\bA$. We use the resolvent expansion at a high but fixed order and therefore our main result is valid for all $\bA$'s, not only for small ones. All the odd order terms vanish (by charge-conjugation invariance). The second order term is given by the explicit formula~\eqref{eq:formF2} and it is responsible of the logarithmic ultraviolet divergences. It will be important for our existence proof that this term be strictly convex in $A$ and strictly concave in $V$. We also have to deal with the fourth order term in some detail. The latter was computed in the Physics literature in~\cite{KarNeu-50} and our task will be to estimate it. The higher order terms are then bounded in a rather crude way, following techniques of~\cite{HaiLewSer-05a}. The main difficulty in our work is to verify that the Pauli-Villars conditions~\eqref{cond:PV} induce the appropriate cancellations in the few first order terms, and to estimate them using the $L^2$--norm of the electromagnetic fields and nothing else.

In spite of its widespread use in quantum electrodynamics, the Pauli-Villars scheme~\cite{PauVil-49} has not attracted a lot of attention on the mathematical side so far (see~\cite{FraLic-64, Slavnov-73, Slavnov-74b, Slavnov-75, Finster-11} for some previous results). Theorem~\ref{thm:defF} seems to be among the first in this direction.

\subsubsection{Differentiability}

After having properly defined the functional $\boF_{\rm PV}$, we need some of its differentiability properties, in order to be able to define the charge and current densities of the vacuum by~\eqref{eq:diff_PV_intro}. In this direction, we can prove the following

\begin{theorem}[Differentiability of $\boF_{\rm PV}$]
\label{thm:differentiability}
Assume that $c_j$ and $m_j$ satisfy conditions~\eqref{cond:c_i_thm}.

\smallskip

\noindent $(i)$ Let $\bA \in \Hdiv$ be such that $0$ is not an eigenvalue of the operators $D_{m_j, \bA}$ for $j = 0, 1, 2$. Then the functional $\boF_{\rm PV}$ is $C^\infty$ in a neighborhood of $\bA$. 

\smallskip

\noindent $(ii)$ The first derivative of $\boF_{\rm PV}$ is given by
\begin{equation}
\label{eq:diff-boF}
\langle {\rm d}\boF_{\rm PV}(\bA), (\gv,\ga) \rangle = \int_{\R^3} \big\langle (\rho_\bA,-j_\bA, ), (\gv, \ga) \big\rangle_{\R^4},
\end{equation}
for all $(\gv,\ga) \in \Hdiv$, where the density $\rho_\bA$ and the current $j_\bA$ are defined as
\begin{equation}
\label{eq:def-jA-rhoA}
\rho_\bA(x) := \big[\tr_{\C^4} Q_\bA\big](x, x) \quad {\rm and} \quad j_\bA(x) := \big[\tr_{\C^4} \bsalpha \, Q_\bA\big](x, x),
\end{equation}
and with $Q_\bA$ refering to the kernel of the operator
$$Q_\bA := \sum_{j = 0}^2 c_j\, \1_{(- \infty, 0)} \big( D_{m_j, \bA} \big).$$
The operators $\tr_{\C^4} Q_\bA$ and $\tr_{\C^4} \alpha_k Q_\bA$ for $k = 1, 2, 3$ are locally trace-class on $L^2(\R^3, \C^4)$, and $\rho_\bA$ and $j_\bA$ are well-defined functions in $L^1_{\rm loc}(\R^3) \cap \boC$, where $\boC$ is the Coulomb space
\begin{equation}
\label{eq:def_Coulomb}
\boC := \Big\{ f : \R^3 \to \C : \int_{\R^3} \frac{|\widehat{f}(k)|^2}{|k|^2} \, dk < \infty \Big\} = \dot{H}^{-1}(\R^3).
\end{equation}

\smallskip

\noindent $(iii)$ There exists a universal constant $\eta > 0$ such that the second derivative of $\boF_{\rm PV}$ satisfies the estimate
\begin{equation}
\label{eq:estim_Hessian}
\Bigg\| {\rm d}^2 \boF_{\rm PV}(\bA) - \frac{1}{4 \pi} \begin{pmatrix} - M & 0 \\ 0 & M \end{pmatrix} \Bigg\| \leq 2 K \bigg( \sum_{j = 0}^2 \frac{|c_j|}{m_j} \bigg) \| \bA \|_{\Hdiv}^2,
\end{equation}
for all $\bA$ such that $\| \bA \|_{\Hdiv} \leq \eta \sqrt{m_0}=\eta \sqrt{m}$.
\end{theorem}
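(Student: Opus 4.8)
The plan is to build the differentiability statement on top of Theorem~\ref{thm:defF} by working directly with the integral (resolvent) representation of $\boF_{\rm PV}$ that underlies its proof. First I would fix $\bA_0\in\Hdiv$ with $0\notin\sigma_{\rm p}(D_{m_j,\bA_0})$ for $j=0,1,2$; by Lemma~\ref{lem:spectre}$(i)$--$(ii)$ the eigenvalues in $(-m_j,m_j)$ depend continuously on $\bA$, so there is a neighborhood of $\bA_0$ and a contour $\Gamma$ in $\C$ (or a segment of the imaginary axis together with a small loop around any gap eigenvalues) that encircles the negative spectrum of each $D_{m_j,\bA}$ uniformly for $\bA$ near $\bA_0$. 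On such a contour one writes $|D_{m_j,\bA}| - |D_{m_j,0}|$ and the projector $\1_{(-\infty,0)}(D_{m_j,\bA})$ via Cauchy integrals of the resolvent $(D_{m_j,\bA}-z)^{-1}$, which is jointly analytic in $(z,\bA)$ in operator norm because $\bA\mapsto D_{m_j,\bA}$ is affine and $\bA$ is $D_{m_j,0}$-compact. Differentiating the resolvent (Neumann expansion: $\partial_\bA(D-z)^{-1} = -(D-z)^{-1}(\bsalpha\cdot\gv - \ga)(D-z)^{-1}$ in the direction $(\gv,\ga)$, with the sign bookkeeping from $D_{m,e\bA}$) gives, term by term, the $C^\infty$ dependence; the only point needing care is that each derivative, after taking the $\C^4$-trace, stays trace-class. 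This is exactly the kind of estimate already carried out for $\boF_{\rm PV}$ itself in Sections~\ref{sec:atrace}--\ref{sec:estimates}: the Kato--Seiler--Simon bound~\eqref{eq:KSS}, the Pauli--Villars cancellations~\eqref{cond:PV}, and the gain from the $\C^4$-trace combine to control the relevant Schatten norms, and one invokes those same lemmas here. This proves $(i)$.

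For $(ii)$ I would compute the first derivative by differentiating under the trace and the contour integral. Using $\tfrac12\tr\bigl(|D_{m_j,0}|-|D_{m_j,\bA}|\bigr)$ one finds, after the standard manipulation relating the derivative of $|D|$ to the spectral projector, that
\[
\langle {\rm d}\boF_{\rm PV}(\bA),(\gv,\ga)\rangle = \tr\Bigl(\textstyle\sum_{j=0}^2 c_j\,\1_{(-\infty,0)}(D_{m_j,\bA})\,(\bsalpha\cdot\ga-\gv)\Bigr),
\]
which is the pairing of $(\rho_\bA,-j_\bA)$ with $(\gv,\ga)$ once one identifies $\rho_\bA=[\tr_{\C^4}Q_\bA](x,x)$ and $j_\bA=[\tr_{\C^4}\bsalpha Q_\bA](x,x)$ as in~\eqref{eq:def-jA-rhoA}. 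The content to be checked is that $Q_\bA$, which is a sum of projectors weighted by the $c_j$, has a well-defined $\C^4$-traced diagonal that lies in $L^1_{\rm loc}\cap\boC$; here I would expand $Q_\bA$ in powers of $\bA$ using the resolvent formula on $\Gamma$, show the first-order term produces the Coulomb-space behavior (its Fourier multiplier being $O(1/|k|)$, hence square-integrable against $|k|^{-2}$ near the origin — this is where $\boC=\dot H^{-1}$ enters), and bound the higher-order remainders as locally trace-class operators by the same KSS-plus-cancellation machinery. Boundedness of the pairing against $\|(\gv,\ga)\|_{\Hdiv}$ follows from the Coulomb/energy duality $|\langle\rho,V\rangle|\le\|\rho\|_{\boC}\|\nabla V\|_{L^2}$ and its magnetic analogue.

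Part $(iii)$ is the most quantitative and, I expect, the main obstacle. The strategy is to use the decomposition~\eqref{eq:devF}, $\boF_{\rm PV}(\bA)=\boF_2(\bF)+\boR(\bA)$: the quadratic part $\boF_2$ contributes exactly the constant Hessian $\tfrac{1}{4\pi}\bigl(\begin{smallmatrix}-M&0\\0&M\end{smallmatrix}\bigr)$ (reading off~\eqref{eq:formF2}, noting $\bF=(-\nabla V,\curl A)$ depends linearly on $\bA$), so the claim reduces to the operator-norm bound $\|{\rm d}^2\boR(\bA)\|\le 2K(\sum_j|c_j|/m_j)\|\bA\|_{\Hdiv}^2$ on the ball $\|\bA\|_{\Hdiv}\le\eta\sqrt{m_0}$. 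To get this I would differentiate twice the higher-order ($\ge 3$) part of the resolvent expansion of $\boF_{\rm PV}$ and estimate each resulting term; the quartic term of $\boF_{\rm PV}$ (the one computed in~\cite{KarNeu-50}) contributes the $\|\bF\|_{L^2}^4$-type piece that, after two derivatives, gives the stated $\|\bA\|^2$ bound with the $\sum|c_j|/m_j$ prefactor, while the terms of order $\ge 5$ contribute formally smaller quantities absorbed into the same bound using that $\|\bA\|_{\Hdiv}\le\eta\sqrt{m_0}$ keeps the geometric series convergent. The delicate points are: (a) justifying termwise differentiation of the (norm-convergent) resolvent series and the interchange with the trace, for which the smallness condition~\eqref{eq:cond-petit-A} and Lemma~\ref{lem:spectre}$(iii)$ guarantee a uniform spectral gap and hence a clean contour; and (b) tracking the Pauli--Villars cancellations through two differentiations so that the bound genuinely involves only $\|\bF\|_{L^2}$ and the combinations $\sum_j|c_j|/m_j$, not any higher Sobolev norm of $\bA$ — this is precisely the bookkeeping already established for~\eqref{est:R}, and part $(iii)$ is essentially its second-derivative analogue. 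I would therefore organize the proof so that $(iii)$ quotes the estimates on the individual expansion terms proved for Theorem~\ref{thm:defF}, applying them to ${\rm d}^2\boR$ rather than to $\boR$ itself.
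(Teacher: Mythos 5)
Your proposal follows essentially the same route as the paper: differentiate the $\omega$-integral/resolvent representation of $\boF_{\rm PV}$ (your contour-integral framing is just another parameterization of the paper's formula $|T|=\frac{1}{2\pi}\int_\R(2-\frac{i\omega}{T+i\omega}+\frac{i\omega}{T-i\omega})\,d\omega$), identify ${\rm d}\boF_{\rm PV}$ with the pairing against $Q_\bA$ via the $\sign$-formula, and read the Hessian estimate off the decomposition $\boF_{\rm PV}=\boF_2+\boF_4+\boR_6$ together with the degree-by-degree bounds established for Theorem~\ref{thm:defF}. Two points to tighten up. First, your displayed formula for the derivative has a sign slip --- since $D_{m_j,\bA+t(\gv,\ga)}=D_{m_j,\bA}+t(\gv-\bsalpha\cdot\ga)$ and $-\tfrac12\sum_j c_j\,\sign D_{m_j,\bA}=Q_\bA$ (using $\sum_j c_j=0$), one gets $\tr\big(\tr_{\C^4}Q_\bA(\gv-\bsalpha\cdot\ga)\big)$, not $\tr\big(Q_\bA(\bsalpha\cdot\ga-\gv)\big)$; only the former matches the pairing against $(\rho_\bA,-j_\bA)$. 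Second, ``differentiating under the trace'' conceals the real work: no individual $|D_{m_j,\bA}|$ (nor $\1_{(-\infty,0)}(D_{m_j,\bA})$) is trace-class, so the paper differentiates each $R_n$ and $R_6'$ inside the $\omega$-integral, checks each piece is trace-class, and then uses cyclicity to telescope the sum back into $\sum_j c_j\,\frac{i\omega}{(D_{m_j,\bA}+i\omega)^2}(\bsalpha\cdot\ga-\gv)$; moreover it establishes the local trace-class property of $\tr_{\C^4}Q_\bA$ and $\tr_{\C^4}\bsalpha\,Q_\bA$ via a separate expansion of the projectors (Lemma~\ref{lem:prop-QA}), proves the formula first for $\bA\in L^1\cap\Hdiv$, and extends by density; the $\boC$-membership of $\rho_\bA,j_\bA$ is then read off by duality against $\Hdiv$ rather than from a direct Fourier estimate on the first-order term.
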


Our estimate~\eqref{eq:estim_Hessian} means more precisely that
\begin{align*}
\bigg| \big\langle \bA', {\rm d}^2 \boF_{\rm PV}(\bA)\, \bA' \big\rangle & - \frac{1}{4 \pi} \int_{\R^3} M(k) \Big( \big| \widehat{B'}(k) \big|^2 - \big| \widehat{E'}(k) \big|^2 \Big) \, dk \bigg|\\
& \leq 2 K \bigg( \sum_{j = 0}^2 \frac{|c_j|}{m_j} \bigg) \big\| \bA \big\|_{\Hdiv}^2 \big\| \bA' \big\|_{\Hdiv}^2,
\end{align*}
when $\bA$ is small enough in $\Hdiv$. 

As a consequence of Lemma~\ref{lem:spectre} and Theorem~\ref{thm:differentiability}, we obtain

\begin{cor}[Regularity in a neighborhood of $0$]
\label{cor:regularity}
There exists a positive radius $\eta$ such that the functional $\boF_{\rm PV}$ is $\boC^\infty$ on the ball 
$\boB(\eta) := \big\{ \bA \in \Hdiv : \| \bA \|_{\Hdiv} < \eta \sqrt{m_0} \big\}$. On this ball, the differential ${\rm d} \boF_{\rm PV}$ is given by~\eqref{eq:diff-boF}, whereas ${\rm d}^2 \boF_{\rm PV}$ satisfies estimate~\eqref{eq:estim_Hessian}.
\end{cor}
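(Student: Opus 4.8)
The plan is to read the statement off directly from Lemma~\ref{lem:spectre}~$(iii)$ and Theorem~\ref{thm:differentiability}, the only work being to choose a single radius $\eta$ that serves all three conclusions.

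First I would fix the constants. Let $C$ be the universal constant of Lemma~\ref{lem:spectre}~$(iii)$ and let $\eta_0 > 0$ be the radius appearing in Theorem~\ref{thm:differentiability}~$(iii)$. I would then set $\eta := \min\{\eta_0,\, 1/(2C)\}$, so that in particular $\eta < 1/C$.

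Next I would check the spectral hypothesis of Theorem~\ref{thm:differentiability}~$(i)$ on the whole ball $\boB(\eta)$. Fix $\bA \in \boB(\eta)$. Since $0 < m_0 \le m_1 \le m_2$ by~\eqref{cond:c_i_thm}, we have $\| \bA \|_{\Hdiv} < \eta \sqrt{m_0} \le \eta \sqrt{m_j}$ for $j = 0, 1, 2$, so condition~\eqref{eq:cond-petit-A} holds with $m$ replaced by $m_j$ and with the admissible value $\eta < 1/C$. Lemma~\ref{lem:spectre}~$(iii)$ then gives
$$\sigma\big( D_{m_j, \bA} \big) \cap \big( - m_j(1 - C \eta),\, m_j(1 - C \eta) \big) = \emptyset,$$
and since $m_j(1 - C\eta) > 0$ this shows that $0$ does not belong to the spectrum of $D_{m_j, \bA}$, hence is certainly not an eigenvalue. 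Theorem~\ref{thm:differentiability}~$(i)$ therefore applies at $\bA$: $\boF_{\rm PV}$ is $C^\infty$ in a neighborhood of $\bA$. As $\bA \in \boB(\eta)$ was arbitrary, $\boF_{\rm PV} \in C^\infty(\boB(\eta))$. On $\boB(\eta)$, formula~\eqref{eq:diff-boF} for ${\rm d}\boF_{\rm PV}$ then holds by Theorem~\ref{thm:differentiability}~$(ii)$, while the Hessian bound~\eqref{eq:estim_Hessian} holds because $\| \bA \|_{\Hdiv} < \eta \sqrt{m_0} \le \eta_0 \sqrt{m_0}$ for every $\bA \in \boB(\eta)$.

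I do not expect any genuine obstacle here: the substance is entirely contained in Lemma~\ref{lem:spectre} and Theorem~\ref{thm:differentiability}. The only minor points to get right are that the $m_j$ with $j \ge 1$ are \emph{larger} than $m_0$ (so that the single normalization $\| \bA \|_{\Hdiv} < \eta \sqrt{m_0}$ forces smallness relative to every $m_j$), and that one only needs $0$ to fail to be an eigenvalue of the $D_{m_j,\bA}$, which is strictly weaker than the uniform gap actually provided by Lemma~\ref{lem:spectre}~$(iii)$.
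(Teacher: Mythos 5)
Your proof is correct and follows essentially the same route as the paper's: invoke Lemma~\ref{lem:spectre}~$(iii)$ with $\eta<1/C$ to ensure $0\notin\sigma(D_{m_j,\bA})$ for each $j$ on the ball, then apply Theorem~\ref{thm:differentiability}. You are slightly more explicit than the paper in taking $\eta$ to also be smaller than the constant from Theorem~\ref{thm:differentiability}~$(iii)$ and in noting that $m_0\le m_j$ makes the single normalization $\|\bA\|_{\Hdiv}<\eta\sqrt{m_0}$ suffice for all three masses, but these are just careful spellings-out of the paper's terse argument.
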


\begin{proof}
We fix $\eta$ such that
$$C \eta < 1,$$
where $C$ is the constant in statement $(iii)$ of Lemma~\ref{lem:spectre}. For this choice, given any four-potential $\bA$ in the ball $\boB(\eta)$, $0$ is not an eigenvalue of each of the operators $D_{m_j, \bA}$. Corollary~\ref{cor:regularity} then follows from Theorem~\ref{thm:differentiability}.
\end{proof}

\subsection{Solutions to the Maxwell equations in Dirac's vacuum}

In this section, we explain how to use Theorems~\ref{thm:defF} and~\ref{thm:differentiability} in order to get the stability of the free Dirac vacuum, and to construct solutions to the nonlinear Maxwell equations. The proofs of Theorems~\ref{thm:defF} and~\ref{thm:differentiability} will be detailed later.

Let $e > 0$ be the (bare) charge of the electron. Assume that $c_0 = 1$, and that $c_j$ and $m_j$ satisfy~\eqref{cond:PV}. We work under the condition that $e \leq \bar{e}$ for some fixed constant $\bar{e}$ ($e$ is not allowed to be too large, but it can be arbitrarily small). All our constants will depend on $\bar{e}$, but not on $e$. Note that $e$ is dimensionless here because we have already set the speed of light equal to $1$. 

Using Theorem~\ref{thm:defF}, we can properly define the effective electromagnetic Lagrangian action by
\begin{multline}
\E^{\rho_{\rm ext},j_{\rm ext}}(\bA) := -\boF_{\rm PV} \big( e \bA \big) + \frac{1}{8 \pi} \int_{\R^3} |\nabla V|^2 - |\curl A|^2\\
 -e\int_{\R^3}\rho_{\rm ext} V+e\int_{\R^3}j_{\rm ext} \cdot A, 
\label{eq:effective_Lagrangian}
\end{multline}
for all $\bA\in\Hdiv$, the Coulomb-gauge homogeneous Sobolev space. Our purpose will be to construct a critical point of this Lagrangian, which will in the end solve the nonlinear equations~\eqref{eq:Maxwell-vp}.

\begin{remark}
In this paper we have considered a second-quantized Dirac field which only interacts with a classical electromagnetic field. There is another way to arrive at exactly the same Lagrangian action~\eqref{eq:effective_Lagrangian}, which is closer to our previous works~\cite{HaiLewSer-05a,HaiLewSer-05b,HaiLewSol-07,HaiLewSerSol-07,GraLewSer-09,GraLewSer-11}. We start with Coulomb-gauge QED with quantized transverse photons. Then, we restrict our attention to special states in Fock space of the form $\Omega = \Omega_{\rm HF} \otimes \Omega_{\rm Coh}$, where $\Omega_{\rm HF}$ is an electronic (generalized) Hartree-Fock state characterized by its one-particle density matrix $0 \leq \gamma \leq 1$, and $\Omega_{\rm Coh}$ is a coherent state for the photons, characterized by its magnetic potential $A(x)$ (a given classical magnetic potential in $\R^3$). We thereby get a Hartree-Fock model coupled to a classical magnetic field. Because of the instantaneous part of the Coulomb interaction, the model contains an exchange term. When this term is neglected, one obtains the exact same theory as in this paper. In relativistic density functional theory~\cite{Engel-02}, the exchange term is approximated by a local function of $\rho_{\gamma - 1/2}$ and $j_{\gamma - 1/2}$ only.  
\end{remark}

\subsubsection{Stability of the free Dirac vacuum}
\label{sub:free}

In the vaccum case $\rho_{\rm ext}=j_{\rm ext}=0$, we have the obvious solution $\bA=0$. The following theorem says that $0$ is stable in the sense that it is a saddle point of the effective Lagrangian action, with the same Morse index as for the classical Maxwell Lagrangian action. This can be interpreted as the stability of the free vacuum under its own electromagnetic excitations.

\begin{theorem}[Stability of the free Dirac vacuum]
\label{thm:free}
Assume that $c_j$ and $m_j$ satisfy~\eqref{cond:c_i_thm}.
The four-potential $\bA \equiv 0$ is a saddle point of $\E^0$. It is the unique solution to the min-max problem
\begin{equation}
\label{eq:min_max_free} 
\E^0(0, 0) = \min_{\| \nabla V \|_{L^2} < \frac{r \sqrt{m_0}}{e}} \, \E^0(V, 0) = \max_{\| \curl A \|_{L^2} < \frac{r \sqrt{m_0}}{e}} \, \E^0(0, A),
\end{equation}
or, equivalently,
\begin{equation}
\label{eq:minmax_is_maxmin_free}
\begin{split}
\E^0(0, 0) & = \min_{\| \nabla V \|_{L^2} < \frac{r \sqrt{m_0}}{e}} \quad \sup_{\| \curl A \|_{L^2} < \frac{r \sqrt{m_0}}{e}} \, \E^0(V, A)\\
& = \max_{\| \curl A \|_{L^2} < \frac{r \sqrt{m_0}}{e}} \quad \inf_{\| \nabla V \|_{L^2} < \frac{r \sqrt{m_0}}{e}} \, \E^0(V, A), 
\end{split}
\end{equation}
for some positive radius $r$ which only depends on $\sum_{j = 0}^2 |c_j| (m_0/m_j)$ and $\bar{e}$ (the largest possible value of $e$).
\end{theorem}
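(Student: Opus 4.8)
The plan is to exploit the structure $\boF_{\rm PV}(e\bA) = \boF_2(e\bF) + \boR(e\bA)$ from Theorem~\ref{thm:defF}, together with the separate convexity/concavity of $\boF_2$ in $(E,B)$ and the smallness of the remainder $\boR$, to show that on a suitably small ball the effective Lagrangian $\E^0$ inherits the saddle geometry of the quadratic Maxwell action. First I would write, for $\bA=(V,A)$ with $\|\nabla V\|_{L^2},\|\curl A\|_{L^2} < r\sqrt{m_0}/e$,
\begin{equation*}
\E^0(V,A) = -\boF_2(e\bF) + \frac{1}{8\pi}\int_{\R^3}\big(|\nabla V|^2 - |\curl A|^2\big) - \boR(e\bA).
\end{equation*}
Since $\boF_2(e\bF) = \frac{e^2}{8\pi}\int M(k)\big(|\widehat B|^2 - |\widehat E|^2\big)$ with $0<M(k)\le \frac{2\log\Lambda}{3\pi}$, the quadratic part of $\E^0$ equals $\frac{1}{8\pi}\int(1+e^2 M(k)/(4\pi)\cdot\text{\tiny sign})\cdots$; concretely the $V$-part is $\frac{1}{8\pi}\int(1 + e^2 M(k))|\widehat E|^2\,dk$, which is \emph{strictly positive}, and the $A$-part is $-\frac{1}{8\pi}\int(1 + e^2 M(k))|\widehat B|^2\,dk$, which is \emph{strictly negative}. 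So the second-order part alone is strictly convex in $V$ and strictly concave in $A$, with coercivity constants bounded below by $1$ (independently of $\Lambda$ and $e$), using that $M\ge 0$.

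Next I would control the remainder. By~\eqref{est:R}, $|\boR(e\bA)| \le K e^4(\sum_j |c_j|/m_j)\|\bF\|_{L^2}^4 + Ke^6(\sum_j|c_j|/m_j^2)\|\bF\|_{L^2}^6$; on the ball $\|\bF\|_{L^2} < r\sqrt{m_0}/e$ this is $\le K r^3(\sum_j|c_j|m_0/m_j)\|\bF\|_{L^2}^2\cdot(\text{bounded in }r)$, i.e. a quantity that is $o(\|\bF\|_{L^2}^2)$ uniformly as $r\to 0$, with the smallness governed precisely by $r^2\sum_j|c_j|(m_0/m_j)$ and $\bar e$. Choosing $r$ small enough (depending only on $\sum_j|c_j|(m_0/m_j)$ and $\bar e$) makes $|\boR(e\bA)| \le \frac{1}{16\pi}\|\bF\|_{L^2}^2$, which is absorbed by the strict convexity/concavity margin from the quadratic term. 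Therefore $V\mapsto \E^0(V,A)$ is strictly convex and coercive and $A\mapsto \E^0(V,A)$ is strictly concave and anti-coercive on the ball, with a unique common critical point. Since $\boF_{\rm PV}$ is $C^\infty$ near $0$ (Corollary~\ref{cor:regularity}) and $\mathrm{d}\boF_{\rm PV}(0)=0$ (the first-order term vanishes by charge-conjugation invariance, or directly from $\rho_0 = j_0 = 0$ since $Q_0 = \sum_j c_j \1_{(-\infty,0)}(D_{m_j,0})$ has vanishing $\C^4$-trace density by~\eqref{cond:PV}), that critical point is $\bA\equiv 0$, which establishes the two equalities in~\eqref{eq:min_max_free}.

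For the min-max $=$ max-min identity~\eqref{eq:minmax_is_maxmin_free}, I would invoke a standard convex–concave min-max argument (Sion's theorem, or the elementary argument for strictly convex–concave functionals on convex sets): for each fixed $V$ the strictly concave map $A\mapsto \E^0(V,A)$ attains a unique max $\overline A(V)$, giving a function $g(V):=\sup_A\E^0(V,A)$ that is itself strictly convex (as a partial sup of a convex–concave functional one has to check this — it follows because $g$ is a supremum over $A$ of functions convex in $V$), and symmetrically for $h(A):=\inf_V\E^0(V,A)$; the saddle point at $0$ then forces $\min_V g(V) = \E^0(0,0) = \max_A h(A)$. The main obstacle, and the point requiring genuine care rather than routine estimation, is ensuring that the remainder bound~\eqref{est:R} can be absorbed \emph{on the correct, $r$-dependent ball while keeping strict convexity/concavity with a uniform margin}: one must verify that the ``$1$'' in $1+e^2M(k)$ dominates and that the fourth- and sixth-order remainder contributions, though not sign-definite, are genuinely lower order in $\|\bF\|_{L^2}$ on that ball, so that the saddle structure survives and the radius $r$ depends only on $\sum_{j=0}^2|c_j|(m_0/m_j)$ and $\bar e$ as claimed.
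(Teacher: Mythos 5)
Your proposal follows essentially the same route as the paper: decompose $\boF_{\rm PV}(e\bA)=\boF_2(e\bF)+\boR(e\bA)$ via Theorem~\ref{thm:defF}, use the positivity of $M$ so that the quadratic part $\frac{1}{8\pi}\int(1+e^2M)(|\widehat E|^2-|\widehat B|^2)$ has the right sign structure, control $\boR$ by~\eqref{est:R} on a ball of radius $r\sqrt{m_0}/e$ with $r$ depending only on $\sum_j|c_j|(m_0/m_j)$ and $\bar e$, and conclude. That part is fine and matches~\eqref{eq:reduce_F_2}.

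There is, however, one genuine gap. You pass from the pointwise bound $|\boR(e\bA)|\le \frac{1}{16\pi}\|\bF\|_{L^2}^2$ directly to ``$V\mapsto\E^0(V,A)$ is strictly convex and $A\mapsto\E^0(V,A)$ is strictly concave.'' A pointwise bound on $\boR$ by a small multiple of $\|\bF\|^2$ does not control the Hessian of $\boR$ and therefore does not imply convexity or concavity of $\E^0$: a small, rapidly oscillating perturbation can destroy convexity while remaining pointwise tiny. The pointwise remainder bound is enough to prove the inequality~\eqref{eq:min_max_free} (which is exactly how the paper proceeds, since $\E^0(V,0)\ge\frac{e^2}{8\pi}\int M|\widehat E|^2\ge 0$ with equality iff $V=0$, and symmetrically for $A$), but it does \emph{not} by itself give the strict saddle geometry needed both to pass to~\eqref{eq:minmax_is_maxmin_free} and to conclude uniqueness of the saddle point. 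For that step the paper invokes the second-derivative estimate~\eqref{eq:estim_Hessian} from Theorem~\ref{thm:differentiability}(iii), which shows that ${\rm d}^2\E^0(\bA)$ is uniformly close to the positive-definite-in-$V$/negative-definite-in-$A$ operator $\frac{1}{4\pi}\,\mathrm{diag}(1+e^2M,\,-1-e^2M)$ on the ball, and then applies the classical convex-analysis facts (Ekeland--Temam, Chap.~VI, Props.~1.2 and 1.5) rather than Sion's theorem. So to close your argument you should replace the ``absorbed by the strict convexity margin'' step by an explicit appeal to the Hessian bound~\eqref{eq:estim_Hessian}, under an additional smallness condition on $r$ of the form~\eqref{eq:add_cond_r}, and then cite the min-max equivalence and uniqueness statements for strictly convex--concave saddle functions.
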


The result is a direct consequence of the properties of the functional $\boF_2$ defined in~\eqref{eq:formF2}, as well as of the regularity properties of $\boF_{\rm PV}$.

As we have seen we can let the cut-off $\Lambda$ go to $\infty$ (which implies that $m_0/m_j \to 0$ for $j = 1, 2$), while keeping $c_1$ and $c_2$ bounded. We therefore see that the radius $r$ of the ball of stability of the free vacuum does not go to zero in the limit $\Lambda\to\ii$ if the bare parameters $e$ and $m_0$ are kept bounded.

For $A\equiv0$, the electrostatic stability of the free Dirac vacuum was pointed out first by Chaix, Iracane and Lions~\cite{ChaIra-89, ChaIraLio-89} and proved later in full generality in~\cite{BacBarHelSie-99, HaiLewSer-05a, HaiLewSer-05b}. It is possible to include the exchange term and even establish the \emph{global} stability of the free Dirac vacuum~\cite{HaiLewSer-05a, HaiLewSer-05b, HaiLewSerSol-07}. Dealing with magnetic fields is more complicated and, so far, we are only able to prove \emph{local} stability, using the Pauli-Villars regularization. Because of lack of gauge symmetry, it is not clear whether the free Dirac sea is still stable under magnetic excitations when a sharp ultraviolet cut-off is used.

\begin{proof}
We choose $r > 0$ such that 
\begin{equation}
\label{eq:cond_r_free}
r \leq \eta/\sqrt{2} \quad {\rm and} \quad 2 K \bigg( \sum_{j = 0}^2 \frac{|c_j|}{m_j} \bigg) m_0 \big( r^2 + 2 m_0 r^4 \big) \leq \frac{1}{8 \pi\bar{e}^2},
\end{equation}
where $K$ is the constant appearing in~\eqref{est:R}, and where $\eta$ is the constant in statement $(iii)$ of Theorem~\ref{thm:differentiability}. We recall that $e \leq \bar{e}$. Consider now any $\bA$ such that $\| \nabla V \|_{L^2} \leq r \sqrt{m_0}/e$ and $\| \curl A \|_{L^2} \leq r \sqrt{m_0}/e$ (which implies $\| \bA \|_{\Hdiv} = \| \bF \|_{L^2} \leq \sqrt{2 m_0} r/e$). By~\eqref{est:R}, we have
\begin{equation}
\label{eq:reduce_F_2}
\begin{split}
\big| \boF_{\rm PV}(e \bA) - \boF_2(e \bF) \big| & \leq K \Bigg( \bigg( \sum_{j = 0}^2 \frac{|c_j|}{m_j} \bigg) e^4 \| \bF \|_{L^2}^4 + \bigg( \sum_{j = 0}^2 \frac{|c_j|}{m_j^2} \bigg) e^6 \| \bF \|_{L^2}^6 \Bigg)\\
& \leq 2 K \bigg( \sum_{j = 0}^2 \frac{|c_j|}{m_j} \bigg) m_0 \big( r^2 + 2 m_0r^4 \big) e^2 \| \bF \|_{L^2}^2\\
& \leq \frac{1}{8 \pi} \big\| \bF \big\|_{L^2}^2.
\end{split}
\end{equation}
Using Formula~\eqref{eq:formF2} for $\boF_2$, we get 
$$\E^0( V, 0) \geq \frac{e^2}{8 \pi} \int_{\R^3} M(k) |\widehat{E}(k)|^2 \, dk \geq 0,$$
with equality if and only $V \equiv 0$, since $M>0$. Similarly,
$$\E^0(0, A) \leq -\frac{e^2}{8 \pi} \int_{\R^3} M(k) |\widehat{B}(k)|^2 \, dk \leq 0,$$
with equality if and only $A\equiv0$. Thus we have shown~\eqref{eq:min_max_free}. The equivalence between~\eqref{eq:min_max_free} and~\eqref{eq:minmax_is_maxmin_free} is a classical fact of convex analysis, see~\cite[Prop. 1.2, Chap. VI]{EkelandTemam}. 

Finally, since we can deduce from~\eqref{eq:estim_Hessian} that
\begin{align*}
\bigg\| {\rm d}^2 \E^0(\bA) - \frac{1}{4 \pi} \begin{pmatrix} 1 + e^2 M & 0 \\ 0 & -1 - e^2 M \end{pmatrix} \bigg\| & \leq 2 K e^2 \bigg( \sum_{j = 0}^2 \frac{|c_j|}{m_j} \bigg) \| \bF \|_{L^2}^2\\
& \leq 4 K m_0 \bigg( \sum_{j = 0}^2 \frac{|c_j|}{m_j} \bigg) r^2,
\end{align*}
for $e \| \bA \|_{\Hdiv} \leq r \sqrt{2 m_0} \leq \eta \sqrt{m_0}$, we deduce that $\E^0$ is strictly convex with respect to $V$ and strictly concave with respect to $A$, provided that $r$ satisfies the additional condition 
\begin{equation}
4 K m_0 \bigg( \sum_{j = 0}^2 \frac{|c_j|}{m_j} \bigg) r^2 < \frac{1}{4 \pi}.
\label{eq:add_cond_r} 
\end{equation}
This implies uniqueness of the saddle point by~\cite[Prop 1.5, Chap. VI]{EkelandTemam}.
\end{proof}

\subsubsection{Solution with external sources}
\label{sub:external}

We now come back to our initial problem and consider an external density $\rho_{\rm ext}$ and an external current $j_{\rm ext}$. It will be convenient to express our result in terms of the size of
$$V_{\rm ext}:=e\,\rho_{\rm ext}\ast\frac{1}{|x|}\quad\text{and}\quad A_{\rm ext}:=e\,j_{\rm ext}\ast\frac{1}{|x|},$$
which are the corresponding potentials when the vacuum does not react. We look for a saddle point of the Lagrangian action $\mathscr{L}^{\rho_{\rm ext},j_{\rm ext}}_{\rm PV}$ defined above in~\eqref{eq:effective_Lagrangian}.

\begin{theorem}[Nonlinear Maxwell equations in small external sources]
\label{thm:polarized}
Assume that $c_j$ and $m_j$ satisfy~\eqref{cond:c_i_thm}. Let $r$ be the same constant as in Theorem~\ref{thm:free}.

\smallskip

\noindent {\it (i)} For any 
\begin{equation}
\label{eq:estim_A_ext}
e \| \bA_{\rm ext} \|_{\Hdiv} < \frac{r \sqrt{m_0}}{9},
\end{equation}
there exists a unique solution $\bA_* = (V_*, A_*) \in \Hdiv$ to the min-max problem
\begin{equation}
\begin{split}
\mathscr{L}^{\rho_{\rm ext},j_{\rm ext}}_{\rm PV}(\bA_*) &= \min_{\| \nabla V \|_{L^2} < \frac{r \sqrt{m_0}}{3 e}} \, \mathscr{L}^{\rho_{\rm ext},j_{\rm ext}}_{\rm PV}(V, A_*) \\
&= \max_{\| \curl A \|_{L^2} < \frac{r \sqrt{m_0}}{3 e}} \, \mathscr{L}^{\rho_{\rm ext},j_{\rm ext}}_{\rm PV}(V_*, A), 
\end{split}
\end{equation}
or, equivalently, to 
\begin{equation}
\label{eq:maxmin_minmax}
\begin{split}
\mathscr{L}^{\rho_{\rm ext},j_{\rm ext}}_{\rm PV}(\bA_*) & = \min_{\| \nabla V \|_{L^2} < \frac{r \sqrt{m_0}}{3e}} \quad \sup_{\| \curl A \|_{L^2} < \frac{r \sqrt{m_0}}{3 e}} \, \mathscr{L}^{\rho_{\rm ext},j_{\rm ext}}_{\rm PV}(\bA)\\
& = \max_{\| \curl A \|_{L^2} < \frac{r \sqrt{m_0}}{3 e}} \quad \inf_{\| \nabla V \|_{L^2} < \frac{r \sqrt{m_0}}{3 e}} \, \mathscr{L}^{\rho_{\rm ext},j_{\rm ext}}_{\rm PV}(\bA).
\end{split}
\end{equation}

\smallskip

\noindent {\it (ii)} The four-potential $\bA_*$ is a solution to the nonlinear equations
\begin{equation}
\label{eq:SCF}
\Bigg\{ \begin{array}{ll} - \Delta V_* = 4 \pi e\,\big(\rho_{e\bA_*}+\rho_{\rm ext}\big),\\
- \Delta A_* = 4 \pi e\, \big(j_{e\bA_*}+j_{\rm ext}\big), \end{array}
\end{equation}
where $\rho_{e\bA_*}$ and $j_{e\bA_*}$ refer to the charge and current densities of the Pauli-Villars-regulated vacuum 
\begin{equation}
\label{eq:SCF_op}
Q_*=\sum_{j = 0}^2 c_j \, \1_{(- \infty, 0)} \big( D_{m_j, e \bA_*} \big),
\end{equation}
defined in Theorem~\ref{thm:differentiability}.
\end{theorem}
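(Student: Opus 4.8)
The statement is a constrained convex--concave min--max problem for the Lagrangian action $\E^{\rho_{\rm ext},j_{\rm ext}}$ of~\eqref{eq:effective_Lagrangian}, and the plan is to combine four ingredients: (1) Corollary~\ref{cor:regularity} and the Hessian estimate~\eqref{eq:estim_Hessian}, showing that on the balls of radius $R:=r\sqrt{m_0}/(3e)$ the functional $\E^{\rho_{\rm ext},j_{\rm ext}}$ is $C^\infty$, uniformly convex in $V$ and uniformly concave in $A$; (2) a minimax theorem on a bounded convex set, producing a saddle point $\bA_*$ of $\E^{\rho_{\rm ext},j_{\rm ext}}$ over the \emph{closed} balls of radius $R$, unique by the strict convexity--concavity and with equal min-sup and max-inf values; (3) an a priori estimate forcing $\bA_*$ to lie strictly inside the \emph{open} balls; (4) the resulting first-order conditions, which via~\eqref{eq:diff-boF} become the nonlinear Maxwell system~\eqref{eq:SCF}.

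For~(1): if $\|\nabla V\|_{L^2}\le R$ and $\|\curl A\|_{L^2}\le R$, then $\|e\bA\|_{\Hdiv}=e\|\bF\|_{L^2}\le\sqrt 2\,eR=\sqrt2\,r\sqrt{m_0}/3<\eta\sqrt{m_0}$ by~\eqref{eq:cond_r_free}, so $e\bA$ belongs to the ball $\boB(\eta)$ of Corollary~\ref{cor:regularity} and $\E^{\rho_{\rm ext},j_{\rm ext}}$ is $C^\infty$ on a neighbourhood of the product of these closed balls. Its Hessian is $-e^2\,{\rm d}^2\boF_{\rm PV}(e\bA)$ plus the source-independent Maxwell Hessian (which is $+\tfrac1{4\pi}$ on the $V$-block and $-\tfrac1{4\pi}$ on the $A$-block); since by~\eqref{eq:estim_Hessian} the error term, multiplied by $e^2$, is $\le\tfrac1{36\pi}$ on these balls (this is where the second inequality of~\eqref{eq:cond_r_free} enters) and since $M>0$ by~\eqref{est:M}, the functional $\E^{\rho_{\rm ext},j_{\rm ext}}$ is uniformly convex in $V$ and uniformly concave in $A$ there. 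For~(2): the closed balls $\{V\in\dot{H}^1(\R^3):\|\nabla V\|_{L^2}\le R\}$ and $\{A\in\Hdiv:\|\curl A\|_{L^2}\le R\}$ are non-empty, closed, bounded and convex in Hilbert spaces, hence weakly compact; $\E^{\rho_{\rm ext},j_{\rm ext}}$ is continuous on their product and, being convex in $V$ and concave in $A$, is weakly lower semicontinuous in $V$ and weakly upper semicontinuous in $A$. The minimax theorem for convex--concave functionals on bounded convex sets~\cite[Chap.~VI]{EkelandTemam} then provides a saddle point $\bA_*=(V_*,A_*)$ over that product, with $\min\sup=\sup\min$, and the strict convexity--concavity makes it unique.

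Step~(3) is the heart of the argument. Since $V_*$ minimises $\E^{\rho_{\rm ext},j_{\rm ext}}(\cdot,A_*)$ over $\{\|\nabla V\|_{L^2}\le R\}$, testing the first-order variational inequality against $\lambda V_*$ and letting $\lambda\to1^-$ gives $\langle{\rm d}_V\E^{\rho_{\rm ext},j_{\rm ext}}(\bA_*),V_*\rangle\le0$, and symmetrically $\langle{\rm d}_A\E^{\rho_{\rm ext},j_{\rm ext}}(\bA_*),A_*\rangle\ge0$; adding these, and using the explicit form of $\E^{\rho_{\rm ext},j_{\rm ext}}$, we get
\[
\frac1{4\pi}\|\bF_*\|_{L^2}^2\le e\,\langle{\rm d}\boF_{\rm PV}(e\bA_*),(V_*,-A_*)\rangle+e\int_{\R^3}\rho_{\rm ext}V_*+e\int_{\R^3}j_{\rm ext}\cdot A_*,
\]
where $\bF_*=(E_*,B_*):=(-\nabla V_*,\curl A_*)$. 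Expanding the first term about $0$ using ${\rm d}\boF_{\rm PV}(0)=0$ (the linear part of $\boF_{\rm PV}$ vanishes, see~\eqref{eq:devF}) together with~\eqref{eq:estim_Hessian}, its leading (${\rm d}^2\boF_2$) contribution equals $-\tfrac{e^2}{4\pi}\int_{\R^3}M(k)\big(|\widehat{E_*}(k)|^2+|\widehat{B_*}(k)|^2\big)\,dk\le0$ --- this favourable sign, coming from the form~\eqref{eq:formF2} of $\boF_2$ and $M>0$, is essential, as it reinforces rather than spoils the Maxwell coercivity --- while the quartic remainder is $\le2K\big(\sum_j\tfrac{|c_j|}{m_j}\big)e^4\|\bF_*\|_{L^2}^4\le\tfrac1{36\pi}\|\bF_*\|_{L^2}^2$ on these balls thanks to~\eqref{eq:cond_r_free}. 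For the source terms, $e\rho_{\rm ext}=-\Delta V_{\rm ext}/(4\pi)$ and $e\,j_{\rm ext}=-\Delta A_{\rm ext}/(4\pi)$, so an integration by parts (using $\div A_{\rm ext}=\div A_*=0$) and Cauchy--Schwarz give $e\int\rho_{\rm ext}V_*+e\int j_{\rm ext}\cdot A_*\le\tfrac1{4\pi}\|\bA_{\rm ext}\|_{\Hdiv}\|\bF_*\|_{L^2}$. Collecting, $\tfrac2{9\pi}\|\bF_*\|_{L^2}^2\le\tfrac1{4\pi}\|\bA_{\rm ext}\|_{\Hdiv}\|\bF_*\|_{L^2}$, that is $\|\bF_*\|_{L^2}\le\tfrac98\|\bA_{\rm ext}\|_{\Hdiv}$; hence, by~\eqref{eq:estim_A_ext}, $e\|\nabla V_*\|_{L^2}$ and $e\|\curl A_*\|_{L^2}$ are both $\le e\|\bF_*\|_{L^2}<\tfrac98\cdot\tfrac{r\sqrt{m_0}}9=\tfrac{r\sqrt{m_0}}8<\tfrac{r\sqrt{m_0}}3$, so $\bA_*$ lies strictly inside the open balls of radius $R$.

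For~(4): being interior, $\bA_*$ is an unconstrained critical point, ${\rm d}\E^{\rho_{\rm ext},j_{\rm ext}}(\bA_*)=0$; inserting~\eqref{eq:diff-boF} and separating the $V$- and (divergence-free) $A$-directions produces precisely the system~\eqref{eq:SCF}, with $\rho_{e\bA_*}$ and $j_{e\bA_*}$ the densities of $Q_*$ from Theorem~\ref{thm:differentiability} --- $j_{e\bA_*}$ being divergence-free by the gauge invariance of $\boF_{\rm PV}$, so that the Coulomb-gauge equation is consistent --- which proves~(ii). For~(i): the $\bA_*$ just constructed, being interior, solves the open-ball min--max; conversely any solution of the open-ball min--max is, by convexity--concavity, a saddle point over the closed balls and hence equals $\bA_*$, giving uniqueness; the equivalence of the two-sided saddle relation with the formulation~\eqref{eq:maxmin_minmax} is the quoted fact of convex analysis~\cite[Prop.~1.2, Chap.~VI]{EkelandTemam}. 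The single delicate point is the a priori estimate~(3): it rests both on the exact sign of the second-order (Uehling) term $\boF_2$ when the differential is tested against $(V_*,-A_*)$ --- without it the Maxwell term would not control the vacuum contribution --- and on the radius $r$ of Theorem~\ref{thm:free} being small enough, in terms of $\sum_{j=0}^2|c_j|(m_0/m_j)$ and $\bar e$, that the quartic remainder of~\eqref{eq:estim_Hessian} is absorbed, which turns the factor $9$ in~\eqref{eq:estim_A_ext} into the factor $3$ of the min--max domains.
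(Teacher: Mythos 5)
Your proof is correct and follows the paper's overall architecture (convex--concave structure on balls via the Hessian bound~\eqref{eq:estim_Hessian}, Ekeland--Temam minimax on a bounded convex set, then interiority, then first-order conditions). The one genuinely different step is your interiority argument in~(3). The paper proceeds by a direct value comparison on the boundary: for $A$ on the boundary of the $A$-ball it shows $\E^{\rho_{\rm ext},j_{\rm ext}}(V,A)-\E^{\rho_{\rm ext},j_{\rm ext}}(V,0)\leq -e^2\boF_2(0,B)-\tfrac{1-2/3-3(3\eps)^2}{8\pi}\int B^2<0$ (using~\eqref{est:R} to absorb the remainder and the positivity of $\boF_2(0,\cdot)$), so the maximiser over $A$ cannot be at the boundary; the choice $\eps=1/9$ is made precisely so that this combination of constants is negative. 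You instead test the first-order variational inequalities at the saddle point against $(V_*,-A_*)$, observe that the second-order (Uehling) contribution $-\tfrac{e^2}{4\pi}\int M(|\widehat{E_*}|^2+|\widehat{B_*}|^2)$ has the \emph{helpful} sign thanks to $M>0$, absorb the quartic remainder via~\eqref{eq:estim_Hessian} and~\eqref{eq:cond_r_free}, and deduce the quantitative a priori bound $\|\bF_*\|_{L^2}\leq\tfrac98\|\bA_{\rm ext}\|_{\Hdiv}$, from which interiority follows under~\eqref{eq:estim_A_ext}. Both arguments are correct and both exploit the same positivity of $M$, but in slightly different guises ($\boF_2(0,B)>0$ versus the favourable sign of $\langle{\rm d}\boF_2,(V_*,-A_*)\rangle$); your route buys an explicit a priori estimate on $\bF_*$ in terms of $\bA_{\rm ext}$, which is a somewhat sharper conclusion than strict interiority alone, while the paper's route avoids invoking the first-order conditions before interiority is established (it works purely with function values). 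Everything else --- the $C^\infty$ regularity on the relevant balls, the weak compactness and semicontinuity needed for the minimax theorem, the uniqueness from strict convexity--concavity, the identification of the Euler--Lagrange system~\eqref{eq:SCF} via~\eqref{eq:diff-boF} once $\bA_*$ is interior, and the equivalence of the saddle formulations via~\cite[Prop.~1.2, Chap.~VI]{EkelandTemam} --- matches the paper's argument.
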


Equations~\eqref{eq:SCF} and~\eqref{eq:SCF_op} are well known in the Physics literature (see, e.g., \cite[Eq. (62)--(64)]{Engel-02}). 
Solutions have been rigorously constructed in the previous works~\cite{HaiLewSer-05a, HaiLewSer-05b, HaiLewSol-07}, with a sharp ultraviolet cut-off, but in the purely electrostatic case $A_{\rm ext}=A_*=0$. In this special case it is possible to obtain the polarized vacuum as a \emph{global} minimizer. The method of~\cite{HaiLewSer-05a, HaiLewSer-05b, HaiLewSol-07} does not seem to be applicable with magnetic fields, however. To our knowledge, Theorem~\ref{thm:polarized} is the first result dealing with optimized magnetic fields in interaction with Dirac's vacuum.

The proof of Theorem~\ref{thm:polarized} is based on tools of convex analysis, using that $\mathscr{L}^{\rho_{\rm ext},j_{\rm ext}}_{\rm PV}$ has the local saddle point geometry by Theorem~\ref{thm:defF}.

\begin{proof}
Let us define the balls
$$\boB_{\rm V}(r) := \big\{ V \in L^6(\R^3, \R) \ : \ e \| \nabla V \|_{L^2} \leq r \sqrt{m_0} \big\},$$
and
$$\boB_{\rm A}(r) := \big\{ A \in L^6(\R^3, \R^3) \ : \ e \| \curl A \|_{L^2} \leq r \sqrt{m_0} \big\}.$$
As we have already shown in the proof of Theorem~\ref{thm:free}, when $r$ satisfies condition~\eqref{eq:cond_r_free}, the function $\bA \mapsto \E^0(\bA)$ is strictly convex with respect to $V$ and strictly concave with respect to $A$ on $\boB_{\rm V}(r) \times \boB_{\rm A}(r)$. We deduce that 
$$\bA\mapsto \E^0(\bA)+e\int_{\R^3}j_{\rm ext}\cdot A-\rho_{\rm ext}V$$
satisfies the same property.

We now assume that the external field $\bA_{\rm ext} \in \boB_{\rm V}(\varepsilon r) \times \boB_{\rm A}(\varepsilon r)$ for some $\varepsilon\leq 1/3$ to be chosen later, and we look for a saddle point in $\boB_{\rm V}(3\varepsilon r) \times \boB_{\rm A}(3\varepsilon r)$. Since $\E^{\rho_{\rm ext},j_{\rm ext}}$ is strongly continuous on $\boB_{\rm V}(3\varepsilon r) \times \boB_{\rm A}(3\varepsilon r)$ by Theorem~\ref{thm:defF}, a classical result from convex analysis implies that $\E^{\rho_{\rm ext},j_{\rm ext}}$ possesses at least one saddle point $\bA_*=(V_*, A_*) \in \boB_{\rm V}(3\varepsilon r) \times \boB_{\rm A}(3\varepsilon r)$, solving
$$\E^{\rho_{\rm ext},j_{\rm ext}}(\bA_*) = \min_{V \in \boB_V(3\varepsilon r)} \, \E^{\rho_{\rm ext},j_{\rm ext}}(V, A_*) = \max_{A \in \boB_A(3\varepsilon r)} \, \E^{\rho_{\rm ext},j_{\rm ext}}(V_*, A).$$
See for instance~\cite[Prop. 2.1, Chap. VI]{EkelandTemam}. Uniqueness follows from the strict concavity and convexity, by~\cite[Prop 1.5, Chap. VI]{EkelandTemam}.

It only remains to verify that $\bA_*$ does not lie on the boundary of $\boB_{\rm V}(3\varepsilon r) \times \boB_{\rm A}(3\varepsilon r)$. Similarly as in~\eqref{eq:reduce_F_2}, we first compute
\begin{equation}
\label{eq:reduce_F_2_ext}
\begin{split}
\big| \boF_{\rm PV}(e \bA') - \boF_2(e \bF') \big| & \leq K \bigg( \sum_{j = 0}^2 \frac{|c_j|}{m_j} \bigg) \bigg( e^4 \| \bF' \|_{L^2}^4 + \frac{e^6}{m_0} \| \bF' \|_{L^2}^6 \bigg)\\
& \leq \frac{(3\varepsilon)^2}{8 \pi} \| \bF' \|_{L^2}^2,
\end{split}
\end{equation}
for all $\bA' \in \boB_{\rm V}(3\varepsilon r) \times \boB_{\rm A}(3\varepsilon r)$, when $r$ satisfies~\eqref{eq:cond_r_free} and $\varepsilon \leq 1/3$. We obtain with $B_{\rm ext}:=\curl A_{\rm ext}$
\begin{align*}
& \E^{\rho_{\rm ext},j_{\rm ext}}(V, A) - \E^{\rho_{\rm ext},j_{\rm ext}}(V, 0)\\
& \ \leq -e^2 \boF_2(0, B) - \frac{1}{8 \pi} \int_{\R^3} B^2+\frac{1}{4\pi}\int_{\R^3}B_{\rm ext}\cdot B + \frac{(3\varepsilon)^2}{8\pi}\int_{\R^3}2E^2+B^2.
\end{align*}
When $A$ belongs to the boundary of $\boB_A(3\varepsilon r)$, we obtain
$$\norm{B_{\rm ext}}_{L^2}\leq \frac13 \norm{B}_{L^2}\quad\text{and}\quad \norm{E}_{L^2}\leq \norm{B}_{L^2}$$
and therefore
\begin{equation*}
\E^{\rho_{\rm ext},j_{\rm ext}}(V, A) - \E^{\rho_{\rm ext},j_{\rm ext}}(V, 0)
\ \leq -e^2 \boF_2(0, B) - \frac{1-2/3-3(3\varepsilon)^2}{8 \pi} \int_{\R^3} B^2. 
\end{equation*}
Choosing $\varepsilon = 1/9$, the right-hand side is $\leq-e^2 \boF_2(0, B)<0$ since $B\neq0$. So we have shown that when $A$ belongs to the boundary of $\boB_{\rm V}(3\varepsilon r)$, $\E^{\rho_{\rm ext},j_{\rm ext}}(V, A) < \E^{\rho_{\rm ext},j_{\rm ext}}(V,0)$. Since $A_*$ maximizes $A \mapsto \E^{\rho_{\rm ext},j_{\rm ext}}(V_*, A)$, it cannot have an energy smaller than $\E^{\rho_{\rm ext},j_{\rm ext}}(V_*,0)$ and we conclude that 
$$e^2 \int_{\R^3} |\curl A_*|^2 < (3\varepsilon)^2 r^2 m_0.$$
Similarly, we can show that $V_*$ does not belong to the boundary of $\boB_{\rm V}(3\varepsilon r)$.

The unique saddle point $\bA_*=(V_*,A_*)$ being in the interior of the set $\boB_{\rm V}(3 \varepsilon r) \times \boB_{\rm A}(3 \varepsilon r)$, the derivative of $\E^{\rho_{\rm ext},j_{\rm ext}}$ must vanish at this point. Using the value~\eqref{eq:diff-boF} of the derivative of $\boF_{\rm PV}$ computed in Theorem~\ref{thm:differentiability}, we find that 
$$\begin{cases}
-\Delta V_*=4\pi\,e\, \big(\rho_{e\bA_*}+\rho_{\rm ext}\big)\\
-\Delta A_*=4\pi\,e\,\big(j_{e\bA_*}+j_{\rm ext}\big)
  \end{cases}
$$
where $\rho_{e\bA_*}$ and $j_{e\bA_*}$ are given by~\eqref{eq:def-jA-rhoA} in Theorem~\ref{thm:differentiability}. This is exactly the self-consistent equation~\eqref{eq:SCF}.
\end{proof}

The rest of the paper is devoted to the proofs of Theorems~\ref{thm:defF} and~\ref{thm:differentiability}. Our strategy is as follows. First, in Section~\ref{sec:atrace}, we show that the functional $\cF_{\rm PV}$ is well-defined for four-potentials $\bA$ with an appropriate decay in $x$-space (the integrability of $\bA$ on $\R^3$ is enough). Then, we compute things more precisely in Section~\ref{sec:estimates}, and we exhibit the cancellations which show that this functional can be uniquely extended by continuity to $\Hdiv$.

\section{The Pauli-Villars functional for integrable potentials}
\label{sec:atrace}

The purpose of this section is to prove that the operator
\begin{equation}
\label{def:trTA}
\tr_{\C^4} T_\bA := \frac{1}{2} \sum_{j = 0}^2 c_j \tr_{\C^4} \big( |D_{m_j, 0}| - |D_{m_j, \bA}| \big),
\end{equation}
is trace-class, when the four-potential $\bA := (V, A)$ decays sufficiently fast. The proof relies on an expansion of $\boF_{\rm PV}(\bA)$ with respect to the four-potential $\bA$ using the resolvent formula, but for which we actually do \emph{not} need that $\bA$ is small. Our precise statement is the following

\begin{prop}[$\tr_{\C^4} T_\bA$ is in $\gS_1$]
\label{prop:trace-class}
Assume that $c_j$ and $m_j$ satisfy conditions~\eqref{cond:c_i_thm}. Then, the operator $\tr_{\C^4} T_{\bA}$ is trace-class whenever $\bA \in L^1(\R^3, \R^4) \cap H^1(\R^3, \R^4)$.
\end{prop}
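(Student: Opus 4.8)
The plan is to expand $|D_{m_j,\bA}|-|D_{m_j,0}|$ as a norm-convergent series in powers of $\bA$ using the integral representation $|D| = \frac{1}{\pi}\int_{\mathbb R} \frac{D^2}{D^2+\omega^2}\,d\omega$ (equivalently $|D|^{-1}$ times $D^2$, or the formula $|D|=\frac1\pi\int_0^\infty \frac{D^2+\omega^2-\omega^2}{D^2+\omega^2}\frac{d\omega}{\sqrt{\cdot}}$ — one of the standard contour/parameter representations), together with the resolvent identity
$$\frac{1}{D_{m_j,\bA}^2+\omega^2}=\frac{1}{D_{m_j,0}^2+\omega^2}-\frac{1}{D_{m_j,0}^2+\omega^2}\,W_\bA\,\frac{1}{D_{m_j,\bA}^2+\omega^2},$$
where $W_\bA := D_{m_j,\bA}^2-D_{m_j,0}^2$ is a first-order differential operator whose coefficients are linear and quadratic in $\bA=(V,A)$ (schematically $W_\bA = e\{\bsalpha\cdot(-i\nabla),\,\text{terms in }\bA\} + e^2(\text{terms in }\bA^2)$, plus the $\bsbeta$ and curl contributions). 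Iterating this identity to a fixed finite order $N$ produces a finite sum of explicit terms (each a product of free resolvents sandwiching factors of $\bA$ and $\nabla$) plus a remainder term containing one full interacting resolvent $(D_{m_j,\bA}^2+\omega^2)^{-1}$.

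First I would set up the representation and collect, after summing over $j=0,1,2$ and using the Pauli-Villars conditions $\sum_j c_j=\sum_j c_j m_j^2=0$, the cancellations that make each term integrable in $\omega$: the point is that a single free resolvent $(-\Delta+m_j^2+\omega^2)^{-1}$ decays only like $|\omega|^{-2}$ for large $\omega$, too slowly, but the combination $\sum_j c_j (-\Delta+m_j^2+\omega^2)^{-1}$ decays like $|\omega|^{-4}$ (and with one more derivative of the cancellation, like $|\omega|^{-6}$), which is what renders the $\omega$-integral of the Schatten norms convergent. Each individual term in the expansion is a product of Kato–Seiler–Simon-type factors: using \eqref{eq:KSS} with a suitable $p$ (e.g. $p=4$ or $p=6$, exploiting $\bA\in L^1\cap L^2\cap L^6$ via interpolation and the assumption $\bA\in L^1\cap H^1$), each factor $f(x)g(-i\nabla)$ with $f$ built from $V$ or $A$ and $g$ built from a free resolvent lies in an appropriate $\gS_p$, and by Hölder for Schatten norms the product of enough such factors lands in $\gS_1$. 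Crucially — and this is where taking the partial trace $\tr_{\C^4}$ first matters — some low-order terms (the ones linear in $\bA$, and certain pieces of the quadratic term) are not individually trace-class on $L^2(\R^3,\C^4)$, but after taking the trace over the spinor index $\C^4$ the offending pieces have vanishing $\C^4$-trace pointwise (the matrices $\bsalpha_k$, $\bsbeta$, and $\bsalpha_k\bsalpha_\ell$ for $k\neq\ell$ are traceless), so $\tr_{\C^4}$ of the term is either zero or a genuinely trace-class operator on $L^2(\R^3,\C)$; I would carry the $\C^4$-trace inside the expansion term by term and invoke this algebraic fact.

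I expect the main obstacle to be the careful bookkeeping of the low-order terms: one must check that exactly those terms whose naive $\gS_1$ bound fails are precisely the ones killed (after $\tr_{\C^4}$) by the Pauli-Villars mass cancellation \emph{combined with} the spinor-trace vanishing, and that the first genuinely-surviving term (which will be the second-order term, matching $\boF_2$ up to the remainder) together with all higher terms and the remainder term do admit an honest $\gS_1$ bound. Organizing the expansion so that the order $N$ is high enough for the remainder to be trace-class, while tracking the $\omega$-decay gained from each application of the $\sum_j c_j$ cancellations, is the delicate part; the rest is a routine, if lengthy, application of \eqref{eq:KSS}, Hölder in Schatten spaces, Sobolev embedding, and the hypothesis $\bA\in L^1\cap H^1$ to control all the mixed $L^p$ norms of $V$, $A$, and $\nabla V$, $\curl A$ that appear. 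Once every term and the remainder are shown to be in $\gS_1$ on $L^2(\R^3,\C)$ after the partial trace, summing the finitely many contributions gives $\tr_{\C^4}T_\bA\in\gS_1$, which is the assertion.
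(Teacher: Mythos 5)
Your proposal is correct and follows essentially the same route as the paper's proof: an integral representation of $|D|$ through free resolvents, a resolvent expansion iterated to a fixed finite order plus a remainder, Kato--Seiler--Simon bounds in Schatten norms (exploiting $\bA\in L^1\cap L^6$ and interpolation), the Pauli--Villars conditions $\sum_j c_j=\sum_j c_j m_j^2=0$ to improve decay, and the partial trace $\tr_{\C^4}$ to annihilate the pieces --- the $V\bsbeta$ contributions at first and second order --- that are not individually trace-class. The only organizational difference is that you expand in $D^2$-resolvents using $W_\bA=D_{m_j,\bA}^2-D_{m_j,0}^2$ from the outset, whereas the paper iterates the $(D_{m_j,\bA}\pm i\omega)^{-1}$ expansion in the multiplication operator $\bsalpha\cdot A - V$ (so each term is homogeneous in $\bA$) and only afterwards passes to $D^2$-resolvents via $(D_{m_j,0}\pm i\omega)^{-1}=(D_{m_j,0}\mp i\omega)(D_{m_j,0}^2+\omega^2)^{-1}$; this is a minor bookkeeping variant that leads to the same estimates.
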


\begin{remark}
For this result, it is not important that $\div A = 0$, hence we do not require that $A \in \Hdiv$.
\end{remark}

The rest of this section is devoted to the proof of Proposition~\ref{prop:trace-class}.

\begin{proof}
Our starting point is the integral formula
\begin{equation}
\label{eq:absval}
|x| = \frac{1}{\pi} \int_{\R} \frac{x^2}{x^2 + \omega^2} \, d\omega = \frac{1}{2 \pi} \int_\R \Big( 2 - \frac{i \omega}{x + i \omega} + \frac{i \omega}{x - i \omega} \Big) \, d\omega.
\end{equation}
When $T$ is a self-adjoint operator on $L^2(\R^3, \R^4)$, with domain $D(T)$, it follows from~\eqref{eq:absval} using standard functional calculus (see e.g.~\cite{ReeSim1}), that the absolute value $|T|$ of $T$ is given by
\begin{equation}
\label{eq:intvalabs}
|T| = \frac{1}{2 \pi} \int_\R \Big( 2 - \frac{i \omega}{T + i \omega} + \frac{i \omega}{T - i \omega} \Big) \, d\omega.
\end{equation}
Let us remark that this integral is convergent when seen as an operator from $D(T^2)$ to the ambient Hilbert space. In particular, 
$$\bigg\| \frac{T^2}{T^2 + \omega^2} \bigg\|_{D(T^2)\to L^2(\R^3, \C^4)} \leq \min \Big\{ 1, \omega^{- 2} \big\| T^2 \big\|_{D(T^2)\to L^2(\R^3, \C^4)} \Big\}.$$

Since the domains of $D_{m_j, 0}^2$ and $D_{m_j, \bA}^2$ are both equal to $H^2(\R^3, \C^4)$, we deduce that we can write
\begin{equation}
\label{eq:develT}
\begin{split}
T_\bA = \frac{1}{4 \pi} \int_\R \sum_{j = 0}^2 c_j \, \Big( \frac{i \omega}{D_{m_j, \bA} + i \omega} & - \frac{i \omega}{D_{m_j, \bA} - i \omega}\\
& - \frac{i \omega}{D_{m_j, 0} + i \omega} + \frac{i \omega}{D_{m_j, 0} - i \omega} \Big) \, d\omega
\end{split}
\end{equation}
on $H^2(\R^3, \C^4)$. Here and everywhere else it is not a problem if $D_{m_j,\bA}$ has $0$ as an eigenvalue. The operator $D_{m_j, \bA} + i \omega$ is invertible for $\omega \neq 0$, and $(i \omega)(D_{m_j, \bA} + i \omega)^{-1}$ stays uniformly bounded in the limit $\omega \to 0$.

In order to establish Proposition~\ref{prop:trace-class}, we will prove that the $\C^4$--trace of the integral in the right-hand side of~\eqref{eq:develT} defines a trace-class operator according to the inequality
\begin{multline}
\label{int:omega}
\int_\R \bigg\| \sum_{j = 0}^2 c_j \, \tr_{\C^4} \Big( \frac{i \omega}{D_{m_j, \bA} + i \omega}  - \frac{i \omega}{D_{m_j, \bA} -i \omega}\\
 - \frac{i \omega}{D_{m_j, 0} + i \omega} + \frac{i \omega}{D_{m_j, 0} - i \omega} \Big) \bigg\|_{\gS_1} \, d\omega < \infty,
\end{multline}
which we can establish when $\bA \in L^1(\R^3, \R^4) \cap H^1(\R^3, \R^4)$. This will complete the proof of Proposition~\ref{prop:trace-class}.

As a consequence, our task reduces to derive estimates in Schatten spaces on the integrand operator
\begin{align*}
\boR(\omega, \bA) := \sum_{j = 0}^2 c_j \, \tr_{\C^4} \Big( \frac{i \omega}{D_{m_j, \bA} + i \omega} - & \frac{i \omega}{D_{m_j, \bA} - i \omega}\\
- & \frac{i \omega}{D_{m_j, 0} + i \omega} + \frac{i \omega}{D_{m_j, 0} - i \omega} \Big),
\end{align*}
which we can integrate with respect to $\omega$. To this end, we use the resolvent expansion 
$$\frac{i\omega}{D_{m_j, \bA} + i \omega} - \frac{i\omega}{D_{m_j, 0} + i \omega} = \frac{i\omega}{D_{m_j, \bA} + i \omega} \big( \bsalpha \cdot A - V \big) \, \frac{1}{D_{m_j, 0} + i \omega},$$
iterated six times and obtain
\begin{equation}
\label{eq:resolvent_expansion}
\begin{split}
\frac{i \omega}{D_{m_j, \bA} + i \omega} - \frac{i \omega}{D_{m_j, 0} + i \omega} = & \sum_{n = 1}^5 \frac{i \omega}{D_{m_j, 0} + i \omega} \Big( \big( \bsalpha \cdot A - V \big) \, \frac{1}{D_{m_j, 0} + i \omega} \Big)^n\\
& + \frac{i \omega}{D_{m_j, \bA} + i \omega} \Big( \big( \bsalpha \cdot A - V \big) \, \frac{1}{D_{m_j, 0} + i \omega} \Big)^6,
\end{split}
\end{equation}
together with the similar expression for the term with $- i \omega$ instead of $+ i \omega$. Again, we insist on the fact that this expansion makes perfect sense for $\omega \neq 0$, even if the spectrum of $D_{m_j, \bA}$ contains $0$. This allows us to write
\begin{equation}
\label{eq:def_orders}
\begin{split}
\boR(\omega, \bA) = \sum_{n = 1}^5 \tr_{\C^4} \Big( R_n(\omega, \bA) & + R_n(- \omega, \bA) \Big)\\
& + \tr_{\C^4} \big( R_6'(\omega, \bA) + R_6'(- \omega, \bA) \big),
\end{split}
\end{equation}
with
\begin{equation}
R_n(\omega, \bA) := \sum_{j = 0}^2 c_j \, \frac{i \omega}{D_{m_j, 0} + i \omega} \Big( \big( \bsalpha \cdot A - V \big) \frac{1}{D_{m_j, 0} + i \omega} \Big)^n,
\label{eq:def_R_n} 
\end{equation}
and
\begin{equation}
R_6'(\omega, \bA) := \sum_{j = 0}^2 c_j \, \frac{i \omega}{D_{m_j, \bA} + i \omega} \Big( \big( \bsalpha \cdot A - V \big) \frac{1}{D_{m_j, 0} + i \omega} \Big)^6.
\label{eq:def_R_6} 
\end{equation}
Our purpose is to prove that
\begin{multline}
\label{eq:thekey}
\int_\R \bigg( \sum_{n = 1}^5 \big\| \tr_{\C^4} \big( R_n(\omega, \bA) + R_n(- \omega, \bA) \big) \big\|_{\gS_1}\\
 + \big\| \tr_{\C^4} \big( R_6'(\omega, \bA) + R_6'(- \omega, \bA) \big) \big\|_{\gS_1} \bigg) \, d\omega < \infty.
\end{multline}

\addtocontents{toc}{\SkipTocEntry}
\subsection*{Estimate on the sixth order term}

We first estimate the sixth order term $R_6'(\omega, \bA)$ in~\eqref{eq:def_orders} which is the simplest one. The $\C^4$--trace is not going to be helpful for us here. First we use the inequality
$$\Big\| \frac{i \omega}{D_{m_j, \bA} + i \omega} \Big\| \leq 1,$$
which, in particular, takes care of the possibility of having $0$ in the spectrum of $D_{m_j,\bA}$.
Combining with H\"older's inequality in Schatten spaces, $\norm{AB}_{\gS_1}\leq \norm{A}_{\gS_p}\norm{B}_{\gS_{p'}}$, we obtain
\begin{equation}
\label{eq:Holder_6th}
\begin{split}
\Big\| \frac{i \omega}{D_{m_j, \bA} + i \omega} \Big( \big( \bsalpha \cdot A - V \big) & \frac{1}{D_{m_j, 0} + i \omega} \Big)^6 \Big\|_{\gS_1}\\
\leq & \Big\| \big( \bsalpha \cdot A - V \big) \frac{1}{D_{m_j, 0} + i \omega} \Big\|_{\gS_6}^6.
\end{split}
\end{equation}
We next use the Kato-Seiler-Simon inequality~\eqref{eq:KSS}, similarly as in the proof of Lemma~\ref{lem:spectre}, which gives us
$$\forall p > 3, \ \Big\| \big( \bsalpha \cdot A - V \big) \frac{1}{D_{m_j, 0} + i \omega} \Big\|_{\gS_p} \leq (I_p)^\frac{1}{p} (m_j^2 + \omega^2)^{\frac{3}{2 p} - \frac{1}{2}} \big\| \bA \big\|_{L^p},$$
where
$$I_p := \frac{1}{2 \pi^2} \int_0^\infty \frac{r^2 \, dr}{(1 + r^2)^{\tfrac{p}{2}}}.$$
For $p = 6$, we can use the Sobolev inequalities
\begin{equation}
\label{eq:Sobolev}
\| V \|_{L^6} \leq S \| \nabla V \|_{L^2} \quad {\rm and} \quad \| A \|_{L^6} \leq S \| \nabla A \|_{L^2},
\end{equation}
to obtain an estimate in terms of the gradient $\nabla \bA$ by
$$\Big\| \big( \bsalpha \cdot A - V \big) \frac{1}{D_{m_j, 0} + i \omega} \Big\|_{\gS_6} \leq \frac{(I_6)^\frac{1}{6}S}{(m_j^2 + \omega^2)^\frac{1}{4}} \big\| \nabla \bA \big\|_{L^2}.$$
Inserting in~\eqref{eq:Holder_6th}, we have
\begin{equation}
\label{eq:R6atrace}
\big\| R_6'(\omega, \bA) \big\|_{\gS_1} \leq \sum_{j = 0}^2 |c_j| \, \frac{S^6 I_6}{(m_j^2 + \omega^2)^\frac{3}{2}} \, \big\| \nabla \bA \big\|_{L^2}^6,
\end{equation}
so that
\begin{equation}
\label{eq:6th-final}
\int_\R \| R_6'(\omega, \bA) \|_{\gS_1} \, d\omega \leq S^6 I_6 \bigg( \sum_{j = 0}^2 \frac{|c_j|}{m_j^2} \bigg) \| \nabla \bA \|_{L^2}^6 \int_\R \frac{d\omega}{(1 + \omega^2)^\frac{3}{2}}.
\end{equation}
The term with $+ i \omega$ replaced by $- i \omega$ is treated similarly. 

\addtocontents{toc}{\SkipTocEntry}
\subsection*{Estimate on the fifth order term}

The method that we have used for the sixth order term of~\eqref{eq:resolvent_expansion} can be applied in a similar fashion to the fifth order term, leading to the estimate
\begin{equation}
\label{eq:5th-final}
\int_\R \| R_5(\pm \omega, \bA) \|_{\gS_1} d\omega \leq I_5 \bigg( \sum_{j = 0}^2 \frac{|c_j|}{m_j} \bigg) \| \bA \|_{L^5}^5 \int_\R \frac{|\omega| \, d\omega}{(1 + \omega^2)^\frac{3}{2}}.
\end{equation}
None of these estimates use simplifications coming from the $\C^4$--trace. The latter is only useful for lower order terms.

\addtocontents{toc}{\SkipTocEntry}
\subsection*{Estimate on the fourth order term}
 
For the other terms in~\eqref{eq:resolvent_expansion}, we need more precise estimates based on conditions~\eqref{cond:coef} satisfied by the coefficients $c_j$ and the masses $m_j$. We start by considering the fourth order term, for which we use the identity $c_0 + c_1 + c_2 = 0$ to write
\begin{align*}
R_4(\omega, \bA) = & \sum_{j = 0}^2 c_j \sum_{k = 0}^4 \Big( \frac{1}{D_{m_0, 0} + i \omega} \big( \bsalpha \cdot A - V \big) \Big)^k \times\\
& \times \Big( \frac{i \omega}{D_{m_j, 0} + i \omega} - \frac{i \omega}{D_{m_0, 0} + i \omega} \Big) \Big( \big( \bsalpha \cdot A - V \big) \frac{1}{D_{m_j, 0} + i \omega} \Big)^{4 - k}.
\end{align*}
Next we use that
\begin{equation}
\label{eq:norm-diff-Dj}
\begin{split}
\Big\| \frac{i \omega}{D_{m_j, 0} + i \omega} - \frac{i \omega}{D_{m_0, 0} + i \omega} \Big\| = & \Big\| \frac{m_0 - m_j}{D_{m_j, 0} + i \omega} \bsbeta \frac{i \omega}{D_{m_0, 0} + i \omega} \Big\|\\
\leq & \big( m_j - m_0 \big) \frac{|\omega|}{m_0^2 + \omega^2},
\end{split}
\end{equation}
since $m_j \geq m_0$, and we argue as before, using this time $\bA \in L^4(\R^3, \R^4)$. We obtain
\begin{equation}
\label{eq:R4atrace}
\big\| R_4(\pm \omega, \bA) \big\|_{\gS_1} \leq \frac{5 I_4 |\omega|}{(m_0^2 + \omega^2)^{\tfrac{3}{2}}} \sum_{j = 0}^2 |c_j| \, \big( m_j - m_0 \big) \, \| \bA \|_{L^4}^4,
\end{equation}
hence
\begin{equation}
\label{eq:4th-final}
\int_\R \big\| R_4(\pm \omega, \bA) \big\|_{\gS_1} \, d\omega \leq 5 I_4 \| \bA \|_{L^4}^4 \sum_{j = 0}^2 |c_j| \, \frac{m_j - m_0}{m_0} \int_\R \frac{|\omega| \, d\omega}{(1 + \omega^2)^{\tfrac{3}{2}}}.
\end{equation}
Notice again that we have not used the $\C^4$--trace in our estimate of the fourth order term.

\addtocontents{toc}{\SkipTocEntry}
\subsection*{Estimate on the first order term}

In order to deal with the first, second and third order terms, we need to use more cancellations. We start by considering the first order term for which we can write
\begin{align*}
& \frac{i \omega}{D_{m_j, 0} + i \omega} \big( \bsalpha \cdot A - V \big) \frac{1}{D_{m_j, 0} + i \omega} - \frac{i \omega}{D_{m_j, 0} -i \omega} \big( \bsalpha \cdot A - V \big) \frac{1}{D_{m_j, 0} - i \omega}\\
= & \frac{2 \omega^2}{D_{m_j, 0}^2 + \omega^2} \big( \bsalpha \cdot A - V \big) \frac{1}{D_{m_j, 0} + i \omega} + \frac{1}{D_{m_j, 0} - i \omega} \big( \bsalpha \cdot A - V \big) \frac{2 \omega^2}{D_{m_j, 0}^2 + \omega^2}.
\end{align*}
Inserting
\begin{equation}
\label{eq:Delta}
\frac{1}{D_{m_j, 0} \pm i \omega} = \frac{D_{m_j, 0} \mp i \omega}{D_{m_j, 0}^2 + \omega^2},
\end{equation}
we obtain
\begin{align*}
\frac{i \omega}{D_{m_j, 0} + i \omega} \big( \bsalpha \cdot A - V \big) \frac{1}{D_{m_j, 0} + i \omega} - \frac{i \omega}{D_{m_j, 0} -i \omega} \big( \bsalpha \cdot A - V \big) & \frac{1}{D_{m_j, 0} - i \omega}\\
= \frac{2 \omega^2}{D_{m_j, 0}^2 + \omega^2} \big\{ \bsalpha \cdot A - V, D_{m_j, 0} \big\} & \frac{1}{D_{m_j, 0}^2 + \omega^2},
\end{align*}
where the notation $\{ T_1, T_2 \}$ refers to the anti-commutator operator
$$\{ T_1, T_2 \} := T_1 T_2 + T_2 T_1.$$
At this stage, we recall that
\begin{equation}
\label{eq:prodalpha}
\big( \bsalpha \cdot X \big) \, \big( \bsalpha \cdot Y \big) = X \cdot Y + i (X \times Y) \cdot \bsSigma,
\end{equation}
for all $(X, Y) \in (\R^3)^2$. In this formula, $X \times Y$ is the cross product of the vectors $X$ and $Y$, whereas the notation $\bsSigma = (\bsSigma_1, \bsSigma_2, \bsSigma_3)$ refers to the matrices
\begin{equation}
\label{def:Sigma}
\bsSigma_j := \begin{pmatrix} \bssigma_j & 0 \\ 0 & \bssigma_j \end{pmatrix}.
\end{equation}
As a consequence, we obtain
$$\big\{ \bsalpha \cdot p, \bsalpha \cdot A \big\} = \big\{ p, A \big\}_{\R^3} + i \big( p \times A + A \times p \big) \cdot \bsSigma = \big\{ p, A \big\}_{\R^3} + B \cdot \bsSigma,$$
where $\{ \cdot, \cdot \}_{\R^3}$ is a notation for 
$$\big\{ S, T \big\}_{\R^3} := S \cdot T + T \cdot S$$
and where $p=-i\nabla$ (a simplifying notation that will be used henceforth). Since $\bsbeta \bsalpha_k + \bsalpha_k \bsbeta = 0$, we deduce that
\begin{equation}
\label{eq:antiDA}
\big\{ \bsalpha \cdot A - V, D_{m_j, 0} \big\} = \big\{ p , A - V \bsalpha \big\}_{\R^3} + B \cdot \bsSigma - 2 m_j V \bsbeta.
\end{equation}
This finally gives us
\begin{equation}
\label{def:boR1}
R_1(\omega, \bA) + R_1(- \omega, \bA) = 2 \omega^2 \big( \boR_{1, 1} + \boR_{1,2} \big),
\end{equation}
where
\begin{equation}
\label{def:boR11}
\boR_{1, 1} := \sum_{j = 0}^2 c_j \, \frac{1}{p^2 + m_j^2 + \omega^2} \Big( \big\{ p, A - V \bsalpha \big\}_{\R^3} + B \cdot \bsSigma \Big) \frac{1}{p^2 + m_j^2 + \omega^2},
\end{equation}
and
\begin{equation}
\label{def:boR12}
\boR_{1, 2} := - 2 \sum_{j = 0}^2 c_j m_j \, \frac{1}{p^2 + m_j^2 + \omega^2} \, V \bsbeta \, \frac{1}{p^2 + m_j^2 + \omega^2}.
\end{equation}
Concerning the operator $\boR_{1, 1}$, the last step consists in using identities~\eqref{cond:PV} and the two expansions
\begin{equation}
\label{eq:decomp_carres}
\begin{split}
\frac{1}{p^2 + m_j^2 + \omega^2} = \frac{1}{p^2 + m_0^2 + \omega^2} & + \frac{m_0^2 - m_j^2}{(p^2 + m_0^2 + \omega^2) (p^2 + m_j^2 + \omega^2)}\\
= \frac{1}{p^2 + m_0^2 + \omega^2} & + \frac{m_0^2 - m_j^2}{(p^2 + m_0^2 + \omega^2)^2}\\
& + \frac{(m_0^2 - m_j^2)^2}{(p^2 + m_j^2 + \omega^2) (p^2 + m_0^2 + \omega^2)^2}.
\end{split}
\end{equation}
This gives
\begin{align*}
& \boR_{1, 1} = \sum_{j = 0}^2 c_j (m_0^2 - m_j^2)^2 \bigg(\\
& \frac{1}{(p^2 + m_0^2 + \omega^2)^2} (\{ p, A - V \bsalpha \}_{\R^3} + B \cdot \bsSigma) \frac{1}{(p^2 + m_0^2 + \omega^2) (p^2 + m_j^2 + \omega^2)}\\
& + \frac{1}{(p^2 + m_j^2 + \omega^2) (p^2 + m_0^2 + \omega^2)^2} (\{ p, A - V \bsalpha \}_{\R^3}+ B \cdot \bsSigma) \frac{1}{p^2 + m_j^2 + \omega^2}\\
& + \frac{1}{p^2 + m_0^2 + \omega^2} (\{ p, A - V \bsalpha \}_{\R^3} + B \cdot \bsSigma) \frac{1}{(p^2 + m_j^2 + \omega^2) (p^2 + m_0^2 + \omega^2)^2} \bigg).
\end{align*}
We now use the fact that $\bA \in L^1(\R^3, \R^4)$, $B = i (p \times A + A \times p)$, as well as the Kato-Seiler-Simon inequality~\eqref{eq:KSS} to get
\begin{equation}
\label{eq:R11atrace}
\big\| \boR_{1, 1} \big\|_{\gS_1} \leq 18 \sum_{j = 0}^2 |c_j| (m_0^2 - m_j^2)^2 \frac{I_7}{(m_0^2 + \omega^2)^2} \big\| \bA \big\|_{L^1}.
\end{equation}

The analysis of the operator $\boR_{1, 2}$ is more involved. Under conditions~\eqref{cond:PV}, we are not able to prove that $\boR_{1, 2}$ is trace-class. However we can compute first the $\C^4$--trace before taking the operator trace. We obtain
\begin{equation}
\label{eq:trboR12}
\tr_{\C^4} \boR_{1, 2} = 0,
\end{equation}
since $\tr_{\C^4} \bsbeta = 0$. 

\begin{remark}
By this argument, we do not prove that $\boR_{1, 2}$ is trace-class. 
Under the additional conditions
$$\sum_j c_j \, m_j = \sum_j c_j \, m_j^3 = 0,$$
the operator $\boR_{1, 2}$ becomes a trace-class operator, and its trace is equal to $0$. This strategy however requires to introduce additional fictitious particles in our model. Introducing more fictitious particles in order to justify the computation of a term which is anyway $0$ does not seem very reasonable from a physical point of view. This explains why we prefer here to first take the $\C^4$--trace.
\end{remark}

As a consequence, we can conclude our estimate of the first order term by combining~\eqref{eq:R11atrace} and~\eqref{eq:trboR12} in order to obtain
\begin{multline}
\label{eq:1st-final}
\int_\R \Big\| \tr_{\C^4} \big( R_1(\omega, \bA) + R_1(- \omega, \bA) \big) \Big\|_{\gS_1} \, d\omega\\
\leq 36\, I_7 \sum_{j = 0}^2 |c_j| \, \frac{(m_0^2 - m_j^2)^2 }{m_0} \, \| \bA \|_{L^1} \int_\R \frac{\omega^2 \, d\omega}{(1 + \omega^2)^2}.
\end{multline}

\medskip

The second and third order terms are treated following the same method, except that the algebra is a little more tedious. We start by writing that
\begin{align*}
R_n(\omega, \bA) + R_n(- \omega, \bA) = 2 & \omega^2 \sum_{j = 0}^2 c_j \sum_{k = 0}^n \Big( \frac{1}{D_{m_j, 0} - i \omega} \big( \bsalpha \cdot A - V \big) \Big)^k \times\\
& \times \frac{1}{D_{m_j, 0}^2 + \omega^2} \Big( \big( \bsalpha \cdot A - V \big) \frac{1}{D_{m_j, 0} + i \omega} \Big)^{n - k}.
\end{align*}
We next expand as before using~\eqref{eq:Delta}. 

\addtocontents{toc}{\SkipTocEntry}
\subsection*{Estimate on the second order term}

For the second order term, we are left with
\begin{align*}
R_2(\omega, \bA) + & R_2(- \omega, \bA) = - 2 \omega^2 \sum_{j = 0}^2 c_j \times\\
& \times \bigg( \frac{\omega^2}{D_{m_j, 0}^2 + \omega^2} \Big( \big( \bsalpha \cdot A - V \big) \frac{1}{D_{m_j, 0}^2 + \omega^2} \Big)^2 - \sum_{k = 0}^2 \Big( \frac{D_{m_j, 0}}{D_{m_j, 0}^2 + \omega^2} \times\\
& \times \big( \bsalpha \cdot A - V \big) \Big)^k \frac{1}{D_{m_j, 0}^2 + \omega^2} \Big( \big( \bsalpha \cdot A - V \big) \frac{D_{m_j, 0}}{D_{m_j, 0}^2 + \omega^2} \Big)^{2 - k} \bigg),
\end{align*}
which may also be written as
\begin{equation}
\label{eq:R2inter}
\begin{split}
R_2(\omega, \bA) + R_2(- \omega, \bA) & = - 2 \omega^2 \sum_{j = 0}^2 c_j \bigg( \frac{1}{D_{m_j, 0}^2 + \omega^2} \big( \bsalpha \cdot A - V \big)^2 \frac{1}{D_{m_j, 0}^2 + \omega^2}\\
& - \frac{1}{D_{m_j, 0}^2 + \omega^2} \Big( \big\{ \bsalpha \cdot A - V, D_{m_j, 0} \big\} \frac{1}{D_{m_j, 0}^2 + \omega^2} \Big)^2 \bigg).
\end{split}
\end{equation}
Inserting
$$(\bsalpha \cdot A - V)^2 = |A|^2 + V^2 - 2 \, \bsalpha \cdot A \, V,$$
and~\eqref{eq:antiDA} into~\eqref{eq:R2inter}, we are led to
\begin{equation}
\label{eq:R2end}
R_2(\omega, \bA) + R_2(- \omega, \bA) = - 2 \omega^2 \big( \boR_{2, 1} + \boR_{2, 2} \big),
\end{equation}
where
\begin{align*}
\boR_{2, 1} & := \sum_{j = 0}^2 c_j \, \frac{1}{p^2 + m_j^2 + \omega^2} \bigg( \Big( |A|^2 + V^2 - 2 \, \bsalpha \cdot A \, V \Big) \frac{1}{p^2 + m_j^2 + \omega^2}\\
- 4 & m_j^2 \Big( V \, \frac{1}{p^2 + m_j^2 + \omega^2} \Big)^2 -\Big( \big( \big\{ p, A - V \bsalpha \big\}_{\R^3} + B \cdot \bsSigma \big) \frac{1}{p^2 + m_j^2 + \omega^2} \Big)^2 \bigg),
\end{align*}
and
\begin{align*}
\boR_{2, 2} := 2 \sum_{j = 0}^2 & c_j\, m_j\, \frac{1}{p^2 + m_j^2 + \omega^2} \times\\
\times &\bigg\{ V \bsbeta \, \frac{1}{p^2 + m_j^2 + \omega^2}, \Big( \big\{ p, A - V \bsalpha \big\}_{\R^3} + B \cdot \bsSigma \Big) \frac{1}{p^2 + m_j^2 + \omega^2} \bigg\}.
\end{align*}
The proof that $\boR_{2, 1}$ is trace-class is similar to the first order case, using~\eqref{eq:decomp_carres}. The final estimate is
\begin{align*}
\big\| \boR_{2, 1} \big\|_{\gS_1} \leq & \sum_{j = 0}^2 |c_j| \bigg( 8\, I_7 \frac{m_j^2 - m_0^2}{(m_0^2 + \omega^2)^2} \big\| \bA \big\|_{L^2} \big\| B \big\|_{L^2} + I_8 \frac{m_j^2 - m_0^2}{(m_0^2 + \omega^2)^\frac{5}{2}} \times\\
& \times \Big( 4 m_j^2 \big\| V \big\|_{L^2}^2 + \big\| B \big\|_{L^2}^2 + 8(m_j^2 - m_0^2) \big\| \bA \big\|_{L^2}^2 \Big) \bigg).
\end{align*}
Since
\begin{equation}
\label{eq:trboR22}
\tr_{\C^4} \boR_{2, 2} = 0,
\end{equation}
as for the first-order term, our final estimate is
\begin{multline}
\label{eq:2nd-final}
\int_\R \Big\| \tr_{\C^4} \big( R_2(\omega, A) + R_2(- \omega, A) \big) \Big\|_{\gS_1} \, d\omega\\
\leq \sum_{j = 0}^2 |c_j| \bigg( 8\, I_7 \, \frac{m_j^2 - m_0^2}{m_0} \, \big\| \bA \big\|_{L^2} \, \big\| B \big\|_{L^2} \, \int_\R \frac{\omega^2 \, d\omega}{(1 + \omega^2)^2} + I_8 \, \frac{m_j^2 - m_0^2}{m_0^2} \times\\
\times \Big( 4 m_j^2 \big\| V \big\|_{L^2}^2 + \big\| B \big\|_{L^2}^2 + 8 (m_j^2 - m_0^2) \big\| \bA \big\|_{L^2}^2 \Big) \int_\R \frac{\omega^2 \, d\omega}{(1 + \omega^2)^\frac{5}{2}} \bigg).
\end{multline}

\addtocontents{toc}{\SkipTocEntry}
\subsection*{Estimate on the third order term}

Similar computations give for the third order term
\begin{equation}
\label{eq:R3end}
\begin{split}
R_3(& \omega, \bA) + R_3(- \omega, \bA) = \sum_{j = 0}^2 c_j \bigg( \frac{2 \omega^2}{p^2 + m_j^2 + \omega^2} \Big( \big( \big\{ p, A - V \bsalpha \big\}_{\R^3} + B \cdot \bsSigma\\
& - 2 m_j V \bsbeta \big) \frac{1}{p^2 + m_j^2 + \omega^2} \Big)^3 - \frac{2 \omega^2}{p^2 + m_j^2 + \omega^2} \bigg\{ \big( |A|^2 + V^2 - 2 \, \bsalpha \cdot A \, V \big)\\
& \times \frac{1}{p^2 + m_j^2 + \omega^2}, \big( \big\{ p, A - V \bsalpha \big\}_{\R^3} + B \cdot \bsSigma - 2 m_j V \bsbeta \big) \frac{1}{p^2 + m_j^2 + \omega^2} \bigg\} \bigg).
\end{split}
\end{equation}
Using once again~\eqref{eq:decomp_carres}, we deduce
\begin{align*}
& \int_\R \Big\| \tr_{\C^4} \big( R_3(\omega, \bA) + R_3(- \omega, \bA) \big) \Big\|_{\gS_1} \, d\omega \leq K \sum_{j = 0}^2 |c_j| \, \int_\R \frac{\omega^2 \, d\omega}{(1 + \omega^2)^\frac{13}{8}} \times\\
& \times \bigg( m_j I_8 \| V \|_{L^3}^3 + I_8 \| V \|_{L^4}^2 \| B \|_{L^2} + \frac{I_6 (I_8)^\frac{1}{4}}{m_j^\frac{1}{4}} \| B \|_{L^2}^2 \| V \|_{L^4} + \frac{(I_{16/3})^\frac{3}{2}}{m_j^\frac{1}{2}} \| B \|_{L^2}^3\\
& + m_j I_7 \| \bA \|_{L^3} \| V \|_{L^3}^2 + \frac{(I_4)^\frac{1}{4} I_6}{m_j^\frac{1}{4}} \| B \|_{L^2}^2 \| \bA \|_{L^4} + \frac{I_8}{m_j} \| \bA \|_{L^3}^2 \| V \|_{L^3}\\
& + (m_j^2 - m_0^2) \Big( \frac{I_8}{m_j^2} \| \bA \|_{L^4}^2 \| B \|_{L^2} + \frac{I_7}{m_j} \| \bA \|_{L^3}^3 \Big) \bigg),
\end{align*}
for some universal constant $K$. 

\medskip

Combining with~\eqref{eq:6th-final},~\eqref{eq:5th-final},~\eqref{eq:4th-final},~\eqref{eq:1st-final} and~\eqref{eq:2nd-final}, we obtain~\eqref{eq:thekey}, provided that $\bA$ is in $L^1(\R^3, \R^3) \cap H^1(\R^3, \R^3)$. This concludes the proof of Proposition~\ref{prop:trace-class}.
\end{proof}

\section{Estimates involving the field energy}
\label{sec:estimates}

In Proposition~\ref{prop:trace-class} above we have shown that the operator 
$$\tr_{\C^4} T_{\bA} := \frac{1}{2} \sum_{j = 0}^2 c_j \, \tr_{\C^4} \big( |D_{m_j, 0}| - |D_{m_j, \bA}| \big),$$
is trace-class when $\bA$ decays fast enough. More precisely, in the proof of Proposition~\ref{prop:trace-class}, we have written
\begin{equation}
\label{eq:dev-TA}
\begin{split}
T_{\bA} = \sum_{n = 1}^5 T_n(\bA) + T_6'(\bA) := & \frac{1}{4 \pi} \sum_{n = 1}^5 \int_\R \big( R_n(\omega, \bA) + R_n(- \omega, \bA) \big) \, d\omega\\
& + \frac{1}{4 \pi} \int_\R \big( R_6'(\omega, \bA) + R_6'(- \omega, \bA) \big) \, d\omega,
\end{split}
\end{equation}
with $R_n$ and $R_6'$ given by~\eqref{eq:def_R_n} and~\eqref{eq:def_R_6}, and we have proved that the operators $\tr_{\C^4} T_n(\bA)$ and $\tr_{\C^4} T_6'(\bA)$ are trace-class. However our estimates involve non gauge-invariant quantities (some $L^p$ norms of $\bA$) and they require that $\bA$ decays fast enough at infinity. 

In this section, we establish better bounds on these different terms. We are interested in having estimates which only involve the field $\bF = (- \nabla V, \curl A)$ through the norms $\| \nabla V \|_{L^2}$ and $\| \curl A \|_{L^2}$. Our simple estimate~\eqref{eq:6th-final} on the sixth order only depends on the field $\bF$. But we will also need to know that the sixth order is continuous, which will require some more work. For the other terms, we have to get the exact cancellations.

With these estimates at hand, it will be easy to show that $\boF_{\rm PV}$ can be uniquely extended to a continuous function on the Coulomb-gauge homogeneous Sobolev space $\Hdiv$, as stated in Theorem~\ref{thm:defF}, and which we do in the next section.

\begin{remark}
In the estimates of the previous section, it was not important that $\div A = 0$. We have to use this property now.
\end{remark}

\subsection{The odd orders vanish}
\label{sec:odd_orders}

The following lemma says that the trace of the odd order operators $\tr_{\C^4} T_1(\bA)$, $\tr_{\C^4} T_3(\bA)$ and $\tr_{\C^4} T_5(\bA)$ vanish. This well-known consequence of the charge-conjugation invariance is sometimes called Furry's theorem, see~\cite{Furry-37} and~\cite[Sec.4.1]{GreRei-08}.

\begin{lemma}[The odd orders vanish]
\label{lem:odd-zero}
For $\bA\in \Hdiv\cap L^1(\R^3,\R^4)$ and $n = 1, 3, 5$, we have
\begin{equation}
\label{eq:oddnul}
\tr \big( \tr_{\C^4} T_n(\bA) \big) = \frac{1}{4 \pi} \int_\R \tr \Big( \tr_{\C^4} \big( R_n(\omega, \bA) + R_n(- \omega, \bA) \big) \Big) \, d\omega = 0.
\end{equation}
\end{lemma}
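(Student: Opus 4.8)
The statement to prove is Lemma~\ref{lem:odd-zero}: that the odd-order operators $\tr_{\C^4}T_n(\bA)$ have vanishing trace for $n=1,3,5$. This is Furry's theorem, and the mechanism is charge-conjugation invariance. The plan is to exploit the charge-conjugation operator $\boC = i\bsbeta\bsalpha_2 K$ (with $K$ complex conjugation) acting on $L^2(\R^3,\C^4)$, which satisfies the key algebraic identities $\boC\,\bsalpha_k\,\boC^{-1} = \bsalpha_k$ (since $\bsalpha_k$ is real for $k=1,3$ and purely imaginary for $k=2$, together with $\bsbeta\bsalpha_2 = -\bsalpha_2\bsbeta$), $\boC\,\bsbeta\,\boC^{-1} = -\bsbeta$, and crucially $\boC\,(-i\nabla)\,\boC^{-1} = -i\nabla$ (conjugation flips $\nabla$ but so does the antilinearity, up to sign bookkeeping on momentum). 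The upshot is the relation
$$
\boC\, D_{m_j,0}\, \boC^{-1} = -D_{m_j,0}, \qquad \boC\,\big(\bsalpha\cdot A - V\big)\,\boC^{-1} = \big(\bsalpha\cdot A - V\big)^{t}\ \text{(or its negative, with a careful sign)},
$$
so that conjugating the resolvent $(D_{m_j,0}+i\omega)^{-1}$ by $\boC$ produces, up to transposition and a change $\omega\to\omega$, a factor $-(D_{m_j,0}-i\omega)^{-1}$.

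\textbf{Key steps.} First I would recall the explicit form of $R_n(\omega,\bA)$ from~\eqref{eq:def_R_n}: a product of $n$ interaction vertices $(\bsalpha\cdot A - V)$ alternating with free resolvents, with an extra $i\omega$ factor in front. Second, I would insert the identity $\boC^{-1}\boC$ between all factors and use cyclicity of the trace (the full trace, after taking $\tr_{\C^4}$; since $\boC$ is antilinear one uses $\tr(\boC X\boC^{-1}) = \overline{\tr X}$ and then pairs with the transpose, or more cleanly works with $\tr(X) = \tr(X^t)$ for the $4\times 4$ part). Third, the sign count: each of the $n+1$ resolvent factors contributes a sign flip $D_{m_j,0}\to -D_{m_j,0}$ and simultaneously $i\omega\to i\omega$ is untouched while the denominator sign can be absorbed, so that $R_n(\omega,\bA)$ transforms into $(-1)^{n}$ times (a rearrangement of) $R_n(-\omega,\bA)$ — one checks this by noting $(−D + i\omega)^{-1} = -(D - i\omega)^{-1}$, which converts the $\omega$-argument. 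Fourth, summing over the $\pm\omega$ pair as in~\eqref{eq:oddnul} and using the oddness $n=1,3,5$ gives $R_n(\omega,\bA)+R_n(-\omega,\bA) = -\big(R_n(\omega,\bA)+R_n(-\omega,\bA)\big)$ under the trace, forcing the integral to vanish. One must justify that all manipulations are legitimate inside $\tr$: this is fine because Proposition~\ref{prop:trace-class} (and its proof) already established $\tr_{\C^4}(R_n(\omega,\bA)+R_n(-\omega,\bA)) \in \gS_1$ with integrable $\gS_1$-norm, so the trace and the $\omega$-integral are absolutely convergent and Fubini applies.

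\textbf{Main obstacle.} The genuinely delicate point is the careful bookkeeping of signs and transposes coming from the \emph{antilinearity} of $\boC$: one cannot simply write $\tr(\boC X\boC^{-1}) = \tr(X)$, and one must track that complex conjugation sends $V\mapsto V$, $A\mapsto A$ (both real), sends $-i\nabla \mapsto +i\nabla$, and sends $i\omega\mapsto -i\omega$ — this last fact is actually what produces the crucial $\omega\to-\omega$ swap that makes the $\pm\omega$ symmetrization kill the term. An alternative, perhaps cleaner, route that avoids antilinearity altogether is to use the \emph{matrix transpose} symmetry directly: conjugating by the (linear) unitary $\mathscr U := \bsbeta\bsalpha_1\bsalpha_3$ one has $\mathscr U\bsalpha_k^{t}\mathscr U^{-1} = -\bsalpha_k$ and $\mathscr U\bsbeta^{t}\mathscr U^{-1} = -\bsbeta$, so that $\mathscr U D_{m_j,\bA}^{t}\mathscr U^{-1} = -D_{m_j,-\bA}$; combining with $\tr X = \tr X^{t}$ and the fact that the $\C^4$-trace of the relevant products picks out exactly the pieces that survive, one shows $\tr_{\C^4}$-traces of odd order in $\bA$ are odd under $\bA\to -\bA$ while being manifestly even (the $\C^4$-traced integrand is a function of $\bF$ only at the orders that matter, or one simply invokes that the whole functional $\boF_{\rm PV}$ is even — though that is what is partly being proved). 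I would present the $\boC$-conjugation argument as the main line, doing the sign/transpose accounting explicitly once, and remark that it is the standard Furry cancellation; the only real work is being honest about the antilinear conjugation and citing~\cite{HaiLewSol-07} or \cite[Sec.~4.1]{GreRei-08} for the underlying charge-conjugation symmetry $\mathscr C\mathbb H^{e\bA}\mathscr C^{-1} = \mathbb H^{-e\bA}$ already invoked in the introduction.
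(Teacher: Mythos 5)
You have the right high-level idea (Furry's theorem via charge-conjugation, just as the paper does), but the sign accounting for the anti-unitary operator $\boC\psi = i\bsbeta\bsalpha_2\overline\psi$ is wrong, and this breaks the cancellation mechanism you propose. Because $\boC$ is antilinear, it negates \emph{both} $D_{m_j,0}$ and the scalar $i\omega$, so $\boC(D_{m_j,0}+i\omega)\boC^{-1}=-D_{m_j,0}-i\omega=-(D_{m_j,0}+i\omega)$: the resolvent picks up an overall minus sign, but the $\omega$-argument is unchanged. (Your ``Key steps'' paragraph asserts $i\omega\mapsto i\omega$ untouched, and your ``Main obstacle'' paragraph asserts $i\omega\mapsto -i\omega$ without the accompanying $D\mapsto -D$; both are only half the computation and both wrongly conclude an $\omega$-reflection.) The correct relation, as the paper states, is $\boC\,R_n(\pm\omega,\bA)\,\boC^{-1}=(-1)^n R_n(\pm\omega,\bA)$ at the \emph{same} $\omega$. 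Consequently $R_n(\omega,\bA)+R_n(-\omega,\bA)$ is sent to $(-1)^n$ times \emph{itself}, not to a reshuffled copy at the flipped argument, so the $\pm\omega$ symmetrization does not by itself ``kill the term.''

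The step missing from your argument is how to extract information from an antilinear conjugation. The paper uses $\tr_{\C^4}(\boC T\boC^{-1})=\tr_{\C^4}\overline T$ and $\tr\big(\tr_{\C^4}\overline T\big)=\overline{\tr\big(\tr_{\C^4}T\big)}$, so that for odd $n$
\begin{equation*}
\tr\Big(\tr_{\C^4}\big(R_n(\omega,\bA)+R_n(-\omega,\bA)\big)\Big)=-\,\overline{\tr\Big(\tr_{\C^4}\big(R_n(\omega,\bA)+R_n(-\omega,\bA)\big)\Big)},
\end{equation*}
i.e.\ the trace is \emph{purely imaginary}. This alone does not yield $0$; one must add the separate observation that $\tr_{\C^4}T_n(\bA)$ is self-adjoint (equivalently, $R_n(\omega,\bA)^*$ is a cyclic rearrangement of $R_n(-\omega,\bA)$, so the trace of $\tr_{\C^4}(R_n(\omega)+R_n(-\omega))$ is \emph{real}), whence a real, purely imaginary number must be zero. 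You correctly flag that $\tr(\boC X\boC^{-1})=\tr(X)$ is false, but then sidestep the true identity $\tr(\boC X\boC^{-1})=\overline{\tr X}$ rather than exploiting it, and you never invoke the self-adjointness/reality fact that closes the argument. Your transpose-based ``alternative route'' is left unverified and, as you note yourself, is close to circular. Fix the $\boC$-conjugation computation, state that the trace is purely imaginary, and add the reality of $\tr(\tr_{\C^4}T_n(\bA))$ coming from self-adjointness, and the proof goes through.
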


\begin{proof}
Let $\boC \psi := i \beta \alpha_2 \overline{\psi}$ be the (anti-unitary) charge-conjugation operator. Since $\boC \, D_{m_j, 0} \, \boC^{- 1} = - D_{m_j, 0}$, we have
$$\boC \, \big( D_{m_j, 0} \pm i \omega \big)^{- 1} \, \boC^{- 1} = - \big( D_{m_j, 0} \pm i \omega \big)^{- 1}.$$
Similarly, since $A$ and $V$ are real-valued, we can write
$$\boC \, \bsalpha \cdot A \, \boC^{- 1} = \bsalpha \cdot A \quad {\rm and} \quad \boC \, V \, \boC^{- 1} = V,$$
so that
\begin{equation}
\label{lopez}
\boC \, R_n(\pm \omega, \bA) \, \boC^{- 1} = (- 1)^n R_n(\pm \omega, \bA).
\end{equation}

At this stage, we can compute
\begin{equation}
\label{gomis}
\tr_{\C^4} \big( \boC T \boC^{- 1} \big) = \tr_{\C^4} \overline{T},
\end{equation}
for any operator $T$ on $L^2(\R^3, \C^4)$. Here, $\overline{T}$ refers to the operator defined as
$$\overline{T}(f) := \overline{T(\overline{f})}.$$
When $\tr_{\C^4} T$ is trace-class, so is the operator $\tr_{\C^4} \overline{T}$, and its trace is equal to
\begin{equation}
\label{briand}
\tr \big( \tr_{\C^4} \overline{T} \big) = \overline{\tr(\tr_{\C^4} T)}.
\end{equation}
As a consequence, the operator $\tr_{\C^4}(\boC T \boC^{- 1})$ is trace-class, as soon as $T$ is trace-class, and its trace is the complex conjugate of the trace of $T$.

Finally, recall that we have established in the proof of Proposition~\ref{prop:trace-class} that the operators $\tr_{\C^4} (R_n(\omega, \bA) + R_n(- \omega, \bA))$ are trace-class for $n = 1, 3, 5$. Combining~\eqref{lopez} with~\eqref{gomis} and~\eqref{briand}, we obtain
\begin{align*}
\tr \Big( \tr_{\C^4} \big( R_n(\omega, \bA) + R_n(- & \omega, \bA) \big) \Big)\\
& = (- 1 )^n \overline{\tr \Big( \tr_{\C^4} \big( R_n(\omega, \bA) + R_n(- \omega, \bA) \big) \Big)}.
\end{align*}
We deduce that the quantity $\tr \big( \tr_{\C^4} (R_n(\omega, \bA) + R_n(- \omega, \bA)) \big)$ is purely imaginary when $n$ is odd, so that the trace of $\tr_{\C^4} T_n(\bA)$ is purely imaginary. Since the operator $\tr_{\C^4} T_n(\bA)$ is self-adjoint, its trace is necessarily equal to $0$. This gives Formula~\eqref{eq:oddnul}.
\end{proof}

\subsection{The second order term}
\label{sec:2ndorder}

We now compute exactly the second order term $T_2(\bA)$ appearing in the decomposition of $T_{\bA}$, assuming that $\bA$ belongs to $H^1(\R^3, \R^4)$ and $\div A = 0$. We will verify that it only depends on the electromagnetic fields $E := - \nabla V$ and $B := \curl A$. 

\begin{lemma}[Formula for the second order term]
\label{lem:2ndform}
For $\bA \in \Hdiv\cap L^2(\R^3, \R^4)$, we have
\begin{equation}
\label{eq:trT2A}
\tr \big( \tr_{\C^4} T_2(\bA) \big) = \frac{1}{8 \pi} \int_{\R^3} M(k) \big( |\widehat{B}(k)|^2 - |\widehat{E}(k)|^2 \big) \, dk:=\boF_2(\bF),
\end{equation}
where $M$ is the function defined in~\eqref{def:M} and $\bF=(E,B)$.
\end{lemma}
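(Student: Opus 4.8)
The plan is to build on the work already done in the proof of Proposition~\ref{prop:trace-class}. There we obtained $T_2(\bA)=\tfrac1{4\pi}\int_\R\big(R_2(\omega,\bA)+R_2(-\omega,\bA)\big)\,d\omega$ together with the identity~\eqref{eq:R2end}, i.e.\ $R_2(\omega,\bA)+R_2(-\omega,\bA)=-2\omega^2(\boR_{2,1}+\boR_{2,2})$, and we saw that $\tr_{\C^4}\boR_{2,2}=0$. Hence
\[
\tr\big(\tr_{\C^4}T_2(\bA)\big)=-\frac1{2\pi}\int_\R\omega^2\,\tr\big(\tr_{\C^4}\boR_{2,1}\big)\,d\omega ,
\]
the $\omega$--integral being absolutely convergent for $\bA\in\Hdiv\cap L^2(\R^3,\R^4)$ by~\eqref{eq:2nd-final}. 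The first step is then to compute $\tr_{\C^4}\boR_{2,1}$: expanding the square of $\{p,A-V\bsalpha\}_{\R^3}+B\cdot\bsSigma$ and using the elementary identities $\tr_{\C^4}\bsalpha_k=\tr_{\C^4}\bsbeta=\tr_{\C^4}\bsSigma_j=\tr_{\C^4}(\bsalpha_k\bsSigma_j)=\tr_{\C^4}(\bsbeta\bsSigma_j)=0$, $\tr_{\C^4}(\bsalpha_k\bsalpha_l)=\tr_{\C^4}(\bsSigma_k\bsSigma_l)=4\delta_{kl}$ and $\tr_{\C^4}\bsbeta^2=4$, one checks that the term $-2\,\bsalpha\cdot A\,V$ and all the mixed $V$--$A$ products drop out. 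Consequently $\tr_{\C^4}\boR_{2,1}$ splits as the sum of a purely \emph{electric} operator built from $V$ (through $V^2$, $m_j^2V^2$ and $\{p_k,V\}=iE_k+2Vp_k$) and a purely \emph{magnetic} operator built from $A$ (through $|A|^2$, $\{p,A\}_{\R^3}=2A\cdot p$, which uses $\div A=0$, and $B=\curl A$).

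Next I would pass to Fourier variables. Writing each resolvent $1/(p^2+m_j^2+\omega^2)$ as multiplication by $(|\ell|^2+m_j^2+\omega^2)^{-1}$ and each potential as a convolution, the operator trace turns $\tr(\tr_{\C^4}\boR_{2,1})$ into the sum of a quadratic expression in $\widehat V$ and one in $\widehat A$, of the form $\sum_{j=0}^2 c_j\int_{\R^3}\overline{\widehat V(k)}\,\Pi^{\rm el}_j(k,\omega)\,\widehat V(k)\,dk$ and its magnetic analogue, where $\Pi^{\rm el}_j$ is an explicit integral over an internal momentum $\ell\in\R^3$. Performing the remaining integration $-\tfrac1{2\pi}\int_\R\omega^2\,d\omega$, the pair $(\omega,\ell)$ plays the role of a Euclidean four-momentum, and the resulting four-dimensional integral --- logarithmically divergent for each fixed $j$ --- becomes convergent after summation against the $c_j$'s, thanks to the Pauli--Villars relations $\sum_jc_j=\sum_jc_jm_j^2=0$ from~\eqref{cond:c_i_thm}. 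Introducing a Feynman parameter $u\in(0,1)$ to merge the two propagator denominators, shifting $\ell$ and carrying out the radial integrals, the gauge-non-invariant contributions (the seagull $|A|^2$, the $V^2$ term, and the $A\cdot p$ and $Vp_k$ pieces) recombine --- as they must, since $\boF_{\rm PV}$, hence $T_2$, is gauge invariant --- and one is left exactly with
\[
\tr\big(\tr_{\C^4}T_2(\bA)\big)=\frac1{8\pi}\int_{\R^3}M(k)\big(|\widehat B(k)|^2-|\widehat E(k)|^2\big)\,dk ,
\]
with $M$ the function of~\eqref{def:M}; the relative minus sign between $|\widehat E|^2$ and $|\widehat B|^2$ originates from $\bsbeta^2=I$ as opposed to the $\bsSigma$--algebra, i.e.\ from the Minkowski structure of the one-loop vacuum polarization tensor.

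Finally, since both sides of the asserted identity are continuous on $\Hdiv\cap L^2(\R^3,\R^4)$ --- the right-hand side trivially as a bounded quadratic form in $\bF$, the left-hand side by the $\gS_1$--bound~\eqref{eq:2nd-final} --- it suffices to carry out this computation for $\bA$ in a dense subclass (for instance Schwartz vector potentials with $\div A=0$, where Proposition~\ref{prop:trace-class} applies and all the Fourier manipulations and uses of Fubini's theorem are manifestly legitimate), and then pass to the limit. The main obstacle is the loop-integral computation of the middle step: one has to keep track of every piece produced by the $\C^4$--trace, propagate the Pauli--Villars cancellations through both the $\omega$-- and the $\ell$--integrations so that nothing stays divergent, and check that the non--gauge--invariant terms cancel exactly, leaving the clean gauge-invariant form with the specific function $M(k)$. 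This is the classical one-loop vacuum polarization of QED~\cite{GreRei-08}, performed here with the Pauli--Villars regulators.
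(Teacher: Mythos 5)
Your proposal follows the paper's proof quite closely: the same starting point ($T_2=\frac1{4\pi}\int_\R(R_2(\omega,\bA)+R_2(-\omega,\bA))\,d\omega$ together with $\tr_{\C^4}\boR_{2,2}=0$), the same passage to momentum space, and the same use of a Schwinger/Feynman parameter and of the Pauli--Villars conditions~\eqref{cond:PV} to render the internal-momentum integrals finite. The one thing worth making explicit is that your ``gauge invariance forces the cancellation'' step is carried out in the paper by the Ward identities~\eqref{eq:Ward}, i.e.\ the change of variables $p\mapsto k-q$ in the internal-momentum integral (whose legitimacy is exactly what~\eqref{cond:PV} buys), followed by integrations by parts in $p$ and $\omega$; also, your closing density argument is unnecessary, since the bound~\eqref{eq:2nd-final} already gives $\tr_{\C^4}T_2(\bA)\in\gS_1$ directly for $\bA\in\Hdiv\cap L^2(\R^3,\R^4)$.
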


The formula~\eqref{def:M} for the function $M$ which is nothing but the dielectric response of Dirac's vacuum is well-known in the physical literature. The following proof is inspired of the calculations in~\cite[p. 280--282]{GreRei-08}.

\begin{proof}
In the course of the proof of Proposition~\ref{prop:trace-class}, we have shown that the operator $\tr_{\C^4} T_2(\bA)$ is trace-class when $\bA \in H^1(\R^3, \R^3)$ (see inequality~\eqref{eq:2nd-final}). As a consequence, its trace is well-defined and given by
\begin{equation}
\label{eq:tr}
\tr \big( \tr_{\C^4} T_2(\bA) \big) = \int_{\R^3} \widehat{\big( \tr_{\C^4} T_2(\bA) \big)}(p, p) \, dp.
\end{equation}
Here, $\widehat{\tr_{\C^4} T_2(\bA)}$ refers to the Fourier transform of the trace-class operator $\tr_{\C^4} T_2(\bA)$. Our convention for the Fourier transform of a trace-class operator $T$ is the following
$$\widehat{T}(p, q) := \frac{1}{(2 \pi)^3} \int_{\R^6} T(x, y) e^{- i p \cdot x} e^{i q \cdot y} \, dx \, dy.$$
In view of~\eqref{eq:R2end}, the operator $\tr_{\C^4} T_2(\bA)$ is given by
\begin{equation}
\label{eq:T2}
\begin{split}
\tr_{\C^4} & T_2(\bA)(p,p) = \frac{2}{\pi} \int_\R \sum_{j = 0}^2 c_j \frac{1}{p^2 + m_j^2 + \omega^2} \bigg( - \big( |A|^2 + V^2 \big) \frac{1}{p^2 + m_j^2 + \omega^2}\\
& + \Big( \big\{ p, A \big\}_{\R^3} \frac{1}{p^2 + m_j^2 + \omega^2} \Big)^2 + \sum_{k = 1}^3 B_k \frac{1}{p^2 + m_j^2 + \omega^2} B_k \frac{1}{p^2 + m_j^2 + \omega^2}\\
& + \sum_{k = 1}^3 \big\{ p_k, V \big\} \frac{1}{p^2 + m_j^2 + \omega^2} \big\{ p_k, V \big\} \frac{1}{p^2 + m_j^2 + \omega^2}\\
& + 4 m_j^2 \Big( V \frac{1}{p^2 + m_j^2 + \omega^2} \Big)^2 \bigg) \omega^2 \, d\omega.
\end{split}
\end{equation}
Since $\div A=0$, we deduce after a lengthy calculation that 
\begin{equation}
\label{eq:trT2}
\tr \big( \tr_{\C^4} T_2(\bA) \big) = \sum_{k = 1}^3 \boT_{2, k},
\end{equation}
where
\begin{equation}
\label{def:boT1}
\boT_{2, 1} := - \frac{1}{\sqrt{2} \pi^\frac{5}{2}} \int_{\R^4} \sum_{j = 0}^2 c_j \frac{\omega^2 \, d\omega \, dp}{(p^2 + m_j^2 + \omega^2)^2} \, \big( \widehat{|A|^2}(0) + \widehat{V^2}(0) \big),
\end{equation}
\begin{equation}
\label{def:boT2}
\begin{split}
\boT_{2, 2} &:= \frac{1}{\pi^4} \int_{\R^7} \sum_{j = 0}^2 c_j \frac{dk \, \omega^2 \, d\omega \, dp}{(p^2 + m_j^2 + \omega^2)^2 ((p - k)^2 + m_j^2 + \omega^2)} \times\\
& \qquad\qquad\qquad\times \Big( \big( p \cdot \widehat{A}(k) \big) \big( p \cdot \widehat{A}(- k) \big) + \big( p^2 + m_j^2 \big) |\widehat{V}(k)|^2 \Big)\\
& := \boT_{2, 2}(A) + \boT_{2, 2}(V),
\end{split}
\end{equation}
and
\begin{equation}
\label{def:boT3}
\boT_{2, 3} := \frac{1}{4 \pi^4} \int_{\R^7} \sum_{j = 0}^2 c_j \frac{k^2 |\widehat{A}(k)|^2 + (k^2 - 4 p \cdot k) |\widehat{V}(k)|^2}{(p^2 + m_j^2 + \omega^2)^2 ((p - k)^2 + m_j^2 + \omega^2)} \,dk \, \omega^2 \, d\omega \, dp.
\end{equation}
We next use the following \emph{Ward identities}~\cite{Ward-50}
\begin{equation}
\label{eq:Ward}
\begin{split}
\int_{\R^3} \sum_{j = 0}^2 c_j & \frac{p_m p_n \, dp}{(p^2 + m_j^2 + \omega^2)^2 ((p - k)^2 + m_j^2 + \omega^2)}\\
& = \int_{\R^3} \sum_{j = 0}^2 c_j \frac{(k_m - q_m) (k_n - q_n) \, dq}{((q - k)^2 + m_j^2 + \omega^2)^2 (q^2 + m_j^2 + \omega^2)},
\end{split}
\end{equation}
for all $(m, n) \in \{ 1, 2, 3 \}^2$ and all $k \in \R^3$. This equation is nothing else than a change of variables $p = k - q$, which makes perfect sense thanks to conditions~\eqref{cond:PV} which guarantee the convergence of the integral. Its importance is well-known in the Physics literature, see, e.g.,~\cite[Sec. 7.4]{PesSch-95}. Since $\div A = 0$, we infer that
\begin{align*}
\boT_{2, 2}(A) = - \frac{1}{4 \pi^4} & \int_{\R^4} \sum_{m = 1}^3 \sum_{n = 1}^3 \widehat{A_m}(k) \widehat{A_n}(- k) \, dk \, \omega^2 \, d\omega \times\\
\times \int_{\R^3} & \sum_{j = 0}^2 c_j p_m \partial_{p_n} \Big( \frac{1}{(p^2 + m_j^2 + \omega^2) ((p - k)^2 + m_j^2 + \omega^2)} \Big) \, dp.
\end{align*}
Integrating by parts, we are led to
\begin{equation}
\label{eq:boT2A}
\boT_{2, 2}(A) = \frac{1}{4 \pi^4} \int_{\R^7} \sum_{j = 0}^2 c_j \frac{|\widehat{A}(k)|^2 \, dk \, \omega^2 \, d\omega \, dp}{(p^2 + m_j^2 + \omega^2) ((p - k)^2 + m_j^2 + \omega^2)}.
\end{equation}
Similarly, we can compute
\begin{align*}
\boT_{2, 2}(V) = \frac{1}{\pi^4} \int_{\R^7} \sum_{j = 0}^2 c_j & \frac{|\widehat{V}(k)|^2 \, dk \, \omega^2 \, d\omega \, dp}{(p^2 + m_j^2 + \omega^2) ((p - k)^2 + m_j^2 + \omega^2)}\\
- & \frac{1}{\pi^4} \int_{\R^7} \sum_{j = 0}^2 c_j \frac{|\widehat{V}(k)|^2 \, dk \, \omega^4 \, d\omega \, dp}{(p^2 + m_j^2 + \omega^2)^2 ((p - k)^2 + m_j^2 + \omega^2)}.
\end{align*}
Integrating by parts with respect to $\omega$, one can check that
\begin{align*}
\int_{\R^4} \sum_{j = 0}^2 c_j & \frac{\omega^4 \, d\omega \, dp}{(p^2 + m_j^2 + \omega^2)^2 ((p - k)^2 + m_j^2 + \omega^2)}\\
& = \frac{3}{4} \int_{\R^4} \sum_{j = 0}^2 c_j \frac{\omega^2 \, d\omega \, dp}{(p^2 + m_j^2 + \omega^2) ((p - k)^2 + m_j^2 + \omega^2)}, 
\end{align*}
so that
\begin{equation}
\label{eq:boT2V}
\boT_{2, 2}(V) = \frac{1}{4 \pi^4} \int_{\R^7} \sum_{j = 0}^2 c_j \frac{|\widehat{V}(k)|^2 \, dk \, \omega^2 \, d\omega \, dp}{(p^2 + m_j^2 + \omega^2) ((p - k)^2 + m_j^2 + \omega^2)}.
\end{equation}
On the other hand, since $A$ and $V$ are real-valued, we have
$$\widehat{|A|^2}(0) + \widehat{V^2}(0) = \frac{1}{(2 \pi)^\frac{3}{2}} \int_{\R^3} \big( |\widehat{A}(k)|^2 + |\widehat{V}(k)|^2 \big) \, dk,$$
hence
$$\boT_{2, 1} = - \frac{1}{4 \pi^4} \int_{\R^7} \sum_{j = 0}^2 c_j \frac{|\widehat{A}(k)|^2 + |\widehat{V}(k)|^2}{(p^2 + m_j^2 + \omega^2)^2} \, \omega^2 \, d\omega \, dk \, dp.$$
Combining with~\eqref{eq:boT2A} and~\eqref{eq:boT2V}, we arrive at
$$\boT_{2, 1} + \boT_{2, 2} = \frac{1}{4 \pi^4} \int_{\R^7} \sum_{j = 0}^2 c_j \frac{\big( 2 p \cdot k - |k|^2 \big) \big( |\widehat{A}(k)|^2 + |\widehat{V}(k)|^2 \big)}{(p^2 + m_j^2 + \omega^2)^2 ((p - k)^2 + m_j^2 + \omega^2)} \, \omega^2 \, d\omega \, dk \, dp.$$
In view of~\eqref{eq:trT2} and~\eqref{def:boT3}, this provides
\begin{equation}
\label{def:G}
\tr \big( \tr_{\C^4} T_2(A) \big) = \int_{\R^3} G(k) \big( |\widehat{A}(k)|^2 - |\widehat{V}(k)|^2 \big) \, dk,
\end{equation}
where
$$G(k) := \frac{1}{2 \pi^4} \int_{\R^4} \sum_{j = 0}^2 c_j \frac{p \cdot k \, \omega^2 \, d\omega \, dp}{(p^2 + m_j^2 + \omega^2)^2 ((p - k)^2 + m_j^2 + \omega^2)}.$$
We next use the identity
$$\frac{1}{a^2 b} = \int_0^1 \bigg( \int_0^\infty s^2 e^{- s(u a + (1 - u) b)} \, ds \bigg) u \, du,$$
see~\cite[Chap. 5]{GreRei-08}, to rewrite
\begin{equation}
\label{eq:G1}
\begin{split}
G(k) = \frac{1}{2 \pi^4} & \int_{\R^4} \sum_{j = 0}^2 c_j \, \omega^2 \, d\omega \, p \cdot k \, dp \times\\
& \times \bigg( \int_0^1 \int_0^\infty e^{- s (p^2 + m_j^2 + \omega^2) - s(1 - u)(k^2 - 2 p \cdot k)} \, s^2 \, ds \, u \, du \bigg).
\end{split}
\end{equation}
Using conditions~\eqref{cond:PV}, we can invoke Fubini's theorem to recombine the integrals in~\eqref{eq:G1} as
\begin{align*}
G(k) = \frac{1}{2 \pi^4} \int_0^1 \int_0^\infty \sum_{j = 0}^2 c_j & \, e^{- s (m_j^2 + (1 - u) k^2)} s^2 \, ds \, u \, du \times\\
& \times \int_{\R^3} p \cdot k \, e^{- s (p^2 - 2 (1 - u) p \cdot k)} \bigg( \int_\R e^{- s \omega^2} \omega^2 \, d\omega \bigg) dp.
\end{align*}
Since
$$\int_\R e^{- s \omega^2} \omega^2 \, d\omega = \frac{\sqrt{\pi}}{2 s^\frac{3}{2}},$$
and
\begin{align*}
\int_{\R^3} p \cdot k \, e^{- s (p^2 - 2 (1 - u) p \cdot k)} \, dp = & k \cdot \nabla \bigg( \int_{\R^3} e^{p \cdot x - s p^2 - 2 s (1 - u) p \cdot k} \, dp \bigg)_{| x = 0}\\
= & \Big( \frac{\pi}{s} \Big)^\frac{3}{2} (1 - u) k^2 e^{s (1 - u)^2 k^2},
\end{align*}
we deduce that
\begin{equation}
\label{eq:G2}
G(k) = \frac{k^2}{4 \pi^2} \int_0^1 \int_0^\infty \sum_{j = 0}^2 c_j \, e^{- s (m_j^2 + u (1 - u) k^2)} s^{- 1} \, ds \, u (1 -u) \, du.
\end{equation}
Integrating by parts, we now compute
\begin{align*}
\int_0^\infty \sum_{j = 0}^2 c_j \, & e^{- s (m_j^2 + u (1 - u) k^2)} s^{- 1} \, ds\\
& = \int_0^\infty \sum_{j = 0}^2 c_j \, \log(s) e^{- s (m_j^2 + u (1 - u) k^2)} \, (m_j^2 + u (1 - u) k^2) \, ds,
\end{align*}
which is justified again thanks to conditions~\eqref{cond:PV}. Letting $\sigma = s (m_j^2 + u (1 - u) k^2)$, we infer again from~\eqref{cond:PV} that
\begin{align*}
\int_0^\infty \sum_{j = 0}^2 c_j e^{- s (m_j^2 + u (1 - u) k^2)} s^{- 1} \, ds = & \int_0^\infty \sum_{j = 0}^2 c_j \log \Big( \frac{\sigma}{m_j^2 + u (1 - u) k^2} \Big) e^{- \sigma} \, d\sigma\\
= & - \sum_{j = 0}^2 c_j \, \log(m_j^2 + u (1 - u) k^2).
\end{align*}
Inserting into~\eqref{eq:G2}, we get
$$G(k) = - \frac{k^2}{4 \pi^2} \int_0^1 \sum_{j = 0}^2 c_j \, u (1 -u) \, \log(m_j^2 + u (1 - u) k^2) \, du = \frac{k^2}{8 \pi} M(k).$$
Combining with~\eqref{def:G}, we obtain Formula~\eqref{eq:trT2A}.
\end{proof}

We complete our analysis of the second order term by giving the main properties of the function $M$.

\begin{lemma}[Main properties of $M$]
\label{lem:prop_M}
Assume that $c_j$ and $m_j$ satisfy~\eqref{cond:c_i_thm}. The function $M$ given by~\eqref{def:M} is well-defined and positive on $\R^3$, and satisfies
$$0 < M(k) \leq M(0) = \frac{2 \, \log(\Lambda)}{3 \pi},$$
where $\Lambda$ is defined by~\eqref{def:Lambda}. Moreover,
$$\frac{2 \, \log(\Lambda)}{3 \pi} - M(k) \to \frac{|k|^2}{4 \pi} \int_0^1 \frac{z^2 - z^4/3}{1 + |k|^2 (1 - z^2)/4} \, dz,$$
when $m_1 \to \infty$ and $m_2 \to \infty$.
\end{lemma}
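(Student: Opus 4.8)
The plan is to reduce the whole statement to the behaviour of the single auxiliary function
\[
h(a) := \sum_{j=0}^2 c_j \log\big(m_j^2 + a\big), \qquad a \ge 0,
\]
so that the change of variable $a = u(1-u)|k|^2$ turns \eqref{def:M} into
\[
M(k) = -\frac{2}{\pi}\int_0^1 u(1-u)\,h\big(u(1-u)|k|^2\big)\,du .
\]
For fixed admissible masses the integrand above is bounded and continuous on $[0,1]$, so $M$ is well defined; and since $h(0)=\sum_j c_j\log(m_j^2)=-\log(\Lambda^2)$ by \eqref{def:Lambda} and $\int_0^1 u(1-u)\,du=\tfrac16$, we get $M(0)=-h(0)/(3\pi)=2\log(\Lambda)/(3\pi)$.

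The core of the proof is the claim that $h$ is strictly increasing on $[0,\infty)$ with $h(a)\to 0$ as $a\to\infty$; then $h(a)=-\int_a^\infty h'(s)\,ds<0$ for every $a\ge 0$. Differentiating, $h'(a)=\sum_{j=0}^2 c_j/(m_j^2+a)$; inserting conditions \eqref{cond:c_i_thm} (namely $\sum_j c_j=\sum_j c_jm_j^2=0$) into the large-$a$ expansion of this rational function shows it decays like $a^{-3}$, so that, written over the common denominator $\prod_j(m_j^2+a)$, its numerator is forced to be a constant; computing the residue at $a=-m_0^2$ gives
\[
h'(a) = \sum_{j=0}^2 \frac{c_j}{m_j^2+a}
= \frac{(m_1^2-m_0^2)(m_2^2-m_0^2)}{(m_0^2+a)(m_1^2+a)(m_2^2+a)} > 0
\]
for all $a\ge 0$ (the numerator being positive because $m_2>m_1>m_0$; one may instead just plug in the explicit values \eqref{cond:coef}), and $h(a)\to 0$ follows once more from $\sum_j c_j=0$. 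With $h<0$ secured, positivity $M(k)>0$ is immediate from the integral formula above since $u(1-u)\ge 0$; and for the sharp bound I would observe that $M$ is radial, $M(k)=\mu(|k|^2)$ with $\mu'(t)=-\tfrac{2}{\pi}\int_0^1 u^2(1-u)^2\,h'(u(1-u)t)\,du<0$, so $\mu$ strictly decreases and $M(k)\le M(0)$, with equality only at $k=0$.

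For the Uehling limit I would start from
\[
\frac{2\log\Lambda}{3\pi}-M(k) = M(0)-M(k)
= \frac{2}{\pi}\sum_{j=0}^2 c_j \int_0^1 u(1-u)\,\log\!\Big(1+\frac{u(1-u)|k|^2}{m_j^2}\Big)\,du .
\]
Letting $m_1,m_2\to\infty$ with $c_1,c_2$ bounded (as recalled after \eqref{def:Lambda}), the $j=1,2$ summands tend to $0$ by the elementary bound $0\le\log(1+x)\le x$, so only the $j=0$ term survives. In that term the substitution $z=1-2u$ gives $u(1-u)=(1-z^2)/4$ and, by evenness in $z$, turns the integral into $\tfrac{2}{\pi}\int_0^1 \tfrac{1-z^2}{4}\log\big(1+\tfrac{(1-z^2)|k|^2}{4m_0^2}\big)\,dz$; integrating by parts with the primitive $\tfrac14(z-z^3/3)$ of $\tfrac{1-z^2}{4}$ kills both boundary contributions (the logarithm vanishes at $z=1$, the primitive at $z=0$) and leaves exactly the function $U$ of \eqref{eq:limitM}, upon normalising the electron mass $m_0=1$ (or keeping the harmless $m_0^2$ in the denominator). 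The only genuinely delicate point is the partial-fraction identity for $h'$: once the sign of $h$ is in hand, the positivity, the uniform bound and the limit are all routine one-variable calculus.
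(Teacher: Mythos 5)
Your proposal is correct and follows essentially the same route as the paper: you introduce the same auxiliary function ($h=\Phi$), verify the same partial-fraction identity for $h'$ to get strict monotonicity from $h(0)=-2\log\Lambda$ to $h(\infty)=0$, and obtain the Uehling limit by isolating the $j=0$ term and then substituting $z=1-2u$ and integrating by parts. The only cosmetic differences are that you deduce $M(k)\le M(0)$ from monotonicity of $\mu(t)=M$ at $|k|^2=t$ rather than from the direct bound $\Phi(t)>-2\log\Lambda$, and that you substitute before integrating by parts rather than after; both variants are equally valid.
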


Most of the above properties of $M$ are well-known in the Physics literature, see for instance~\cite[Sec. 5.2]{GreRei-08}. The positivity of $M(k)$ for all $k$, which is crucial for our study of the nonlinear Lagrangian action, does not seem to have been remarked before.

\begin{proof}
In view of~\eqref{def:M}, the function $M$ is well-defined on $\R^3$. Concerning its positivity, we set
$$\Phi(t) := \sum_{j = 0}^2 c_j \, \log(m_j^2 + t),$$
for all $t \geq 0$. Using~\eqref{cond:coef}, we compute
$$\Phi'(t) = \sum_{j = 0}^2 \frac{c_j}{m_j^2 + t} = \frac{(m_1^2 - m_0^2) (m_2^2 - m_0^2)}{(m_0^2 + t)(m_1^2 + t)(m_2^2 + t)} > 0.$$
Since $\Phi(0) = - 2 \log \Lambda < 0$ and
$$\Phi(t) = \sum_{j = 0}^2 c_j \, \log \Big(1 + \frac{m_j^2}{t} \Big) \to 0, \ {\rm as} \ t \to \infty,$$
by~\eqref{cond:PV}, we deduce that
$$- 2 \log \Lambda < \Phi(t) < 0,$$
for all $t > 0$. Inserting into~\eqref{def:M}, we obtain~\eqref{est:M}.

As for~\eqref{eq:limitM}, we first write
\begin{align*}
M(k) = & - \frac{2}{\pi} \int_0^1 \sum_{j = 0}^2 c_j \, u (1 -u) \, \Big( \log(m_j^2) + \log \Big( 1 + \frac{u (1 - u) k^2}{m_j^2} \Big) \Big) \, du\\
= & \frac{2 \, \log \Lambda}{3 \pi} - \frac{2}{\pi} \int_0^1 \sum_{j = 0}^2 c_j \, u (1 -u) \, \log \Big( 1 + \frac{u (1 - u) k^2}{m_j^2} \Big) \, du.
\end{align*}
When $m_1 \to \infty$ and $m_2 \to \infty$, we infer that
$$\frac{2 \, \log \Lambda}{3 \pi} - M(k) \to \frac{2}{\pi} \int_0^1 u (1 -u) \, \log \big( 1 + u (1 - u) k^2 \big) \, du.$$
Integrating by parts, we compute
$$\int_0^1 u (1 -u) \, \log \big( 1 + u (1 - u) k^2 \big) \, du = - \int_0^1 \frac{(\frac{u^2}{2} - \frac{u^3}{3})(1 - 2 u) \, du}{1 + u(1 - u) k^2},$$
so that it only remains to set $z = 1 - 2 u$ to derive~\eqref{eq:limitM}. This concludes the proof of Lemma~\ref{lem:prop_M}.
\end{proof}

\subsection{The fourth order term}
\label{sec:4thorder}

Our goal is now to provide an estimate on the fourth order term $T_4(\bA)$. We have estimated this term in~\eqref{eq:4th-final}, and we know that $T_4(\bA)$ is trace-class when $\bA \in L^4(\R^3, \R^4)$. Here, we want to get an estimate involving only the norm of $\bA$ in $\Hdiv$.

\begin{lemma}[Estimate for the fourth order term]
\label{lem:4thterm}
Let $\bA = (A, V) \in L^4(\R^3, \R^4) \cap \Hdiv$ and set $B := \curl A$ and $E := - \nabla V$. There exists a universal constant $K$ such that
\begin{equation}
\label{eq:trT4A}
\big| \tr \big( \tr_{\C^4} T_4(\bA) \big) \big| = \big| \tr T_4(\bA) \big| \leq K \bigg( \sum_{j = 0}^2 \frac{|c_j|}{m_j} \bigg) \big( \| B \|_{L^2} + \| E \|_{L^2} \big)^4.
\end{equation}
\end{lemma}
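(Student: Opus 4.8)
The starting point is the representation established in Section~\ref{sec:atrace}: $T_4(\bA) = \frac{1}{4\pi}\int_\R\big(R_4(\omega,\bA) + R_4(-\omega,\bA)\big)\,d\omega$, with $R_4$ given by~\eqref{eq:def_R_n}, which is valid here for $\bA \in L^4(\R^3,\R^4)\cap\Hdiv$ (the general case then follows from the continuity arguments used for Theorem~\ref{thm:defF}(ii)). The elementary bound~\eqref{eq:4th-final} is useless for the present purpose: it controls $T_4$ only through $\|\bA\|_{L^4}$, which is \emph{not} dominated by $\|\bF\|_{L^2}$, and with a constant growing like $m_j/m_0$. The plan is to trade those $L^4$ norms for a mixture of $L^2$ norms of the fields $E=-\nabla V$ and $B=\curl A$ (finite by hypothesis) and $L^6$ norms of the potentials, the latter being controlled by the field energy via Sobolev's inequality and the Coulomb gauge: $\|A\|_{L^6}\le S\|\curl A\|_{L^2}$ and $\|V\|_{L^6}\le S\|\nabla V\|_{L^2}$.

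As in the lower-order terms, I would first symmetrise in $\omega$, writing
\begin{equation*}
R_4(\omega,\bA) + R_4(-\omega,\bA) = 2\omega^2\sum_{j=0}^2 c_j\sum_{k=0}^4\Big(\frac{1}{D_{m_j,0}-i\omega}\big(\bsalpha\cdot A - V\big)\Big)^k\frac{1}{D_{m_j,0}^2+\omega^2}\Big(\big(\bsalpha\cdot A - V\big)\frac{1}{D_{m_j,0}+i\omega}\Big)^{4-k},
\end{equation*}
which produces the prefactor $2\omega^2$ and one central scalar propagator $(D_{m_j,0}^2+\omega^2)^{-1}=(p^2+m_j^2+\omega^2)^{-1}$. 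Then, writing every remaining resolvent as $\frac{1}{D_{m_j,0}\pm i\omega} = (D_{m_j,0}\mp i\omega)(p^2+m_j^2+\omega^2)^{-1}$, I would push each operator $D_{m_j,0}$ onto a neighbouring potential and expand it into its anti-commutator and commutator with $\bsalpha\cdot A - V$; using the Clifford identity~\eqref{eq:prodalpha}, formula~\eqref{eq:antiDA}, the Coulomb gauge $\div A=0$, and the elementary relation $\{p_k,V\}=2p_kV-iE_k$, each of the four potential slots is turned into one of the following: a bare potential; a full field factor $E$ or $B\cdot\bsSigma$; a factor in which $A$ or $V$ carries an extra momentum (absorbed by an adjacent propagator); or a $\bsbeta$-factor proportional to a mass $m_j$.

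The delicate point, and the one I expect to be the main obstacle, is the bookkeeping of the auxiliary masses. Taking the $\C^4$-trace kills every term containing an odd number of $\bsbeta$'s, exactly as in~\eqref{eq:trboR12} and~\eqref{eq:trboR22}. The remaining $\bsbeta$-terms may still grow with $m_1,m_2$; to control them I would expand the propagators in the masses about $m_0$ as in~\eqref{eq:decomp_carres}, and invoke the Pauli-Villars conditions $\sum_j c_j = \sum_j c_j m_j^2 = 0$ from~\eqref{cond:c_i_thm} to annihilate the mass-independent and the $m_j^2$-proportional contributions. What survives carries a compensating gain of $(m_j^2-m_0^2)^2$ paid for by two extra propagators, precisely as in~\eqref{eq:R11atrace} for the first order; this gain makes the surviving terms decay in the masses and simultaneously restores integrability in $\omega$. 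The outcome is a finite sum of trace-class operators, in each of which at least one potential slot is a genuine field factor $E$ or $B$, and no factor grows in the auxiliary masses faster than the propagators decay.

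It then remains to estimate each such term by Hölder's inequality in the Schatten classes, in the form $\|T_0T_1T_2T_3T_4\|_{\gS_1}\le\|T_0\|_{\gS_2}\prod_{\ell=1}^3\|T_\ell\|_{\gS_6}$ (the leftover propagators and the uniformly bounded operators $(D_{m_j,0}\mp i\omega)(p^2+m_j^2+\omega^2)^{-1/2}$ being kept in operator norm). By the Kato-Seiler-Simon inequality~\eqref{eq:KSS}, the distinguished field factor dressed by one full propagator lies in $\gS_2$ with norm $\lesssim(m_j^2+\omega^2)^{-1/2}(\|E\|_{L^2}+\|B\|_{L^2})$, while each of the three remaining potential factors dressed by a half-propagator lies in $\gS_6$ with norm $\lesssim(m_j^2+\omega^2)^{-1/2}\|\bA\|_{L^6}\lesssim(m_j^2+\omega^2)^{-1/2}(\|E\|_{L^2}+\|B\|_{L^2})$; the surviving propagators together with $2\omega^2$ contribute the remaining negative power. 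Collecting powers gives
\begin{equation*}
\Big\|\tr_{\C^4}\big(R_4(\omega,\bA)+R_4(-\omega,\bA)\big)\Big\|_{\gS_1}\le K\sum_{j=0}^2|c_j|\,\frac{\omega^2}{(m_j^2+\omega^2)^2}\big(\|E\|_{L^2}+\|B\|_{L^2}\big)^4,
\end{equation*}
and, since $\int_\R\omega^2(m_j^2+\omega^2)^{-2}\,d\omega = \pi/(2m_j)$, integrating over $\omega$ yields exactly~\eqref{eq:trT4A}. The only truly nontrivial part is thus the organisation of the many terms produced by the resolvent/Clifford expansion so that every surviving term simultaneously has a derivative-improved potential factor (making the $\tfrac12+3\cdot\tfrac16=1$ Hölder split available) and no uncompensated growth in the auxiliary masses; the Kato-Seiler-Simon estimates themselves are routine.
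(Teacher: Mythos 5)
Your plan misses the crucial cancellation that makes the fourth order bounded by the field energy, and without it the method as described would stall on terms that the Pauli--Villars conditions simply do not remove.

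Concretely: after symmetrising in $\omega$, pushing the $D_{m_j,0}$'s onto the potentials, and using~\eqref{eq:antiDA}, the anti-commutator $\{\bsalpha\cdot A - V, D_{m_j,0}\}$ contains, besides the genuine field factors $B\cdot\bsSigma$ and (via $[p_k,V]$) $E$, the \emph{bare-potential} pieces $2A\cdot p$, $-2V\,\bsalpha\cdot p$, and $-2m_j V\bsbeta$. Products of four such bare pieces produce contributions whose trace reduces to $\int_{\R^3}V^4$, $\int_{\R^3}|A|^4$ and $\int_{\R^3}|A|^2V^2$ times convergent $(\omega,p)$-integrals. These are exactly the ``singular'' operators the paper isolates as $\boS_{4,1}(\bA)$, $\boS_{4,2}(\bA)$, $\boS_{4,3}(\bA)$ in~\eqref{def:boS41},~\eqref{def:boS42},~\eqref{def:boS43}. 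They are \emph{not} bounded by $\|\bF\|_{L^2}^4$: there is no embedding $\Hdiv\hookrightarrow L^4$. Your claim that ``at least one potential slot is a genuine field factor $E$ or $B$'' in every surviving term is therefore false; the Hölder split $\tfrac12+3\cdot\tfrac16=1$ is unavailable for precisely these terms. Furthermore, expanding the propagators about $m_0$ and invoking $\sum_j c_j=\sum_j c_j m_j^2=0$ makes the accompanying $(\omega,p)$-integrals finite, but it does not make them vanish and it does not convert a bare $V$ or $A$ into a field. The paper's proof instead verifies by an explicit computation the gauge-invariance identity $\tr\boS_{4,1}(\bA)-\tr\boS_{4,2}(\bA)+\tr\boS_{4,3}(\bA)=0$ (equation~\eqref{eq:key4}); this rests on an algebraic cancellation for the $|A|^4$ and $|A|^2V^2$ coefficients and on the $\omega$-integral identity
\begin{equation*}
\int_\R\Big(\frac{1}{(1+\omega^2)^3}-\frac{12}{(1+\omega^2)^4}+\frac{16}{(1+\omega^2)^5}\Big)\omega^2\,d\omega=0
\end{equation*}
for the $V^4$ coefficient. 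This is a structural cancellation between the three pieces of the decomposition $T_4=T_{4,1}-T_{4,2}+T_{4,3}$, not a Pauli--Villars cancellation. Without supplying it, your argument cannot close.

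Everything else you propose (symmetrisation, the Clifford algebra bookkeeping, the $\C^4$-trace to kill odd-$\bsbeta$ terms, commuting $p$'s to one side, Kato--Seiler--Simon estimates for the commutator-improved remainders, the $\int_\R\omega^2(m_j^2+\omega^2)^{-2}d\omega=\pi/(2m_j)$ integration to get the $\sum_j|c_j|/m_j$ factor) is consonant with the paper's argument. The missing ingredient is the one the paper flags as the whole point of the lemma: identify the non-gauge-invariant ``pure potential'' terms and prove they sum to zero, which the Pauli--Villars conditions alone cannot do.
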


\begin{proof}
Arguing as in the proof of Proposition~\ref{prop:trace-class} (see the proof of Formulas~\eqref{eq:R2end} and~\eqref{eq:R3end}), we decompose $T_4(\bA)$ as
\begin{equation}
\label{eq:T4}
T_4(\bA) = T_{4, 1}(\bA) - T_{4, 2}(\bA) + T_{4, 3}(\bA),
\end{equation}
where
\begin{equation}
\label{def:T41}
T_{4, 1}(\bA) := \frac{1}{2 \pi} \int_\R \sum_{j = 0}^2 c_j \, \omega^2 \, d\omega \, \frac{1}{p^2 + m_j^2 + \omega^2} \bigg( \boW_2 \, \frac{1}{p^2 + m_j^2 + \omega^2} \bigg)^2,
\end{equation}
\begin{equation}
\label{def:T42}
\begin{split}
T_{4, 2}(\bA) & := \frac{1}{2 \pi} \int_\R \sum_{j = 0}^2 c_j \, \omega^2 \, d\omega \, \frac{1}{p^2 + m_j^2 + \omega^2} \bigg( \boW_2 \, \frac{1}{p^2 + m_j^2 + \omega^2} \times\\
& \times \Big( \boW_1 \, \frac{1}{p^2 + m_j^2 + \omega^2} \Big)^2 + \Big( \boW_1 \, \frac{1}{p^2 + m_j^2 + \omega^2} \Big)^2 \boW_2 \, \frac{1}{p^2 + m_j^2 + \omega^2}\\
& + \boW_1 \, \frac{1}{p^2 + m_j^2 + \omega^2} \, \boW_2 \, \frac{1}{p^2 + m_j^2 + \omega^2} \, \boW_1 \, \frac{1}{p^2 + m_j^2 + \omega^2} \bigg), 
\end{split}
\end{equation}
and
\begin{equation}
\label{def:T43}
T_{4, 3}(\bA) := \frac{1}{2 \pi} \int_\R \sum_{j = 0}^2 c_j \, \omega^2 \, d\omega \, \frac{1}{p^2 + m_j^2 + \omega^2} \bigg( \boW_1 \, \frac{1}{p^2 + m_j^2 + \omega^2} \bigg)^4.
\end{equation}
Here, we have set for shortness,
$$\boW_1 := \big\{ p, A - V \bsalpha \big\}_{\R^3} + B \cdot \bsSigma - 2 m_j V \bsbeta,$$
and
$$\boW_2 := |A|^2 + V^2 - 2 \big( \bsalpha \cdot A \big) V.$$

Let us now explain our method to establish~\eqref{eq:trT4A}. When looking at $\boT_{4, k}$ with $k = 1, 2, 3$, we are worried about several terms. First the function $\boW_2$ does not decay too fast, it is only in $L^3(\R^3)$ if we only want to use the $L^6$ norm of $\bA$. Furthermore, it involves quantities which are not gauge invariant. Similarly, the term involving $p$ in $\boW_1$ involves non-gauge invariant quantities. On the other hand, the term involving $B$ is in $L^2(\R^3)$ and it is gauge invariant. The term involving $V$ alone is also not gauge invariant but it has the matrix $\bsbeta$ which will help us, and it has no $p$. Since the result should be gauge invariant, these terms cannot be a problem. They should not contribute to the total (fourth order) energy.

In order to see this, we use the following technique. In Formulas~\eqref{def:T41}--\eqref{def:T43}, we commute all the operators involving $p$ in order to place them either completely on the left or completely on the right. We have to commute the terms $(p^2 + m_j^2 + \omega^2)^{-1}$ as well as the $p$ appearing in $\boW_1$. We think that it does not matter how many terms we put on the left and on the right. The main point is to have some functions of $p$ on both sides (to get a trace-class operator under suitable assumptions on $\bA$). All the commutators obtained by these manipulations are better behaved and they will be estimated using the Kato-Seiler-Simon inequality~\eqref{eq:KSS}, only in terms of $\| \bA \|_{\Hdiv}$. 

In the end of the process, we will be left with a sum of terms of the form
$$\frac{|p|^c}{(p^2 + m_j^2 + \omega^2)^a} \, f(x) \, \frac{|p|^d}{(p^2 + m_j^2 + \omega^2)^b},$$
where $f(x)$ is $\boW_2^2$ or a product of $\boW_2$ with some of the functions appearing in $\boW_1$, or only these functions. For instance, when we take the trace, the worst term involving only $V$ is
\begin{align*}
\bigg( \int_{\R^3} V^4 \bigg) \bigg( \int_{\R^3} \frac{dp}{(p^2 + m_j^2 + \omega^2)^3} + & 3 \int_{\R^3} \frac{|p|^2 \, dp}{(p^2 + m_j^2 + \omega^2)^4}\\
+ & \int_{\R^3} \frac{|p|^4 \, dp}{(p^2 + m_j^2 + \omega^2)^5} \bigg).
\end{align*}
Here the integrals over $p$ come respectively from $\boT_{4, 1}$, $\boT_{4, 2}$ and $\boT_{4, 3}$ and they behave exactly like $(\omega^2 + m_j^2)^{- 5/2}$. So we run into problems when we want to multiply by $\omega^2$ and then integrate with respect to $\omega$. But this term cannot be a problem here because $\int_{\R^3} V^4$ is not a gauge invariant quantity. This is where the Pauli-Villars scheme helps us. Not only these integrals will become well-defined, but also their sum will simply vanish because the regularization was precisely designed to preserve gauge invariance.

But before we explain all this in details, let us indicate how to handle the multiple commutators that we get when commuting the operators involving $p$. We start with $T_{4, 1}$, for instance. Following the general strategy explained above, we write 
\begin{align*}
T_{4, 1}(\bA) = & \frac{1}{2 \pi} \int_\R \sum_{j = 0}^2 c_j \, \omega^2 \, d\omega \bigg( \frac{1}{p^2 + m_j^2 + \omega^2} \, (\boW_2)^2 \frac{1}{(p^2 + m_j^2 + \omega^2)^2}\\
& + \frac{1}{p^2 + m_j^2 + \omega^2} \, \boW_2 \, \Big[ \frac{1}{p^2 + m_j^2 + \omega^2}, \boW_2 \Big] \frac{1}{p^2 + m_j^2 + \omega^2}\bigg),
\end{align*}
where, as usual, $[S, T] : = S T - T S$. We notice that
$$\Big[ \boW_2, \frac{1}{p^2 + m_j^2 + \omega^2} \Big] = \frac{1}{p^2 + m_j^2 + \omega^2} \, \big[ p^2, \boW_2 \big] \, \frac{1}{p^2 + m_j^2 + \omega^2},$$
while
$$\big[ p^2, \boW_2 \big] = p \big[ p, \boW_2 \big] + \big[ p, \boW_2 \big] p = - i \big\{ p, \nabla \boW_2 \big\}_{\R^3}.$$
Hence, we have
\begin{multline}
\label{eq:T41}
T_{4, 1}(\bA) = \frac{1}{2 \pi} \int_\R \sum_{j = 0}^2 c_j \, \omega^2 \, d\omega \bigg( \frac{1}{p^2 + m_j^2 + \omega^2} \, (\boW_2)^2 \frac{1}{(p^2 + m_j^2 + \omega^2)^2}\\
+i \frac{1}{p^2 + m_j^2 + \omega^2} \, \boW_2 \, \frac{1}{p^2 + m_j^2 + \omega^2} \, \big\{ p, \nabla \boW_2 \big\}_{\R^3} \, \frac{1}{(p^2 + m_j^2 + \omega^2)^2} \bigg),
\end{multline}
where
$$\nabla \boW_2 = 2 A \cdot \nabla A + 2 V \, \nabla V - 2 V \, \big( \bsalpha \cdot \nabla A \big) - 2 \nabla V \, \big( \bsalpha \cdot A \big).$$
We then argue as in the proof of Proposition~\ref{prop:trace-class}. We use that $\boW_2\in L^3(\R^3)$, with
$$\| \boW_2 \|_{L^3} \leq K \| \bA \|_{L^6}^2 \leq K \| \bA \|_{\Hdiv}^2,$$
and that $\nabla \boW_2 \in L^\frac{3}{2}(\R^3)$, with
$$\| \nabla\boW_2 \|_{L^\frac{3}{2}} \leq K \| \bA \|_{L^6} \| \bA \|_{\Hdiv} \leq K \| \bA \|_{\Hdiv}^2,$$
by the Sobolev inequality. By the Kato-Seiler-Simon inequality~\eqref{eq:KSS}, we obtain for the term involving $p\cdot \nabla\boW_2$,
\begin{align*}
&\Big\| \frac{1}{p^2 + m_j^2 + \omega^2} \, \boW_2 \, \frac{1}{p^2 + m_j^2 + \omega^2}\,p \cdot \nabla \boW_2 \, \frac{1}{(p^2 + m_j^2 + \omega^2)^2}\Big\|_{\gS_1}\\
& \quad \leq K \Big\| \frac{1}{p^2 + m_j^2 + \omega^2} \, \boW_2 \Big\|_{\gS_3}\, \Big\| \frac{|p|}{p^2 + m_j^2 + \omega^2} |\nabla\boW_2|^\frac{1}{4} \Big\|_{\gS_6}\times\\
& \quad \quad \times \Big\| |\nabla\boW_2|^\frac{3}{4}\frac{1}{(p^2 + m_j^2 + \omega^2)^2} \Big\|_{\gS_2}\\
& \quad \leq \frac{K}{(m_j^2 + \omega^2)^2} \big\| \bA \big\|_{\Hdiv}^4,
\end{align*}
for some universal constant $K$. The argument is exactly the same for the term involving $\nabla \boW_2 \cdot p$ instead of $p \cdot \nabla \boW_2$. Therefore, we obtain the bound 
\begin{align*}
\int_\R \sum_{j = 0}^2 |c_j| \, \Big\| \frac{1}{(p^2 + m_j^2 + \omega^2)^3} \, \big\{ p, \nabla \boW_2 \big\}_{\R^3} \, & \frac{1}{p^2 + m_j^2 + \omega^2} \, \boW_2 \Big\|_{\gS_1} \omega^2 \, d\omega\\
\leq & K \bigg( \sum_{j = 0}^2 \frac{|c_j|}{m_j} \bigg) \big\|\bA \big\|_{\Hdiv}^4.
\end{align*}
In particular, we have shown that the operator $T_{4, 1}(\bA)$ can be written in the form
\begin{equation}
\label{eq:T41end}
T_{4, 1}(\bA) = \boT_{4, 1}(\bA) + \boS_{4, 1}(\bA),
\end{equation}
with
\begin{equation}
\label{eq:boT41}
\big\| \boT_{4, 1}(\bA) \big\|_{\gS_1} \leq K \bigg( \sum_{j = 0}^2 \frac{|c_j|}{m_j} \bigg) \big\|\bA \big\|_{\Hdiv}^4,
\end{equation}
and 
\begin{equation}
\label{def:boS41}
\boS_{4, 1}(\bA) := \frac{1}{2 \pi} \int_\R \sum_{j = 0}^2 c_j \, \frac{1}{p^2 + m_j^2 + \omega^2} \, (\boW_2)^2 \, \frac{1}{(p^2 + m_j^2 + \omega^2)^2} \, \omega^2 \, d\omega.
\end{equation}
By the Kato-Seiler-Simon inequality, this term is trace-class when $\bA \in L^4(\R^3, \R^4)$ and conditions~\eqref{cond:PV} are fulfilled. On the other hand, there is no evidence that the trace-class norm of $\boS_{4, 1}(\bA)$ can be bounded using only the norm $\| \nabla \bA \|_{L^2}$. Fortunately, this term will cancel with the other ones of the same type, as we will explain later.

Our strategy to handle the operators $T_{4, 2}(\bA)$ and $T_{4, 3}(\bA)$ follows exactly the same lines. We first simplify the expressions of $T_{4, 2}(\bA)$ and $T_{4, 3}(\bA)$ by discarding the terms containing the operator $B \cdot \bsSigma$. Concerning $T_{4, 3}(\bA)$, we can compute
\begin{align*}
& \Big\| \frac{1}{p^2 + m_j^2 + \omega^2} \Big( \boW_1 \, \frac{1}{p^2 + m_j^2 + \omega^2} \Big)^3 B \cdot \bsSigma \, \frac{1}{p^2 + m_j^2 + \omega^2} \Big\|_{\gS_1}\\
\leq & \frac{K}{(m_j^2 + \omega^2)^2} \, \| B \|_{L^2} \Big( \| \nabla \bA \|_{L^2}^3 + \frac{m_j^3}{(m_j^2 + \omega^2)^\frac{3}{2}} \| \nabla V \|_{L^2}^3 + \| B \|_{L^2}^3 \Big),
\end{align*}
so that
$$T_{4, 3}(\bA) = \boT_{4, 3}^a(\bA) + \frac{1}{2 \pi} \int_\R \sum_{j = 0}^2 \frac{c_j \, \omega^2 \, d\omega}{p^2 + m_j^2 + \omega^2} \bigg( \big( \boW_1 - B \cdot \bsSigma \big) \, \frac{1}{p^2 + m_j^2 + \omega^2} \bigg)^4,$$
with
\begin{equation}
\label{eq:boT43a}
\big\| \boT_{4, 3}^a(\bA) \big\|_{\gS_1} \leq K \bigg( \sum_{j = 0}^2 \frac{|c_j|}{m_j} \bigg) \big\|\bA \big\|_{\Hdiv}^4.
\end{equation}
We next commute, as above, the operator $\boW_1 - B \cdot \bsSigma$ with the operator $1/(p^2 + m_j^2 + \omega^2)$ in order to establish that
\begin{align*}
& \frac{1}{2 \pi} \int_\R \omega^2 \, d\omega \sum_{j = 0}^2 c_j \frac{1}{p^2 + m_j^2 + \omega^2} \bigg( \big( \boW_1 - B \cdot \bsSigma \big) \, \frac{1}{p^2 + m_j^2 + \omega^2} \bigg)^4\\
& = \boT_{4, 3}^b(\bA) +\frac{1}{2 \pi} \int_\R \omega^2 \, d\omega \sum_{j = 0}^2 c_j \times\\
& \times \frac{1}{(p^2 + m_j^2 + \omega^2)^4} \Big( \big\{ p, A - V \bsalpha \big\}_{\R^3} - 2 m_j V \bsbeta \Big)^4\frac{1}{p^2 + m_j^2 + \omega^2},
\end{align*}
where $\boT_{4, 3}^b(\bA)$ also satisfies~\eqref{eq:boT43a}. Finally, we use that 
$$\big\{ p, A - \bsalpha \cdot V \big\}_{\R^3} = 2 p \cdot \big( A - \bsalpha \cdot V \big) - i \bsalpha \cdot \nabla V,$$
as well as the anti-commutation formulas for the matrices $\bsalpha_k$ and $\bsbeta$, to obtain the formula
\begin{align*}
\frac{1}{2 \pi} \int_\R & \omega^2 \, d\omega \sum_{j = 0}^2 c_j \frac{1}{(p^2 + m_j^2 + \omega^2)^4} \times\\
& \times \Big( \big\{ p, A - V \bsalpha \big\}_{\R^3} - 2 m_j V \bsbeta \Big)^4\frac{1}{p^2 + m_j^2 + \omega^2} = \boT_{4, 3}^c(\bA) + \boS_{4, 3}(\bA),
\end{align*}
with $\boT_{4, 3}^c(\bA)$ satisfying again~\eqref{eq:boT43a}, and
\begin{equation}
\label{def:boS43}
\begin{split}
& \boS_{4, 3}(\bA) := \frac{8}{\pi} \int_\R\omega^2 \, d\omega \sum_{j = 0}^2 c_j\,\frac{1}{(p^2 + m_j^2 + \omega^2)^4} \bigg( (p^2 + m_j^2)^2 V^4\\
& \quad - 4 (p^2 + m_j^2) (m_j \bsbeta + p \cdot \bsalpha) (p \cdot A) V^3 \bsbeta + 6 (p^2 + m_j^2) \sum_{l = 1}^3 p_l \big( p \cdot A) A_\ell V^2\\
& \quad - 4 m_j \sum_{l = 1}^3 \sum_{m = 1}^3 p_l p_m (m_j \bsbeta + p \cdot \bsalpha) (p \cdot A) A_l A_m V\\
& \quad + \sum_{l = 1}^3 \sum_{m = 1}^3 \sum_{n = 1}^3 p_l p_m p_n (p \cdot A) A_l A_m A_n \bigg)\frac{1}{p^2 + m_j^2 + \omega^2}.
\end{split}
\end{equation}
The computation leading to this formula is tedious but elementary. 
In conclusion, setting $\boT_{4, 3}(\bA) := \boT_{4, 3}^a(\bA) + \boT_{4, 3}^b(\bA) + \boT_{4, 3}^c(\bA)$, we have established that
\begin{equation}
\label{eq:T43end}
T_{4, 3}(\bA) = \boT_{4, 3}(\bA) + \boS_{4, 3}(\bA), 
\end{equation}
where $\boT_{4, 3}(\bA)$ satisfies~\eqref{eq:boT43a}. Similarly, one can check that
\begin{equation}
\label{eq:T42end}
\begin{split}
T_{4, 2}(\bA) = \boT_{4, 2}(\bA) + \boS_{4, 2}(\bA),
\end{split}
\end{equation}
with
\begin{equation}
\label{eq:boT42}
\big\| \boT_{4, 2}(\bA) \big\|_{\gS_1} \leq K \bigg( \sum_{j = 0}^2 \frac{|c_j|}{m_j} \bigg) \big\|\bA \big\|_{\Hdiv}^4,
\end{equation}
and
\begin{equation}
\label{def:boS42}
\begin{split}
\boS_{4, 2}(\bA) & := \frac{2}{\pi} \int_\R\omega^2 \, d\omega \sum_{j = 0}^2 c_j \frac{1}{(p^2 + m_j^2 + \omega^2)^3} \bigg( (p^2 + m_j^2) \big( 3 |A|^2 + 3 V^2\\
& - 2 (\bsalpha \cdot A) \big) V^2 - 2 (p \cdot \bsalpha + m_j \bsbeta) (p \cdot A) \big( 3 |A|^2 + 5 V^2 \big) V\\
& + 3 \sum_{l = 1}^3 p_l (p \cdot A) A_l \big( |A|^2 + 5 V^2 - 2 (\bsalpha \cdot A) V \big) \bigg)\frac{1}{p^2 + m_j^2 + \omega^2}.
\end{split}
\end{equation}
Notice here again that the Kato-Seiler-Simon inequality implies that $\boS_{4, 2}(\bA)$ and $\boS_{4, 3}(\bA)$ are trace-class when $\bA \in L^4(\R^3, \R^4)$ and conditions~\eqref{cond:PV} are satisfied. Therefore, we always assume that $\bA \in L^4(\R^3, \R^4)$ to make our calculations meaningful.

The last step in the proof is to compute the traces of the singular operators $\boS_{4, 1}(\bA)$, $\boS_{4, 2}(\bA)$ and $\boS_{4, 3}(\bA)$ for $\bA \in L^4(\R^3,\R^4)$. As announced before we claim that
\begin{equation}
\label{eq:key4}
\tr \boS_{4, 1}(\bA) - \tr \boS_{4, 2}(\bA) + \tr \boS_{4, 3}(\bA) = 0,
\end{equation}
an identity which is enough to complete the proof of Lemma~\ref{lem:4thterm}. To prove this we could make up an abstract argument based on gauge invariance. However we have to be careful with the fact that even if we can freely exchange the trace with the integration over $\omega$, these only make sense \emph{after} we have taken the sum over the coefficients $c_j$. The order matters and this complicates the mathematical analysis. Instead, we calculate the sum explicitly and verify that it is equal to $0$.

A simple computation in Fourier space shows that
\begin{equation}
\label{eq:trS41}
\begin{split}
\tr \boS_{4, 1}(\bA) = \frac{2}{\pi (2 \pi)^\frac{3}{2}} \int_{\R^3} \, dp \int_\R \omega^2 \, d\omega & \bigg( \sum_{j = 0}^2 c_j \, \frac{1}{(p^2 + m_j^2 + \omega^2)^3} \bigg) \times\\
\times & \int_{\R^3} \big( |A|^4 + 6 |A|^2 V^2 + V^4 \big).
\end{split}
\end{equation}
Similarly, one can check that
\begin{align*}
\tr \boS_{4, 2}(\bA) = & \frac{8}{\pi (2 \pi)^\frac{3}{2}} \int_{\R^3} dp \int_\R \omega^2 \, d\omega \bigg( \sum_{j = 0}^2 c_j \, \frac{1}{(p^2 + m_j^2 + \omega^2)^4} \bigg) \times\\
& \times \bigg( 3 \big( p^2 + m_j^2 \big) \int_{\R^3} \big( |A|^2 V^2 + V^4 \big)\\
& \quad + 3 \sum_{l = 1}^3 \sum_{m = 1}^3 p_l p_m \int_{\R^3} \big( A_l A_m |A|^2 + 5 A_l A_m V^2 \big) \bigg).
\end{align*}
An integration by parts shows that 
$$\int_{\R^3} \sum_{j = 0}^2 c_j \, \frac{p_l \, p_m \, dp}{(p^2 + m_j^2 + \omega^2)^4} = \frac{\delta_{l, m}}{6} \int_{\R^3} \sum_{j = 0}^2 c_j \, \frac{dp}{(p^2 + m_j^2 + \omega^2)^3},$$
and we obtain
\begin{equation}
\label{eq:trS42}
\begin{split}
& \tr \boS_{4, 2}(\bA)\\
& = \frac{24}{\pi (2 \pi)^\frac{3}{2}} \int_{\R^3} dp \int_\R \omega^2 \, d\omega \, \bigg( \sum_{j = 0}^2 c_j \, \frac{p^2 + m_j^2}{(p^2 + m_j^2 + \omega^2)^4} \bigg) \int_{\R^3} \big( |A|^2 V^2 + V^4 \big)\\
& + \frac{4}{\pi (2 \pi)^\frac{3}{2}} \int_{\R^3} dp \int_\R \omega^2 \, d\omega \, \bigg( \sum_{j = 0}^2 c_j \, \frac{1}{(p^2 + m_j^2 + \omega^2)^3} \bigg) \int_{\R^3} \big( |A|^4 + 5 |A|^2 V^2 \big).
\end{split}
\end{equation}
Similar computations lead to the expression
\begin{equation}
\label{eq:trS43}
\begin{split}
\tr \boS_{4, 3}(\bA) & := \frac{32}{\pi (2 \pi)^\frac{3}{2}} \bigg( \int_\R \int_{\R^3} \sum_{j = 0}^2 c_j \, \frac{(p^2 + m_j^2)^2 \, dp \, \omega^2 \, d\omega}{(p^2 + m_j^2 + \omega^2)^5} \bigg) \int_{\R^3} V^4\\
+ \frac{8}{\pi (2 \pi)^\frac{3}{2}} & \bigg( \int_\R \int_{\R^3} \sum_{j = 0}^2 c_j \, \frac{dp \, \omega^2 \, d\omega}{(p^2 + m_j^2 + \omega^2)^3} \Big( 1 + \frac{3 (p^2 + m_j^2)}{p^2 + m_j^2 + \omega^2} \Big) \bigg) \int_{\R^3} |A|^2 V^2\\
+ \frac{2}{\pi (2 \pi)^\frac{3}{2}} & \bigg( \int_\R \int_{\R^3} \sum_{j = 0}^2 c_j \, \frac{dp \, \omega^2 \, d\omega}{(p^2 + m_j^2 + \omega^2)^3} \bigg) \int_{\R^3} |A|^4.
\end{split}
\end{equation}
In view of~\eqref{eq:trS41} and~\eqref{eq:trS42}, we obtain
\begin{align*}
\tr \boS_{4, 1}(\bA) - \tr \boS_{4, 2}(\bA) + \tr \boS_{4, 3}(\bA) = \frac{2}{\pi (2 \pi)^\frac{3}{2}} \bigg( \int_{\R^3} V^4 \bigg) \int_\R \omega^2 \, d\omega \int_{\R^3} dp \times\\
\times \sum_{j = 0}^2 c_j \, \bigg( \frac{1}{(p^2 + m_j^2 + \omega^2)^3} - 12 \frac{p^2 + m_j^2}{(p^2 + m_j^2 + \omega^2)^4} + 16 \frac{(p^2 + m_j^2)^2}{(p^2 + m_j^2 + \omega^2)^5} \bigg).
\end{align*}
A direct computation then shows that
$$\int_\R \bigg( \frac{1}{(1 + \omega^2)^3} - \frac{12}{(1 + \omega^2)^4} + \frac{16}{(1 + \omega^2)^5} \bigg) \, \omega^2 \, d\omega = 0.$$
This is enough to deduce~\eqref{eq:key4}, and complete the proof of Lemma~\ref{lem:4thterm}.
\end{proof}

\subsection{Regularity of the sixth order term}
\label{sub:6th-smooth}

In this section, we come back to the sixth order term studied in the proof of Proposition~\ref{prop:trace-class}. The sixth order term is defined as
\begin{equation}
\label{def:boR6-2}
\boR_6(\bA) = \frac{1}{4 \pi} \int_\R \tr \big( R_6'(\omega, \bA) + R_6'(- \omega, \bA) \big) \, d\omega,
\end{equation}
where
$$R_6'(\omega, \bA) := \sum_{j = 0}^2 c_j \, \frac{i \omega}{D_{m_j, \bA} + i \omega} \Big( \big( \bsalpha \cdot A - V \big) \frac{1}{D_{m_j, 0} + i \omega} \Big)^6.$$
We have shown that it is trace-class when $\bA \in \Hdiv$. We can indeed write estimate~\eqref{eq:6th-final} as
\begin{equation}
\label{eq:6th-final_bis}
\big\| \boR_6(\bA) \big\|_{\gS_1} \leq \frac1{2\pi}\int_\R \big\| R_6'(\omega, \bA) \big\|_{\gS_1} \, d\omega \leq K \bigg( \sum_{j = 0}^2 \frac{|c_j|}{m_j^2} \bigg) \big\| \bA \big\|_{\Hdiv}^6 .
\end{equation}
Here we want to prove that $\boR_6$ is actually smooth, under suitable assumptions on $\bA$. We first establish the continuity of $\boR_6$ through

\begin{lemma}[Continuity of the sixth order term]
\label{lem:6th-cont}
The functional $\boR_6$ is locally $\theta$--H\"older continuous on the space $\Hdiv$ for any $0 < \theta <1$.
\end{lemma}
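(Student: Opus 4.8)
The plan is to prove a quantitative local bound: for every $R>0$ and every $\theta\in(0,1)$ there is $C=C(R,\theta)$ such that $|\boR_6(\bA_1)-\boR_6(\bA_2)|\le C\,\|\bA_1-\bA_2\|_{\Hdiv}^{\theta}$ whenever $\bA_1,\bA_2$ lie in the ball $\boB(R):=\{\bA\in\Hdiv:\|\bA\|_{\Hdiv}\le R\}$. Since $|\tr T|\le\|T\|_{\gS_1}$, it is enough to estimate $\int_{\R}\|R_6'(\omega,\bA_1)-R_6'(\omega,\bA_2)\|_{\gS_1}\,d\omega$ (the term with $-\omega$ being treated identically). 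Write $W_k:=\bsalpha\cdot A_k-V_k$, $S_k:=i\omega(D_{m_j,\bA_k}+i\omega)^{-1}$ (so $\|S_k\|\le 1$) and $R_0:=(D_{m_j,0}+i\omega)^{-1}$, so that each summand of $R_6'$ is $c_jS_k(W_kR_0)^6$. I would split the difference as
$$S_1(W_1R_0)^6-S_2(W_2R_0)^6=S_2\big[(W_1R_0)^6-(W_2R_0)^6\big]+(S_1-S_2)(W_1R_0)^6.$$

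For the first piece I would use the telescoping identity $(W_1R_0)^6-(W_2R_0)^6=\sum_{\ell=1}^{6}(W_1R_0)^{\ell-1}(W_1-W_2)R_0(W_2R_0)^{6-\ell}$, the bound $\|S_2\|\le1$, H\"older in the Schatten classes, and the Kato--Seiler--Simon inequality with exponent $6$ together with the Sobolev embedding $\Hdiv\hookrightarrow L^6$ (so that only $\Hdiv$-norms enter). This gives a bound of order $(m_j^2+\omega^2)^{-3/2}\,\|\bA_1-\bA_2\|_{\Hdiv}(\|\bA_1\|_{\Hdiv}+\|\bA_2\|_{\Hdiv})^5$, integrable in $\omega$ over all of $\R$; this contribution is in fact Lipschitz in $\bA$.

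The second piece is the crux. By the resolvent identity $S_1-S_2=S_1\,(W_1-W_2)\,(D_{m_j,\bA_2}+i\omega)^{-1}$, and I would then apply one step of the free resolvent expansion $(D_{m_j,\bA_2}+i\omega)^{-1}=R_0+R_0W_2(D_{m_j,\bA_2}+i\omega)^{-1}$. This splits $(S_1-S_2)(W_1R_0)^6$ into a ``clean'' part $S_1(W_1-W_2)R_0(W_1R_0)^6$, which only involves free resolvents and is estimated exactly like the first piece (Lipschitz, integrable on $\R$), and a ``tail'' part $S_1(W_1-W_2)R_0\big[W_2(D_{m_j,\bA_2}+i\omega)^{-1}\big](W_1R_0)^6$. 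Here $(W_1-W_2)R_0\in\gS_6$ and $(W_1R_0)^6\in\gS_1$ with the usual decay; the delicate factor is the dressed resolvent block $W_2(D_{m_j,\bA_2}+i\omega)^{-1}$, which I would bound \emph{uniformly over} $\boB(R)$ by a Neumann-series argument: Kato--Seiler--Simon makes $\|W_2(D_{m_j,0}+iT)^{-1}\|_{\gS_6}<\tfrac12$ for $T=T_R$ large enough (uniformly in $\|\bA_2\|_{\Hdiv}\le R$ and in $j$), hence $I-W_2(D_{m_j,0}+iT_R)^{-1}$ is invertible with inverse of norm $\le2$, so $\|W_2(D_{m_j,\bA_2}+iT_R)^{-1}\|\le1$; writing $(D_{m_j,\bA_2}+i\omega)^{-1}=(D_{m_j,\bA_2}+iT_R)^{-1}\,(D_{m_j,\bA_2}+iT_R)(D_{m_j,\bA_2}+i\omega)^{-1}$ and using the spectral theorem gives $\|W_2(D_{m_j,\bA_2}+i\omega)^{-1}\|\le T_R/|\omega|$. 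Consequently the tail part has $\gS_1$-norm $\le C(R)\,|\omega|^{-1}(m_j^2+\omega^2)^{-7/4}\,\|\bA_1-\bA_2\|_{\Hdiv}$.

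Finally I would assemble the bounds by splitting $\int_{\R}d\omega=\int_{|\omega|\le\delta}+\int_{|\omega|>\delta}$ with $\delta=\|\bA_1-\bA_2\|_{\Hdiv}$. On $\{|\omega|\le\delta\}$ I would use the crude uniform estimate $\|R_6'(\omega,\bA_1)-R_6'(\omega,\bA_2)\|_{\gS_1}\le 2\sup_{\boB(R)}\|R_6'(\omega,\cdot)\|_{\gS_1}\le C(R)(m_0^2+\omega^2)^{-3/2}$ supplied by~\eqref{eq:R6atrace}, whose integral over $|\omega|\le\delta$ is $\le C(R)\,\delta$; on $\{|\omega|>\delta\}$ the Lipschitz parts contribute $\le C(R)\|\bA_1-\bA_2\|_{\Hdiv}$ and the tail contributes $\le C(R)\|\bA_1-\bA_2\|_{\Hdiv}\int_{|\omega|>\delta}|\omega|^{-1}(m_0^2+\omega^2)^{-7/4}d\omega\le C(R)\|\bA_1-\bA_2\|_{\Hdiv}\big(1+\log(1/\delta)\big)$. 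Adding up yields $\int_{\R}\|R_6'(\omega,\bA_1)-R_6'(\omega,\bA_2)\|_{\gS_1}d\omega\le C(R)\|\bA_1-\bA_2\|_{\Hdiv}\big(1+\log(1/\|\bA_1-\bA_2\|_{\Hdiv})\big)\le C(R,\theta)\|\bA_1-\bA_2\|_{\Hdiv}^{\theta}$ for every $\theta<1$. The main obstacle, and precisely the reason one obtains only H\"older rather than Lipschitz continuity, is this last point: a ball of $\Hdiv$ contains four-potentials $\bA_2$ for which $0\in\sigma(D_{m_j,\bA_2})$, so the dressed resolvent $(D_{m_j,\bA_2}+i\omega)^{-1}$ genuinely behaves like $1/|\omega|$ as $\omega\to0$, and the ensuing $\omega$-integration forces the logarithmic factor $\log(1/\|\bA_1-\bA_2\|_{\Hdiv})$, which rules out $\theta=1$ but nothing smaller.
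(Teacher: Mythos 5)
Your proof is correct and follows essentially the same route as the paper's: the telescoping identity gives the Lipschitz part, the resolvent-difference piece produces the $|\omega|^{-1}$ singularity controlled by a spectral-shift/Neumann-series argument, and the split of the $\omega$-integral at the scale $\|\bA_1-\bA_2\|_{\Hdiv}$ yields the logarithmic factor and hence $\theta$-H\"older continuity for all $\theta<1$. The only implementation difference is in how you tame the dressed resolvent: you apply one extra step of the free resolvent expansion to $(D_{m_j,\bA_2}+i\omega)^{-1}$ and use a $T_R$-shift uniform on the ball, whereas the paper factors $i\omega\big((D_{m_j,\bA}+i\omega)^{-1}-(D_{m_j,\bA'}+i\omega)^{-1}\big)$ through $D_{m_j,\bA}+i\mu$ with the $\bA$-dependent choice $\mu=4K^2\|\bA\|_{\Hdiv}^2$; both lead to the same $\min\{1,1/|\omega|\}$-type control. (A cosmetic point: your tail bound $|\omega|^{-1}(m_j^2+\omega^2)^{-7/4}$ should really be $\max\{1,T_R/|\omega|\}(m_j^2+\omega^2)^{-7/4}$ for $|\omega|>T_R$, but this changes nothing in the conclusion.)
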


\begin{proof}
We consider the difference $R_6'(\omega, \bA) - R_6'(\omega, \bA')$ for four-potentials $\bA$ and $\bA'$ in a given ball of $\Hdiv$, which we write as
\begin{equation}
\label{eq:dec-R6}
\begin{split}
R_6' & (\omega, \bA) - R_6'(\omega, \bA')\\
& = \sum_{j = 0}^2 c_j \bigg( i \omega \Big( \frac{1}{D_{m_j, \bA} + i \omega} - \frac{1}{D_{m_j, \bA'} + i \omega} \Big) \Big( \big( \bsalpha \cdot A - V \big) \frac{1}{D_{m_j, 0} + i \omega} \Big)^6\\
& + \sum_{k = 0}^5 \frac{i \omega}{D_{m_j, \bA'} + i \omega} \Big( \big( \bsalpha \cdot A' - V' \big) \frac{1}{D_{m_j, 0} + i \omega} \Big)^k \times\\
& \times \big( \bsalpha \cdot (A - A') - V + V' \big) \frac{1}{D_{m_j, 0} + i \omega} \Big( \big( \bsalpha \cdot A - V \big) \frac{1}{D_{m_j, 0} + i \omega}\Big)^{5 - k} \bigg).
\end{split}
\end{equation}
The five terms in the sum over the index $k$ can be estimated similarly as in the proof of~\eqref{eq:6th-final}. Their $\gS_1$--norms are bounded by a universal constant $K$ times
$$\sum_{j = 0}^2 \frac{|c_j|}{(m_j^2 + \omega^2)^\frac{3}{2}} \Big( \| \bA \|_{\Hdiv}^5 + \| \bA' \|_{\Hdiv}^5 \Big) \big\| \bA - \bA' \big\|_{\Hdiv}.$$
If we follow the same proof for the first term, we need an estimate on the operator norm
$$\bigg\| i \omega \Big( \frac{1}{D_{m_j, \bA} + i \omega} - \frac{1}{D_{m_j, \bA'} + i \omega} \Big) \bigg\|.$$
On one hand, we remark that $\| (D_{m_j, \bA} + i \omega)^{- 1} \| \leq 1/\omega$, so that
\begin{equation}
\label{eq:estim_difference_1}
\bigg\| i \omega \Big( \frac{1}{D_{m_j, \bA} + i \omega} - \frac{1}{D_{m_j, \bA'} + i \omega} \Big) \bigg\| \leq 2.
\end{equation}
On the other hand, we can use the resolvent formula to write
\begin{equation}
\label{eq:resolvent_expansion_A_A'}
\begin{split}
\frac{1}{D_{m_j, \bA} + i \omega} & - \frac{1}{D_{m_j, \bA'} + i \omega}\\
& = \frac{1}{D_{m_j, \bA} + i \omega} \big( \bsalpha \cdot (\bA - \bA') + V' - V \big) \frac{1}{D_{m_j, \bA'} + i \omega}.
\end{split}
\end{equation}
For small $\bA$, or small $\bA'$, we have no problem in estimating this term using that the spectrum of $D_{m_j, \bA}$ stays away from $0$ by Lemma~\ref{lem:spectre}, and that $(D_{m_j, \bA} + i \omega)^{- 1} (D_{m_j, 0} + i \omega)$ is bounded uniformly. The argument is essentially the same in the general case.
We decompose the expression in the right-hand side of~\eqref{eq:resolvent_expansion_A_A'} as
\begin{equation}
\label{eq:dec-resolvent}
\begin{split}
i \omega \bigg( & \frac{1}{D_{m_j, \bA} + i \omega} \big( \bsalpha \cdot (\bA - \bA') + V' - V \big) \frac{1}{D_{m_j, \bA'} + i \omega} \bigg)\\
& = \Big( \frac{1}{D_{m_j, \bA} + i \omega} \big( D_{m_j, \bA} + i \mu \big) \Big) \times \Big( \frac{1}{D_{m_j, \bA} + i \mu} \big( D_{m_j, 0} + i \mu \big) \Big) \times\\
& \times \Big( \frac{1}{D_{m_j, 0} + i \mu} \big( \bsalpha \cdot (\bA - \bA') + V' - V \big) \Big) \times \frac{i \omega}{D_{m_j, \bA'} + i \omega},
\end{split}
\end{equation}
for some positive number $\mu$. We check that
\begin{equation}
\label{eq:estim-resolvent1}
\Big\| \frac{1}{D_{m_j, \bA} + i \omega} \big( D_{m_j, \bA} + i \mu \big) \Big\| \leq \Big\| \frac{D_{m_j, \bA}}{D_{m_j, \bA} + i \omega} \Big\| + \Big\| \frac{i \mu}{D_{m_j, \bA} + i \omega} \Big\| \leq 1 + \frac{\mu}{|\omega|}.
\end{equation}
Setting $\mu := 4 K^2 \| \bA \|_{\Hdiv}^2$, we also remark that
\begin{equation}
\label{eq:estim_resolvent_A_free}
\begin{split}
\Big\| \frac{1}{D_{m_j, \bA} + i \mu} (D_{m_j, 0} + i \mu) \Big\| & = \Big\| \Big( 1 + \frac{1}{D_{m_j, 0} + i \mu} \, \big( V - \bsalpha \cdot A \big) \Big)^{-1} \Big\|\\
& \leq \sum_{n = 0}^\infty \Big\| \frac{1}{D_{m_j, 0} + i \mu} \, \big( V - \bsalpha \cdot A \big) \Big\|^n\\
& \leq \sum_{n = 0}^\infty \frac{K^n}{(m_j^2 + \mu^2)^\frac{n}{4}} \, \| \bA \|_{\Hdiv}^n \leq 2.
\end{split}
\end{equation}
Recalling that $(i \omega) \| (D_{m_j, \bA'} + i \omega)^{- 1} \| \leq 1$, we infer from~\eqref{eq:resolvent_expansion_A_A'},~\eqref{eq:dec-resolvent},~\eqref{eq:estim-resolvent1} and~\eqref{eq:estim_resolvent_A_free} that
\begin{equation}
\label{eq:estim_difference_2}
\begin{split}
\bigg\| (i \omega) \Big( \frac{1}{D_{m_j, \bA} + i \omega} & - \frac{1}{D_{m_j, \bA'} + i \omega} \Big) \bigg\|\\
\leq & \frac{K}{\big( m_j^2 + \mu^2 \big)^\frac{1}{4}} \Big( 1 + \frac{\mu}{|\omega|} \Big) \| \bA - \bA' \|_{\Hdiv}\\
\leq & K \Big( \frac{1}{\sqrt{m_j}} + \frac{1}{|\omega|} \| \bA \|_{\Hdiv} \Big) \big\| \bA - \bA' \big\|_{\Hdiv}.
\end{split}
\end{equation}
In this bound, we can replace $\bA$ by $\bA'$ by symmetry. Recall that we are in a given ball in $\Hdiv$, so that $\| \bA \|_{\Hdiv}$ is bounded by some constant.

Collecting estimates~\eqref{eq:estim_difference_1} and~\eqref{eq:estim_difference_2}, we have shown that
\begin{equation}
\label{eq:diff-holder}
\begin{split}
\int_\R & \| R_6'(\omega, \bA) - R_6'(\omega, \bA') \|_{\gS_1} \, d\omega\\
& \leq K \bigg( \sum_{j = 0}^2 \frac{|c_j|}{m_j^2} \bigg) \Big( \| \bA \|_{\Hdiv}^5 + \| \bA' \|_{\Hdiv}^5 \Big) \big\| \bA - \bA' \big\|_{\Hdiv}\\
& + K \bigg( \sum_{j = 0}^2 \frac{|c_j|}{m_j^2} \bigg) \Big( \| \bA \|_{\Hdiv}^6 + \| \bA' \|_{\Hdiv}^6 \Big) \boI(m_j, \bA, \bA'),
\end{split}
\end{equation}
where
\begin{align*}
\boI(m_j, \bA, \bA') := & \int_0^\infty \frac{d\omega}{(1 + \omega^2)^\frac{3}{2}} \times\\
\times & \min \Big\{ 2 , \, \Big( \frac{1}{\sqrt{m_j}} + \frac{1}{m_j \omega} \| \bA \|_{\Hdiv} \Big) \big\| \bA - \bA' \big\|_{\Hdiv} \Big\}.
\end{align*}
Assuming that $\| \bA - \bA' \|_{\Hdiv} \leq m_j/(\sqrt{m_j} + \| \bA \|_{\Hdiv})$ for any $j = 0, 1, 2$, we can estimate the integral $\boI(m_j, \bA, \bA')$ as
\begin{align*}
\big| \boI(m_j, \bA, \bA') \big| \leq & \frac{1}{\sqrt{m_j}} \bigg( \int_0^\infty \frac{d\omega}{(1 + \omega^2)^\frac{3}{2}} \bigg) \big\| \bA - \bA' \big\|_{\Hdiv}\\
& + 2 \boJ \Big( \frac{1}{2 m_j} \| \bA \|_{\Hdiv} \| \bA - \bA' \|_{\Hdiv}\Big), 
\end{align*}
with
$$\boJ(t) := \int_0^t \frac{d\omega}{(1 + \omega^2)^\frac{3}{2}} + t \int_t^\infty \frac{d\omega}{\omega (1 + \omega^2)^\frac{3}{2}}.$$
It remains to observe that
$$\boJ(t) \leq K t \big( 1 + |\log t| \big),$$
and to combine with~\eqref{def:boR6-2} and~\eqref{eq:diff-holder}, to conclude that the functional $\boR_6$ is locally $\theta$--H\"older for any $0 < \theta < 1$.
\end{proof}

We next turn to the differentiability of $\boR_6$.

\begin{lemma}[Regularity of the sixth order term]
\label{lem:6th-reg}
The functional $\boR_6$ is of class $\boC^\infty$ on the open subset $\boH$ of $\Hdiv$ containing all the four-potentials $\bA$ such that $0 \notin \sigma(D_{m_j, \bA})$ for each $j= 0, 1, 2$. Moreover, there exists a universal constant $K$ such that
\begin{equation}
\label{eq:borne_d2_R6}
\big\| {\rm d}^2 \boR_6(\bA) \big\| \leq K \sum_{j = 0}^2 \frac{|c_j|}{m_j^2} \Big( 1 + \frac{L_\bA^2}{m_j} \| \bA \|_{\Hdiv}^2 \Big) \big\| \bA \big\|_{\Hdiv}^4,
\end{equation}
where 
$$L_\bA := \max \big\{ \| (D_{m_j,\bA} + i \omega)^{- 1} (D_{m_j, 0} + i \omega) \|, \ \omega \in \R, \ j = 0, 1, 2 \big\} < \infty.$$
\end{lemma}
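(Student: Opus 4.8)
The plan is to regard $R_6'(\omega,\bA)$ as a product of seven factors — six of them affine in $\bA$ and the seventh a resolvent of $D_{m_j,\bA}$ — and to differentiate under the $\omega$--integral in~\eqref{def:boR6-2}. For each $j$ write $R_6'(\omega,\bA)=\sum_{j=0}^2 c_j\,F_0\,G_1\cdots G_6$ with $F_0:=i\omega\,(D_{m_j,\bA}+i\omega)^{-1}$ and $G_k:=(\bsalpha\cdot A-V)(D_{m_j,0}+i\omega)^{-1}$. Each $G_k$ is an \emph{affine} map $\Hdiv\to\gS_6$ with $\|G_k\|_{\gS_6}\le K(m_j^2+\omega^2)^{-1/4}\|\bA\|_{\Hdiv}$ by Kato--Seiler--Simon~\eqref{eq:KSS} and the Sobolev inequality (so that ${\rm d}G_k$ is a constant operator and ${\rm d}^2G_k=0$), while $\|F_0\|\le1$ always. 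First I would record that $L_\bA<\infty$ and that $\bA\mapsto L_\bA$ is locally bounded on $\boH$: this is where $0\notin\sigma(D_{m_j,\bA})$ is used, together with the fact that $V-\bsalpha\cdot A$ is $D_{m_j,0}$--bounded with relative bound $0$ (again by~\eqref{eq:KSS} and Lemma~\ref{lem:spectre}), which makes $D_{m_j,\bA}$ and $D_{m_j,0}$ have uniformly equivalent graph norms. Then, for $\bA\in\boH$ and $\bA'$ small in $\Hdiv$, the factorization $D_{m_j,\bA+\bA'}+i\omega=(D_{m_j,\bA}+i\omega)\big(1+(D_{m_j,\bA}+i\omega)^{-1}(V'-\bsalpha\cdot A')\big)$ together with $\|(D_{m_j,\bA}+i\omega)^{-1}(V'-\bsalpha\cdot A')\|\le L_\bA\,\|(D_{m_j,0}+i\omega)^{-1}(V'-\bsalpha\cdot A')\|\le KL_\bA(m_j^2+\omega^2)^{-1/4}\|\bA'\|_{\Hdiv}$ gives, via a Neumann series, that $\bA'\mapsto F_0(\omega,\bA+\bA')$ is $\boC^\infty$ into $\mathcal B(L^2)$, uniformly in $\omega$.

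It then follows that $\bA\mapsto R_6'(\omega,\bA)$ is $\boC^\infty$ from $\boH$ into $\gS_1$. Its derivatives are computed by Leibniz: a derivative landing on some $G_k$ replaces it by the constant operator $(\bsalpha\cdot A'-V')(D_{m_j,0}+i\omega)^{-1}\in\gS_6$, while a derivative landing on $F_0$ produces $-F_0\,(V'-\bsalpha\cdot A')\,(D_{m_j,\bA}+i\omega)^{-1}$; inserting $1=(D_{m_j,0}+i\omega)(D_{m_j,0}+i\omega)^{-1}$ I rewrite $(V'-\bsalpha\cdot A')(D_{m_j,\bA}+i\omega)^{-1}=\big[(V'-\bsalpha\cdot A')(D_{m_j,0}+i\omega)^{-1}\big]\big[(D_{m_j,0}+i\omega)(D_{m_j,\bA}+i\omega)^{-1}\big]$, the first bracket lying in $\gS_6$ with the $(m_j^2+\omega^2)^{-1/4}$ decay, the second bounded by $L_\bA$. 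Iterating, each derivative inserts one new $\gS_6$--factor and at most one new factor $L_\bA$ (and at most one more bounded factor of norm $\le1$); since there are already six $\gS_6$--factors, any such product lies in $\gS_1$ by Hölder, with $\gS_1$--norm dominated by an integrable function of $\omega$, locally uniformly in $\bA\in\boH$. Differentiation under the $\omega$--integral in~\eqref{def:boR6-2} then yields $\boR_6\in\boC^\infty(\boH)$.

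For the quantitative bound~\eqref{eq:borne_d2_R6} I would run the case analysis for ${\rm d}^2R_6'(\omega,\bA)[\bA',\bA'']$ according to where the two derivatives land (using ${\rm d}^2G_k=0$, so at most one derivative can land on any given $G_k$). If both land on the $G_k$'s, no $L_\bA$ appears: the operator is a product of six $\gS_6$--factors, four carrying $\|\bA\|_{\Hdiv}$ and two carrying $\|\bA'\|_{\Hdiv},\|\bA''\|_{\Hdiv}$, so its $\gS_1$--norm is $\lesssim(m_j^2+\omega^2)^{-3/2}\|\bA\|_{\Hdiv}^4\|\bA'\|_{\Hdiv}\|\bA''\|_{\Hdiv}$, and $\int_\R(m_j^2+\omega^2)^{-3/2}\,d\omega\sim m_j^{-2}$ produces the term $\sum_j|c_j|m_j^{-2}\|\bA\|_{\Hdiv}^4$. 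If one derivative lands on $F_0$ and one on a $G_k$, one factor $L_\bA$ and seven $\gS_6$--factors appear, giving $\lesssim L_\bA(m_j^2+\omega^2)^{-7/4}\|\bA\|_{\Hdiv}^5\|\bA'\|_{\Hdiv}\|\bA''\|_{\Hdiv}$ with $\int_\R(m_j^2+\omega^2)^{-7/4}\,d\omega\sim m_j^{-5/2}$, and the Young inequality $m_j^{-5/2}L_\bA\|\bA\|_{\Hdiv}^5\le\tfrac12 m_j^{-2}\|\bA\|_{\Hdiv}^4+\tfrac12 m_j^{-3}L_\bA^2\|\bA\|_{\Hdiv}^6$ absorbs it into the two stated terms. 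If both land on $F_0$, two factors $L_\bA$ and eight $\gS_6$--factors appear, giving $\lesssim L_\bA^2(m_j^2+\omega^2)^{-2}\|\bA\|_{\Hdiv}^6\|\bA'\|_{\Hdiv}\|\bA''\|_{\Hdiv}$ with $\int_\R(m_j^2+\omega^2)^{-2}\,d\omega\sim m_j^{-3}$, i.e. the term $\sum_j|c_j|m_j^{-3}L_\bA^2\|\bA\|_{\Hdiv}^6$. Collecting the three cases with the coefficients $c_j$ gives~\eqref{eq:borne_d2_R6}. The main obstacle is the first paragraph: proving $L_\bA<\infty$ and local boundedness of $\bA\mapsto L_\bA$ on $\boH$ in an $\omega$--uniform way, and then keeping the $(m_j^2+\omega^2)^{-1/4}$ decay under control through every splitting $1=(D_{m_j,0}+i\omega)(D_{m_j,0}+i\omega)^{-1}$, so that all the $\omega$--integrals converge with the right powers of $m_j$.
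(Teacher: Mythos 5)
Your proposal is correct and follows essentially the same route as the paper: both establish smoothness by differentiating the finite resolvent expansion under the $\omega$--integral, both hinge on the uniform-in-$\omega$ bound $L_\bA=\sup_{\omega,j}\|(D_{m_j,\bA}+i\omega)^{-1}(D_{m_j,0}+i\omega)\|$ (with $0\notin\sigma(D_{m_j,\bA})$ controlling small $\omega$ and the Kato--Seiler--Simon decay controlling large $\omega$), and both use the same insertion of $1=(D_{m_j,0}+i\omega)(D_{m_j,0}+i\omega)^{-1}$ to convert each derivative hitting the interacting resolvent into one extra $\gS_6$ factor times a factor $L_\bA$. The only cosmetic difference is bookkeeping: you differentiate the seven-factor product by Leibniz and sort cases by where the two derivatives land (absorbing the cross term by Young), whereas the paper iterates the resolvent identity to isolate ${\rm d}_\bA R_6'(\omega,\bA)$ and $r_6'$ explicitly; the estimates and the final bound~\eqref{eq:borne_d2_R6} coincide.
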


\begin{proof}
The proof relies on elements in the proof of Lemma~\ref{lem:6th-cont}. When $0$ is not an eigenvalue of $D_{m_j, \bA}$ for each $j = 0, 1, 2$, we can deduce from Lemma~\ref{lem:spectre} the existence of a positive constant $K_\bA$ such that
\begin{equation}
\label{eq:bounded-inverse}
\Big\| \frac{1}{D_{m_j, \bA'} + i \omega} \Big\| \leq \min \Big\{ K_\bA, \frac{1}{|\omega|} \Big\},
\end{equation}
for any $\bA' \in \Hdiv$, with $\| \bA' - \bA \|_{\Hdiv}$ small enough. As a consequence, we can replace estimate~\eqref{eq:estim-resolvent1} by the inequality
$$\Big\| \frac{1}{D_{m_j, \bA} + i \omega} \big( D_{m_j, \bA} + i \mu \big) \Big\| \leq 1 + K_\bA \mu.$$
Since
\begin{align*}
\Big\| \frac{1}{D_{m_j, \bA} + i \omega} & \big( D_{m_j, 0} + i \omega \big) \Big\|\\
& \leq \Big\| \frac{1}{D_{m_j, \bA} + i \omega} \big( D_{m_j, 0} + i \mu \big) \Big\| + |\omega - \mu| \min \Big\{ K_\bA, \frac{1}{|\omega|} \Big\},
\end{align*}
it follows that
\begin{equation}
\label{eq:thequantity}
\begin{split}
\Big\| & \frac{1}{D_{m_j, \bA} + i \omega} \big( D_{m_j, 0} + i \omega \big) \Big\|\\
& \leq \big( 1 + K_\bA \mu \big) \Big\| \frac{1}{D_{m_j, \bA} + i \mu} \big( D_{m_j, 0} + i \mu \big) \Big\| + |\omega - \mu| \min \Big\{ K_\bA, \frac{1}{|\omega|} \Big\}.
\end{split}
\end{equation}
Following the lines of the proof of~\eqref{eq:estim_resolvent_A_free}, we deduce that the quantity in the right-hand side of~\eqref{eq:thequantity} is bounded independently on $\omega$ by a positive constant $L_\bA$, depending only on the four-potential $\bA$ and the mass $m_j$. Actually, we can claim, up to a possible larger choice of $L_\bA$, that
$$\Big\| \frac{1}{D_{m_j, \bA'} + i \omega} \big( D_{m_j, 0} + i \omega \big) \Big\| \leq L_\bA,$$
for any $\omega \in \R$, $j = 0, 1, 2$, and $\bA' \in \Hdiv$, with $\| \bA' - \bA \|_{\Hdiv}$ small enough.

As a result, we can upgrade~\eqref{eq:estim_difference_2} into
$$\bigg\| i \omega \Big( \frac{1}{D_{m_j, \bA} + i \omega} - \frac{1}{D_{m_j, \bA'} + i \omega} \Big) \bigg\|\leq \frac{K L_\bA}{(m_j^2 + \omega^2)^\frac{1}{4}} \big\| \bA - \bA' \big\|_{\Hdiv}.$$
Similarly, we can compute
\begin{equation}
\label{eq:est-first-diff}
\begin{split}
\bigg\| \frac{i \omega}{D_{m_j, \bA} + i \omega} \big( \bsalpha \cdot (\bA - \bA') & + V' - V \big) \frac{1}{D_{m_j, \bA} + i \omega} \bigg\|\\
& \leq \frac{K L_\bA}{(m_j^2 + \omega^2)^\frac{1}{4}} \big\| \bA - \bA' \big\|_{\Hdiv}.
\end{split}
\end{equation}
At this stage, we can iterate the resolvent expansion in~\eqref{eq:resolvent_expansion_A_A'} to obtain
\begin{align*}
\frac{1}{D_{m_j, \bA} + i \omega} & - \frac{1}{D_{m_j, \bA'} + i \omega}\\
& = \frac{1}{D_{m_j, \bA} + i \omega} \big( \bsalpha \cdot (\bA - \bA') + V' - V \big) \frac{1}{D_{m_j, \bA} + i \omega}\\
& + \Big( \frac{1}{D_{m_j, \bA} + i \omega} \big( \bsalpha \cdot (\bA - \bA') + V' - V \big) \Big)^2 \frac{1}{D_{m_j, \bA'} + i \omega}.
\end{align*}
Inserting this identity into~\eqref{eq:dec-R6}, we can write
\begin{equation}
\label{eq:diff-R6}
R_6'(\omega, \bA) - R_6'(\omega, \bA') = {\rm d}_\bA R_6'(\omega, \bA)(\bA -\bA') + r_6' \big( \omega, \bA, \bA' \big).
\end{equation}
Here, ${\rm d}_\bA R_6'(\omega, \bA)$ refers to the continuous linear mapping from $\Hdiv$ to $\gS_1(\R^3, \R^4)$ given by
\begin{equation}
\label{eq:dR6'}
\begin{split}
{\rm d}_\bA & R_6'(\omega, \bA)(\gv, \ga)\\
= & \sum_{j = 0}^2 c_j \bigg( \frac{i \omega}{D_{m_j, \bA} + i \omega} \big( \bsalpha \cdot \ga - \gv \big) \frac{1}{D_{m_j, \bA} + i \omega} \Big( \big( \bsalpha \cdot A - V \big) \frac{1}{D_{m_j, 0} + i \omega} \Big)^6\\
+ & \sum_{k = 0}^5 \frac{i \omega}{D_{m_j, \bA} + i \omega} \Big( \big( \bsalpha \cdot A - V \big) \frac{1}{D_{m_j, 0} + i \omega} \Big)^{5 - k} \Big( \big( \bsalpha \cdot \ga - \gv \big) \frac{1}{D_{m_j, 0} + i \omega} \Big) \times\\
& \quad \times \Big( \big( \bsalpha \cdot A - V \big) \frac{1}{D_{m_j, 0} + i \omega}\Big)^k \bigg).
\end{split}
\end{equation}
In view of~\eqref{eq:est-first-diff}, and again the computations in the proof of estimate~\eqref{eq:6th-final}, the operator norm of ${\rm d}_\bA R_6'(\omega, \bA)$ is indeed bounded by
\begin{equation}
\label{eq:est-diff-R6}
\begin{split}
\big\| {\rm d}_\bA & R_6'(\omega, \bA) \big\|\\
& \leq K \sum_{j = 0}^2 \frac{|c_j|}{(m_j^2 + \omega^2)^\frac{3}{2}} \| \bA \|_{\Hdiv}^5 \Big( 1 + \frac{L_\bA}{(m_j^2 + \omega^2)^\frac{1}{4}} \| \bA \|_{\Hdiv} \Big).
\end{split}
\end{equation}
Similarly, the remainder $r_6' \big( \omega, \bA, \bA' \big)$ in~\eqref{eq:diff-R6} may be estimated as
\begin{equation}
\label{eq:estim-r6}
\begin{split}
\Big\| & r_6' \big( \omega, \bA, \bA' \big) \Big\|_{\gS_1}\\
& \leq K \| \bA - \bA' \|_{\Hdiv}^2 \sum_{j = 0}^2 \frac{|c_j|}{(m_j^2 + \omega^2)^\frac{3}{2}} \big( \| \bA \|_{\Hdiv}^4 + \| \bA' \|_{\Hdiv}^4 \big) \times\\
& \quad \times \Big( 1 + \frac{L_\bA^2}{(m_j^2 + \omega^2)^\frac{1}{2}} \big( \| \bA \|_{\Hdiv}^2 + \| \bA' \|_{\Hdiv} ^2 \big) \Big).
\end{split}
\end{equation}
Collecting~\eqref{def:boR6-2},~\eqref{eq:diff-R6},~\eqref{eq:est-diff-R6} and~\eqref{eq:estim-r6} is enough to establish the continuous differentiability of the function $\boR_6$ on a neighborhood of $\bA$, with a differential given by
\begin{equation}
\label{eq:dboR6}
{\rm d} \boR_6(\bA)(\gv, \ga) = \frac{1}{4 \pi} \int_\R \tr \Big( {\rm d}_\bA R_6'(\omega, \bA)(\gv, \ga) + {\rm d}_\bA R_6'(- \omega, \bA)(\gv, \ga) \Big) \, d\omega.
\end{equation}

Finally, we can extend the previous arguments for the continuous differentiability of $\boR_6$ to the proof that it is actually of class $\boC^\infty$. In particular, we can check that the norm of the quadratic form ${\rm d}_\bA^2 R_6'$ is bounded by
\begin{align*}
\Big\| & {\rm d}_\bA^2 R_6' \big( \omega, \bA \big) \Big\|\\
& \leq K \sum_{j = 0}^2 \frac{|c_j|}{(m_j^2 + \omega^2)^\frac{3}{2}} \| \bA \|_{\Hdiv}^4 \Big( 1 + \frac{L_\bA^2}{(m_j^2 + \omega^2)^\frac{1}{2}} \| \bA \|_{\Hdiv}^2 \Big).
\end{align*}
Estimate~\eqref{eq:borne_d2_R6} follows integrating with respect to $\omega$.
\end{proof}

When $\| \bA \|_{\Hdiv}$ is small enough, we can prove that the constant $L_\bA$ does not depend on $\bA$.

\begin{cor}[Estimate in a neighborhood of zero]
Assume that $c_j$ and $m_j$ satisfy~\eqref{cond:c_i_thm}. There exists a universal constant $\eta$ such that, given any $\bA \in \Hdiv$ with $\| \bA \|_{\Hdiv} \leq \eta \sqrt{m_0}$, the functional $\boR_6$ is of class $\boC^\infty$ on the ball 
$$\boB_\bA(\eta \sqrt{m_0}) = \big\{ \bA \in\Hdiv \ : \ \| \bA \|_{\Hdiv} \leq \eta \sqrt{m_0} \big\},$$
and satisfies the estimate
\begin{equation}
\label{eq:borne_d2_R6_0}
\big\| {\rm d}^2 \boR_6(\bA) \big\| \leq K \bigg( \sum_{j = 0}^2 \frac{|c_j|}{m_j^2} \bigg) \, \big\| \bA \big\|_{\Hdiv}^4.
\end{equation}
\end{cor}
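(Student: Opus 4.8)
The plan is to derive this corollary directly from Lemma~\ref{lem:6th-reg} by showing that, on a sufficiently small ball, the constant $L_\bA$ in~\eqref{eq:borne_d2_R6} is bounded by a universal constant, so that the last, $\bA$--dependent factor in~\eqref{eq:borne_d2_R6} disappears. First I would fix a universal $\eta>0$ small enough that $C\eta\leq1/2$ and $\eta<1/C$, where $C$ is a universal constant chosen large enough for two purposes: (a) statement $(iii)$ of Lemma~\ref{lem:spectre} applies, i.e.\ $\sigma(D_{m_j,\bA})$ has a gap around $0$ whenever $\|\bA\|_{\Hdiv}\leq\eta\sqrt{m_j}$, and (b) the bound $\|(\bsalpha\cdot A-V)D_{m_j,0}^{-1}\|\leq (C/\sqrt{m_j})\|\bA\|_{\Hdiv}$ holds, exactly as in the proof of Lemma~\ref{lem:spectre} (via~\eqref{eq:KSS}, the Sobolev inequalities~\eqref{eq:Sobolev}, and $\div A=0$). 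For such an $\eta$ and any $\bA\in\Hdiv$ with $\|\bA\|_{\Hdiv}\leq\eta\sqrt{m_0}$, the inequalities $m_j\geq m_0$ give $\|\bA\|_{\Hdiv}\leq\eta\sqrt{m_j}$ for $j=0,1,2$, so $0\notin\sigma(D_{m_j,\bA})$; hence the whole ball $\boB_\bA(\eta\sqrt{m_0})$ lies in the open set $\boH$ of Lemma~\ref{lem:6th-reg}, on which $\boR_6$ is of class $\boC^\infty$.

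Next I would bound $L_\bA$. Writing $D_{m_j,\bA}+i\omega=(D_{m_j,0}+i\omega)\big(1-(D_{m_j,0}+i\omega)^{-1}(\bsalpha\cdot A-V)\big)$, as in the step leading to~\eqref{eq:estim_resolvent_A_free}, I would estimate, for every real $\omega$ and every $j$,
\begin{multline*}
\Big\|\frac{1}{D_{m_j,0}+i\omega}\big(\bsalpha\cdot A-V\big)\Big\|
=\Big\|\big(\bsalpha\cdot A-V\big)\frac{1}{D_{m_j,0}-i\omega}\Big\|\\
\leq\Big\|\big(\bsalpha\cdot A-V\big)\frac{1}{D_{m_j,0}}\Big\|\,\Big\|\frac{D_{m_j,0}}{D_{m_j,0}-i\omega}\Big\|
\leq\frac{C}{\sqrt{m_j}}\|\bA\|_{\Hdiv}\leq C\eta\leq\frac12 ,
\end{multline*}
using $\|D_{m_j,0}(D_{m_j,0}-i\omega)^{-1}\|\leq1$. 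Summing the Neumann series then gives $\|(D_{m_j,\bA}+i\omega)^{-1}(D_{m_j,0}+i\omega)\|\leq(1-C\eta)^{-1}\leq2$, uniformly in $\omega\in\R$ and in $j\in\{0,1,2\}$, so that $L_\bA\leq2$ for every $\bA$ in the ball.

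It would then remain to feed $L_\bA\leq2$ into~\eqref{eq:borne_d2_R6}: since $\|\bA\|_{\Hdiv}^2/m_j\leq\eta^2 m_0/m_j\leq\eta^2\leq1$, this yields $\|{\rm d}^2\boR_6(\bA)\|\leq 5K\big(\sum_{j=0}^2|c_j|/m_j^2\big)\|\bA\|_{\Hdiv}^4$, and absorbing the factor $5$ into $K$ gives~\eqref{eq:borne_d2_R6_0}. I do not expect a genuine obstacle here: every ingredient is already present in the proofs of Lemmas~\ref{lem:spectre} and~\ref{lem:6th-reg}, and the only mildly delicate point is to observe that the bound $L_\bA\leq2$ is genuinely uniform, both over the ball and over $\omega$, which is clear because the resolvent estimate above depends on $\bA$ only through $\|\bA\|_{\Hdiv}$.
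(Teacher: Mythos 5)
Your proof is correct and follows the same route as the paper: use Lemma~\ref{lem:spectre} to ensure $0\notin\sigma(D_{m_j,\bA})$, bound $L_\bA\le 2$ by a Neumann series argument as in~\eqref{eq:estim_resolvent_A_free}, then insert this into~\eqref{eq:borne_d2_R6}. Your explicit factorization $(\bsalpha\cdot A-V)(D_{m_j,0}-i\omega)^{-1}=(\bsalpha\cdot A-V)D_{m_j,0}^{-1}\cdot D_{m_j,0}(D_{m_j,0}-i\omega)^{-1}$ makes the uniformity in $\omega$ transparent, which the paper leaves implicit.
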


\begin{proof}
When $\bA$ is small enough, the spectrum of $D_{m_j, \bA}$ does not contain $0$ by Lemma~\ref{lem:spectre}. Moreover, when
$$\big\| \bA \big\|_{\Hdiv} \leq \eta \min \big\{ \sqrt{m_0}, \sqrt{m_1}, \sqrt{m_2} \big\} = \eta \sqrt{m_0},$$
for $\eta$ small enough, we can infer from~\eqref{eq:estim_resolvent_A_free} that
$$L_\bA \leq 2.$$
Inserting in~\eqref{eq:borne_d2_R6}, and using the inequality $\| \bA \|_{\Hdiv} \leq \eta \sqrt{m_j}$, gives estimate~\eqref{eq:borne_d2_R6_0}.
\end{proof}

\section{Proof of Theorem~\ref{thm:defF}}
\label{sub:endproof}

With the results of the previous section at hand, the proof of Theorem~\ref{thm:defF} is only a few lines. As a matter of fact, given any $\bA \in L^1(\R^3, \R^4) \cap \Hdiv$, we have shown that the functional $\boF_{\rm PV}(\bA)$ is well-defined by the expression
\begin{equation}
\label{eq:develop}
\boF_{\rm PV}(\bA) = \boF_2(\bF) + \boR(\bA),
\end{equation}
where 
$$\boF_2(\bF) := \tr \big( \tr_{\C^4} T_2(\bA)\big),$$
and
$$\boR(\bA) := \tr \big( \tr_{\C^4} T_4(\bA)\big) + \tr \big( \tr_{\C^4} T'_6(\bA)\big),$$
are defined in~\eqref{eq:dev-TA}. By Lemma~\ref{lem:2ndform}, the function $\boF_2$ is given by~\eqref{eq:formF2} and it is quadratic with respect to $\bF$. Since $M$ is bounded, we deduce that $\boF_2$ is smooth on $L^2(\R^3, \R^6)$. On the other hand, the function $\bA \mapsto \boF_4(\bA) := \tr(\tr_{\C^4} T_4(\bA))$ is quartic and satisfies~\eqref{eq:trT4A}. Hence, it is a smooth function on $\Hdiv$. We have proved separately in Lemma~\ref{lem:6th-cont} above that $\boR_6(\bA) = \tr(\tr_{\C^4} T'_6(\bA))$ is an H\"older continuous function on $\Hdiv$, which satisfies~\eqref{eq:6th-final}. We deduce from all this that $\boF_{\rm PV}$ has a unique continuous extension to $\Hdiv$, which is given by~\eqref{eq:develop}, and that $\boR$ satisfies estimate~\eqref{est:R}. The properties of $M$ can be found in Lemma~\ref{lem:prop_M}.\qed

\section{Proof of Theorem~\ref{thm:differentiability}}
\label{sec:endproof_diff}

In view of the results in Sections~\ref{sec:atrace} and~\ref{sec:estimates}, the functional $\boF_{\rm PV}$ is smooth on the open subset $\boH$ of $\Hdiv$ containing all the four-potentials $\bA$ such that $0 \notin \sigma(D_{m_j, \bA})$ for each $j= 0, 1, 2$. Indeed, the function $\bA \mapsto \boF_4(\bA) := \tr(\tr_{\C^4} T_4(\bA))$ is quartic and satisfies~\eqref{eq:trT4A}. Hence, it is of class $\boC^\infty$ on $\Hdiv$. Similarly, in view of Lemmas~\ref{lem:2ndform} and~\ref{lem:prop_M}, the quadratic map $\boF_2$ is smooth on $L^2(\R^3, \R^6)$. On the other hand, we have shown in Section~\ref{sub:6th-smooth} that $\boR_6$ is smooth when $0$ is not an eigenvalue of $D_{m_j, \bA}$ for $j = 0, 1, 2$. We deduce that $\boF_{\rm PV}$ is smooth on the set $\boH$.

In order to complete the proof of Theorem~\ref{thm:differentiability}, it remains to identify ${\rm d} \boF_{\rm PV}(\bA)$. As mentioned in Formulas~\eqref{eq:diff-boF} and~\eqref{eq:def-jA-rhoA}, this differential is related to the operator
$$Q_\bA := \sum_{j = 0}^2 c_j \, \1_{(- \infty, 0)} \big( D_{m_j, \bA} \big).$$
Concerning the properties of the operator $Q_\bA$, we can establish the following

\begin{lemma}[Properties of $\rho_\bA$ and $j_\bA$]
\label{lem:prop-QA}
Assume that $c_j$ and $m_j$ satisfy conditions~\eqref{cond:c_i_thm}.

\smallskip

\noindent $(i)$ Let $\bA \in \Hdiv$ be a four-potential such that $0$ is not an eigenvalue of $D_{m_j, \bA}$ for $j = 0, 1 , 2$. Then the operators $\tr_{\C^4} \, Q_\bA$ and $\tr_{\C^4} \, \bsalpha Q_\bA$ are locally trace-class on $L^2(\R^3, \R^4)$. More precisely,
given any function $\chi\in L_c^\infty(\R^3)$ (that is, bounded with compact support), the maps
$$\bA \in \boH \mapsto \chi \big( \tr_{\C^4} \, Q_\bA \big) \chi \in \gS_1$$
and
$$\bA \in \boH \mapsto \chi \big( \tr_{\C^4} \bsalpha \, Q_\bA \big) \chi \in \gS_1$$
are continuous from $\boH$ to $\gS_1$. In particular, the density $\rho_\bA$ and the current $j_\bA$, given by
$$\rho_\bA(x) := \big[ \tr_{\C^4} Q_\bA \big](x, x) \quad {\rm and} \quad j_\bA(x) := \big[\tr_{\C^4} \bsalpha \, Q_\bA \big](x, x),$$
are well-defined and locally integrable on $\R^3$. Moreover, the maps $\bA \mapsto \rho_\bA \, \chi^2$ and $\bA \mapsto j_\bA \, \chi^2$ are continuous from $\boH$ to $L^1(\R^3)$. Finally, for $\bA \equiv 0$, we have 
$$\rho_0 \equiv 0 \quad {\rm and} \quad j_0 \equiv 0.$$

\medskip

\noindent $(ii)$ If moreover $\bA \in L^1(\R^3, \R^4)$, then, the operators $\tr_{\C^4} \, (Q_\bA - Q_0)$ and $\tr_{\C^4} \, \bsalpha (Q_\bA - Q_0)$ are trace-class on $L^2(\R^3, \R^4)$, and the density $\rho_\bA$ and the current $j_\bA$ are in $L^1(\R^3)$.
\end{lemma}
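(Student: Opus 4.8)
The plan is to derive everything from a resolvent (contour) representation of $Q_\bA$, completely parallel to the one already used for $T_\bA$ in \eqref{eq:develT}. Since $0\notin\sigma(D_{m_j,\bA})$, functional calculus together with \eqref{eq:absval} gives, for each $j$, $\1_{(-\infty,0)}(D)=\tfrac12-\tfrac1{4\pi}\int_\R\bigl(\tfrac1{D+i\omega}+\tfrac1{D-i\omega}\bigr)\,d\omega$, hence, using $\sum_jc_j=0$ to kill the constant terms,
\begin{equation*}
Q_\bA-Q_0=-\frac1{4\pi}\int_\R\sum_{j=0}^2 c_j\Big(\frac1{D_{m_j,\bA}+i\omega}+\frac1{D_{m_j,\bA}-i\omega}-\frac1{D_{m_j,0}+i\omega}-\frac1{D_{m_j,0}-i\omega}\Big)\,d\omega,
\end{equation*}
the integral being norm-convergent thanks to the extra $|\omega|^{-1}$ decay of each resolvent difference and the spectral gap of Lemma~\ref{lem:spectre}. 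This has exactly the structure of $T_\bA$, with the bounded factors $i\omega(D_{m_j,\cdot}+i\omega)^{-1}$ replaced by $(D_{m_j,\cdot}+i\omega)^{-1}$, so the machinery of Sections~\ref{sec:atrace} and~\ref{sec:estimates} applies \emph{mutatis mutandis}. First, a direct computation shows $\tr_{\C^4}Q_0=0$: indeed $Q_0=-\tfrac12\sum_jc_j(\bsalpha\cdot p+m_j\bsbeta)(p^2+m_j^2)^{-1/2}$ and $\tr_{\C^4}(\bsalpha\cdot p)=\tr_{\C^4}\bsbeta=0$; likewise $\tr_{\C^4}\bsalpha_kQ_0=-2p_k\sum_jc_j(p^2+m_j^2)^{-1/2}$ is an odd, integrable (by \eqref{cond:PV}) Fourier multiplier, so its constant diagonal vanishes. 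This already gives $\rho_0\equiv0$ and $j_0\equiv0$, and reduces everything to $\tr_{\C^4}(Q_\bA-Q_0)$ and $\tr_{\C^4}\bsalpha(Q_\bA-Q_0)$.

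For part~(ii), assume moreover $\bA\in L^1(\R^3,\R^4)$. Expanding each resolvent difference with the iterated formula \eqref{eq:resolvent_expansion} produces a finite sum of terms analogous to the $R_n(\omega,\bA)$ of \eqref{eq:def_R_n} (now without the numerator $i\omega$) plus a remainder involving $(D_{m_j,\bA}+i\omega)^{-1}$, and I would repeat the estimates of Proposition~\ref{prop:trace-class} verbatim: the higher-order terms are bounded directly in $\gS_1$ by Hölder and Kato--Seiler--Simon; for the first and second order terms the Pauli--Villars relations \eqref{cond:PV} and the expansions \eqref{eq:decomp_carres} produce the additional resolvent factors that simultaneously give the $\gS_1$ bound (in terms of $\|\bA\|_{L^1}$) and the $\omega$-integrability, while the dangerous pieces containing a lone $\bsbeta$ are annihilated by taking the $\C^4$-trace first, exactly as in \eqref{eq:trboR12}--\eqref{eq:trboR22}. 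The same arguments, with an extra (bounded) matrix factor $\bsalpha_k$, handle $\tr_{\C^4}\bsalpha_k(Q_\bA-Q_0)$. Once $\tr_{\C^4}(Q_\bA-Q_0),\,\tr_{\C^4}\bsalpha_k(Q_\bA-Q_0)\in\gS_1$, the densities $\rho_\bA,j_\bA$ are in $L^1(\R^3)$ by the standard fact that a self-adjoint trace-class operator $Q=\sum_n\lambda_n|f_n\rangle\langle f_n|$ has density $\sum_n\lambda_n|f_n|^2\in L^1$ with $L^1$-norm $\le\|Q\|_{\gS_1}$ (and similarly with an $\bsalpha_k$ inserted).

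For part~(i), given a general $\bA\in\Hdiv$ in the open set $\boH$ (where $0\notin\sigma(D_{m_j,\bA})$) and $\chi\in L_c^\infty(\R^3)$, I would run the same expansion but keep $\chi$ on both ends of every term; the localization allows replacing the $L^1$-norm of $\bA$ by the field-energy quantities $\|\nabla V\|_{L^2},\|\curl A\|_{L^2}$ (through the Sobolev embedding $\Hdiv\hookrightarrow L^6$), since each factor $\chi(p^2+m_j^2+\omega^2)^{-s}$ is Hilbert--Schmidt for $s$ large and KSS absorbs the middle potentials in $\gS_p$, $p>3$, the lone-$\bsbeta$ terms and the \eqref{cond:PV} cancellations again being needed to control the low orders. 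Continuity of $\bA\mapsto\chi(\tr_{\C^4}Q_\bA)\chi\in\gS_1$, and hence (taking the diagonal) of $\bA\mapsto\rho_\bA\chi^2,\,j_\bA\chi^2\in L^1$, follows by applying these bounds to the differences $Q_\bA-Q_{\bA'}$ expanded term by term, as in the proof of Lemma~\ref{lem:6th-cont}: the differences of the free-resolvent products are controlled linearly in $\bA-\bA'$ by KSS, and the remainder requires only the uniform-in-$\omega$ operator-norm estimate for $(D_{m_j,\bA}+i\omega)^{-1}-(D_{m_j,\bA'}+i\omega)^{-1}$ given by \eqref{eq:resolvent_expansion_A_A'}--\eqref{eq:estim_difference_2} together with the stability of the spectral gap under small perturbations (Lemma~\ref{lem:spectre}). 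The main obstacle is precisely this point in part~(i): establishing the local $\gS_1$ bounds on the first- and second-order terms using \emph{only} field-energy norms forces one to combine the $\C^4$-trace-first trick, the Pauli--Villars conditions, and somewhat delicate Schatten--Hölder bookkeeping for localized resolvent products, and then to propagate all of this to differences so as to get the $\gS_1$-continuity.
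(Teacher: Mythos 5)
Your proposal is correct and follows essentially the same line of argument as the paper's proof. The one place where your route differs from the paper's is the choice of contour representation for the spectral projector. You use
\begin{equation*}
\1_{(-\infty,0)}(D)=\tfrac12-\tfrac1{4\pi}\int_\R\Big(\tfrac1{D+i\omega}+\tfrac1{D-i\omega}\Big)\,d\omega,
\end{equation*}
whereas the paper differentiates the integral formula for $|x|$ once more and works with
\begin{equation*}
\sign(D)=\frac1{2\pi}\int_\R\Big(\frac{i\omega}{(D+i\omega)^2}-\frac{i\omega}{(D-i\omega)^2}\Big)\,d\omega,
\end{equation*}
so that their expansion terms $Q_n(\omega,\bA)$ carry one extra free-resolvent factor and a combinatorial coefficient $(n+1)$, together with a prefactor $i\omega/(D_{m_j,0}+i\omega)^2$ that decays at both $\omega\to0$ and $\omega\to\infty$. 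This is a purely cosmetic advantage: your version needs the $\pm\omega$ symmetrization to produce the favourable $\omega^2$ factor (you in effect recover $2DWD-2\omega^2W$ sandwiched by two copies of $(D^2+\omega^2)^{-1}$), after which the Schatten bookkeeping is identical. Both choices lead to the same family of estimates once the Pauli--Villars identity \eqref{eq:decomp_carres} is iterated. Everything else in your plan coincides with the paper's proof: the explicit computation of $Q_0$ and the vanishing of $\rho_0,j_0$ (the paper also shows $j_0=0$ by rotational symmetry after the same reduction to an odd integrable Fourier multiplier); the need for the \eqref{cond:PV} cancellations at orders $1,2,3$; the $\C^4$-trace-first trick to remove the non-trace-class pieces carrying a lone $\bsbeta$; the use of KSS/Hölder on the higher orders; the passage from $\|\bA\|_{L^1}$ bounds in part (ii) to $\chi(\cdot)\chi$-localized bounds via $\Hdiv\hookrightarrow L^6$ in part (i); and the continuity via resolvent-difference estimates of the type \eqref{eq:resolvent_expansion_A_A'}--\eqref{eq:estim_difference_2}. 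One small caveat for precision rather than correctness: your formula for $\1_{(-\infty,0)}(D)$ is an absolutely convergent integral only in the scalar functional-calculus sense; as an operator-valued integral it is merely a principal value (operator norm $\sim|\omega|^{-1}$), but you immediately pass to the difference $Q_\bA-Q_0$ and use $\sum_jc_j=0$, at which point the integral is norm-convergent, so the argument as stated is fine.
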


\begin{proof}
We split the proof into three steps. First, we consider the special case $\bA \equiv 0$.

\addtocontents{toc}{\SkipTocEntry}
\subsection*{The operators $\tr_{\C^4} Q_0$ and $\tr_{\C^4} \bsalpha Q_0$ are locally trace-class}

Using that $\sum_{j=0}^2c_j=0$, we can write
$$Q_0 = \sum_{j = 0}^2 c_j \, \Big( \1_{(- \infty, 0)} \big( D_{m_j, 0} \big) - \frac{1}{2} \Big) = -\frac{1}{2} \tr_{\C^4} \sum_{j = 0}^2 c_j \, \frac{D_{m_j, 0}}{|D_{m_j, 0}|}.$$
As a consequence, we obtain
$$\tr_{\C^4} Q_0 = 0.$$
In particular, the density $\rho_0$ is well-defined and it identically vanishes on $\R^3$.
Similarly, we have
$$\tr_{\C^4} \bsalpha Q_0 = - 2 \sum_{j = 0}^2 c_j \, \frac{p}{(p^2 + m_j^2)^\frac{1}{2}}.$$
Due to conditions~\eqref{cond:PV}, the latter function behaves like
$$\sum_{j = 0}^2 c_j \, \frac{p}{(p^2 + m_j^2)^\frac{1}{2}} \sim \frac{3}{8} \sum_{j = 0}^2 c_j \, m_j^4 \, \frac{p}{|p|^5},$$
as $|p| \to \infty$, hence it is in $L^1(\R^3)$. By the Kato-Seiler-Simon inequality~\eqref{eq:KSS}, we deduce that the operator $\tr_{\C^4} \, \bsalpha \chi Q_0 \chi$ is trace-class for any $\chi \in L^2(\R^3)$. Hence, $\tr_{\C^4} \, \bsalpha Q_0$ is locally trace-class. In particular, the current $j_0$ is well-defined and locally integrable on $\R^3$. Moreover, we can compute
\begin{align*}
\int_{\R^3} j_0 \, \chi^2 = & \tr \big( \tr_{\C^4} \, \bsalpha \chi Q_0 \chi \big)\\
= & - \frac{1}{4 \pi^3} \int_{\R^3} \int_{\R^3} \sum_{j = 0}^2 c_j \, \frac{p}{(p^2 + m_j^2)^\frac{1}{2}} |\widehat{\chi}(q - p)|^2 \, dp \, dq,
\end{align*}
which shows that
$$j_0=- \frac{1}{4 \pi^3}\int_{\R^3}\sum_{j = 0}^2 c_j \, \frac{p}{(p^2 + m_j^2)^\frac{1}{2}}\,dp=0$$
by rotational symmetry.

We next consider the general case.

\addtocontents{toc}{\SkipTocEntry}
\subsection*{The operators $\tr_{\C^4} Q_\bA$ and $\tr_{\C^4} \bsalpha Q_\bA$ are (locally) trace-class}

From the previous discussion, we conclude that it is sufficient to prove that the operators $\tr_{\C^4} (Q_\bA - Q_0)$ and $\tr_{\C^4} \bsalpha (Q_\bA - Q_0)$ are locally trace-class. The corresponding charge and current densities will be the same as that of $Q_\bA$.

Concerning the (local) trace-class nature of the operator $Q_\bA - Q_0$, we follow the proof of Proposition~\ref{prop:trace-class}. Our starting point is the integral formula
\begin{equation}
\label{eq:sign}
\sign x = \frac{2}{\pi} \int_{\R} \frac{x \,\omega^2}{(x^2 + \omega^2)^2} \, d\omega = \frac{1}{2 \pi} \int_\R \Big( \frac{i \omega}{(x + i \omega)^2} - \frac{i \omega}{(x - i \omega)^2} \Big) \, d\omega.
\end{equation}
When $T$ is a self-adjoint operator on $L^2(\R^3, \R^4)$ with domain $D(T)$, we deduce that the sign of $T$ is given by
\begin{equation}
\label{eq:intsign}
\sign T = \frac{1}{2 \pi} \int_\R \Big( \frac{i \omega}{(T + i \omega)^2} - \frac{i \omega}{(T - i \omega)^2} \Big) \, d\omega,
\end{equation}
the integral in the right-hand side of~\eqref{eq:intsign} being convergent as an operator from $D(T)$ to $L^2(\R^3, \C^4)$.

In particular, the operator
$$Q_\bA - Q_0 = - \frac{1}{2} \sum_{j = 0}^2 c_j \, \big( \sign D_{m_j, \bA} - \sign D_{m_j, 0} \big),$$
is given by the expression
\begin{equation}
\label{eq:devel-Q}
\begin{split}
Q_\bA - Q_0 = -\frac{1}{4 \pi} \int_\R \sum_{j = 0}^2 c_j \, \Big( & \frac{i \omega}{(D_{m_j, \bA} + i \omega)^2} - \frac{i \omega}{(D_{m_j, 0} + i \omega)^2}\\
& - \frac{i \omega}{(D_{m_j, \bA} - i \omega)^2} + \frac{i \omega}{(D_{m_j, 0} - i \omega)^2} \Big) \, d\omega,
\end{split}
\end{equation}
on $H^1(\R^3, \C^4)$. In order to establish statements $(ii)$ and $(iii)$ of Lemma~\ref{lem:prop-QA}, we will prove that
\begin{equation}
\label{int:omega-Q}
\begin{split}
\int_\R \bigg\| \sum_{j = 0}^2 c_j \, \tr_{\C^4} \gm \, \chi \, \Big(& \frac{i \omega}{(D_{m_j, \bA} + i \omega)^2} - \frac{i \omega}{(D_{m_j, 0} + i \omega)^2}\\
& - \frac{i \omega}{(D_{m_j, \bA} - i \omega)^2} + \frac{i \omega}{(D_{m_j, 0} - i \omega)^2} \Big) \, \chi \bigg\|_{\gS_1} \, d\omega < \infty,
\end{split}
\end{equation}
for any of the matrices $\gm = I_4, \bsalpha_1, \bsalpha_2, \bsalpha_3$, and either when $\bA \in L^1(\R^3, \R^4) \cap H^1(\R^3, \R^4)$ and $\chi \equiv 1$, or when $\bA \in \Hdiv$ and $\chi \in L_c^\infty(\R^3, \R)$. In the different cases, the $\C^4$--traces of the operators $\gm (Q_\bA - Q_0)$, respectively $\gm \chi (Q_\bA - Q_0) \chi$, will define trace-class operators on $L^2(\R^3, \C^4)$. Then the operators $\tr_{\C^4} \gm Q_\bA$ will be locally trace-class and the density $\rho_\bA$ and the current $j_\bA$ will be well-defined and locally integrable on $\R^3$. Moreover, they will be integrable on $\R^3$ for any $\bA \in L^1(\R^3, \R^4) \cap H^1(\R^3, \R^4)$.

In order to prove~\eqref{int:omega-Q}, we use the expansion
\begin{equation}
\label{eq:resolvent-Q}
\begin{split}
\frac{i \omega}{(D_{m_j, \bA} + i \omega)^2} & - \frac{i \omega}{(D_{m_j, 0} + i \omega)^2} - \frac{i \omega}{(D_{m_j, \bA} - i \omega)^2} + \frac{i \omega}{(D_{m_j, 0} - i \omega)^2}\\
:= \sum_{n = 1}^5 & \Big( Q_n(\omega, \bA) + Q_n(-\omega, \bA) \Big) + Q_6'(\omega, \bA) + Q_6'(- \omega, \bA)\\
& - Q_7'(\omega, \bA) - Q_7'(- \omega, \bA),
\end{split}
\end{equation}
with
\begin{align*}
Q_n(\omega, \bA) & := (n + 1) \sum_{j = 0}^2 c_j \, \frac{i \omega}{(D_{m_j, 0} + i \omega)^2} \Big( \big( \bsalpha \cdot A - V \big) \frac{1}{D_{m_j, 0} + i \omega} \Big)^n,\\
Q_6'(\omega, \bA) & := 7 \sum_{j = 0}^2 c_j \, \frac{i \omega}{(D_{m_j, \bA} + i \omega)^2} \Big( \big( \bsalpha \cdot A - V \big) \frac{1}{D_{m_j, 0} + i \omega} \Big)^6,
\end{align*}
and
$$Q_7'(\omega, \bA) := 6 \sum_{j = 0}^2 c_j \, \frac{i \omega}{(D_{m_j, \bA} + i \omega)^2} \Big( \big( \bsalpha \cdot A - V \big) \frac{1}{D_{m_j, 0} + i \omega} \Big)^7.$$
We next estimate the terms related to the operators $Q_n(\omega, \bA)$, $Q_6'(\omega, \bA)$ and $Q_7'(\omega, \bA)$, as we have previously done for the operators $R_n(\omega, \bA)$ and $R_6'(\omega, \bA)$ in Section~\ref{sec:atrace}.

Concerning $Q_6'(\omega, \bA)$ and $Q_7'(\omega, \bA)$, we recall that $0$ is not an eigenvalue of $D_{m_j, \bA}$ for each $j = 0, 1, 2$. Hence, there exists a positive constant $K$ such that
\begin{equation}
\label{eq:no-spectrum}
\Big\| \frac{1}{D_{m_j, \bA} + i \omega} \Big\| \leq K,
\end{equation}
for all $\omega \in \R$ and $j = 0, 1, 2$. Following the proof of~\eqref{eq:6th-final}, we deduce that
\begin{align*}
\int_\R \big( \| Q_6'(\omega, \bA) \|_{\gS_1} + & \| Q_7'(\omega, \bA) \|_{\gS_1} \big) \, d\omega\\
\leq & K \sum_{j = 0}^2 \frac{|c_j|}{m_j^2} \, \| \nabla \bA \|_{L^2}^6 \, \bigg( 1 + \frac{1}{\sqrt{m_j}} \| \nabla \bA \|_{L^2} \bigg).
\end{align*}
As a consequence, the integrals
$$\boQ_6'(\bA) := \frac{1}{4 \pi} \int_\R \big( Q_6'(\omega, \bA) + Q_6'(- \omega, \bA) \big) \, d\omega,$$
and
$$\boQ_7'(\bA) := \frac{1}{4 \pi} \int_\R \big( Q_7'(\omega, \bA) + Q_7'(- \omega, \bA) \big) \, d\omega,$$
define trace-class operators on $L^2(\R^3, \R^4)$ when $\bA \in \Hdiv$. The related densities $\rho_6'(\bA)$ and $\rho_7'(\bA)$, and currents $j_6'(\bA)$ and $j_7'(\bA)$, are well-defined and integrable on $\R^3$. Moreover, in view of~\eqref{eq:no-spectrum}, we can repeat the arguments in the proof of Lemma~\ref{lem:6th-reg} in order to establish the smoothness of the maps $\bA \mapsto \boQ_6'(\bA)$ and $\bA \mapsto \boQ_7'(\bA)$ from $\boH$ onto $\gS_1$.

For $3 \leq n \leq 5$, the operators $Q_n(\omega, \bA)$ satisfy the estimates
\begin{equation}
\label{eq:Qn-Ln}
\int_\R \big\| Q_n(\omega, \bA) \big\|_{\gS_1} \, d\omega \leq K_n \big\| \bA \big\|_{L^n}^n,
\end{equation}
and
\begin{equation}
\label{eq:Qn-L6}
\int_\R \big\| \chi Q_n(\omega, \bA) \chi \big\|_{\gS_1} \, d\omega \leq K_n \big\| \bA \big\|_{L^6}^n \big\| \chi \big\|_{L^\frac{12}{6 - n}}^2,
\end{equation}
for any function $\chi \in L_c^\infty(\R^3)$. Here, $K_n$ refers to a positive constant depending only on the coefficients $c_j$ and the masses $m_j$. For $n = 4$ and $n = 5$, we can indeed use the Kato-Seiler-Simon inequality~\eqref{eq:KSS} to write
$$\big\| Q_n(\omega, \bA) \big\|_{\gS_1} \leq K \sum_{j = 0}^2 |c_j| \, \big\| \bA \big\|_{L^n}^n \, \int_{\R^3} \frac{|\omega| \, dp}{(p^2 + m_j^2 + \omega^2)^\frac{n + 2}{2}}.$$
Integrating with respect to $\omega$, we obtain inequality~\eqref{eq:Qn-Ln} with
$$K_n := K \sum_{j = 0}^2 \frac{|c_j|}{m_j^{n - 3}}.$$ 
For $n = 3$, we rely on the identity $c_0 + c_1 + c_2 = 0$ to write
\begin{align*}
Q_3(\omega, \bA) & = 4 \sum_{j = 0}^2 c_j \bigg( \sum_{k = 1}^3 \frac{i \omega}{D_{m_0, 0} + i \omega} \Big( \frac{1}{D_{m_0, 0} + i \omega} \big( \bsalpha \cdot A - V \big) \Big)^k \times\\
& \times \Big( \frac{i \omega}{D_{m_j, 0} + i \omega} - \frac{i \omega}{D_{m_0, 0} + i \omega} \Big) \Big( \big( \bsalpha \cdot A - V \big) \frac{1}{D_{m_j, 0} + i \omega} \Big)^{3 - k}\\
& + \Big( \frac{i \omega}{(D_{m_j, 0} + i \omega)^2} - \frac{i \omega}{(D_{m_0, 0} + i \omega)^2} \Big) \Big( \big( \bsalpha \cdot A - V \big) \frac{1}{D_{m_j, 0} + i \omega} \Big)^3.
\end{align*}
Using inequality~\eqref{eq:norm-diff-Dj}, we deduce that
$$\big\| Q_3(\omega, \bA) \big\|_{\gS_1} \leq K \bigg( \sum_{j = 0}^2 |c_j| (m_j - m_0) \bigg) \frac{|\omega|}{(m_0^2 + \omega^2)^\frac{3}{2}} \, \big\| \bA \big\|_{L^3}^3,$$
which provides estimate~\eqref{eq:Qn-Ln} with
$$K_3 := \sum_{j = 0}^2 |c_j| \frac{m_j - m_0}{m_0}.$$
Inequalities~\eqref{eq:Qn-L6} follow similarly. Applying the Sobolev inequality~\eqref{eq:Sobolev} to~\eqref{eq:Qn-L6}, we deduce that the integrals
$$\boQ_n(\bA) := \frac{1}{4 \pi} \int_\R \big( Q_n(\omega, \bA) + Q_n(- \omega, \bA) \big) \, d\omega,$$
define locally trace-class operators on $L^2(\R^3, \R^4)$ for $3 \leq n \leq 5$, as soon as $\bA \in \Hdiv$. The related densities $\rho_n(\bA)$ and currents $j_n(\bA)$ are well-defined and locally integrable on $\R^3$. When $\bA$ is moreover in $L^n(\R^3)$, inequality~\eqref{eq:Qn-Ln} guarantees that the operators $\boQ_n(\bA)$ are trace-class, while the functions $\rho_n(\bA)$ and $j_n(\bA)$ are integrable on $\R^3$. The continuity in these spaces follows from multi-linearity.

For $n = 1$, we refine our estimates using the cancellations provided by conditions~\eqref{cond:PV}. Following the lines of the analysis of the operator $R_1(\omega, \bA)$, we start by writing
\begin{equation}
\label{eq:dec-Q1}
Q_1(\omega, \bA) + Q_1(- \omega, \bA) = Q_{1, 1}(\omega, \bA) - Q_{1, 2}(\omega, \bA),
\end{equation}
where
\begin{equation}
\label{eq:def-Q11}
Q_{1, 1}(\omega, \bA) := 8 \sum_{j = 0}^2 c_j \, \omega^2 \frac{D_{m_j, 0}}{(D_{m_j, 0}^2 + \omega^2)^2} \big\{ \bsalpha \cdot A - V, D_{m_j, 0} \big\}_{\R^3} \frac{1}{D_{m_j, 0}^2 + \omega^2},
\end{equation}
and
$$Q_{1, 2}(\omega, \bA) := 4 \sum_{j = 0}^2 c_j \, \omega^2 \frac{1}{D_{m_j, 0}^2 + \omega^2} \big( \bsalpha \cdot A - V \big) \frac{1}{D_{m_j, 0}^2 + \omega^2}.$$
As for the operator $Q_{1, 2}(\omega, \bA)$, we combine conditions~\eqref{cond:PV} with identities~\eqref{eq:decomp_carres} to estimate
\begin{equation}
\label{eq:est-Q12}
\big\| Q_{1, 2}(\omega, \bA) \big\|_{\gS_1} \leq K \bigg( \sum_{j = 0}^2 |c_j| \big( m_j^2 - m_0^2 \big)^ 2 \bigg) \frac{\omega^2}{(m_0^2 + \omega^2)^\frac{5}{2}} \, \big\| \bA \big\|_{L^1}.
\end{equation}
In order to estimate the operator $Q_{1, 1}(\omega, \bA)$, we eliminate the odd powers of the masses $m_j$ in the numerator of the right-hand side of~\eqref{eq:def-Q11} by taking the $\C^4$--trace. Recall that
$$\big\{ \bsalpha \cdot A - V, D_{m_j, 0} \big\}_{\R^3} = \big\{ p , A - V \bsalpha \big\}_{\R^3} + B \cdot \bsSigma - 2 m_j V \bsbeta.$$
Since
\begin{equation}
\label{eq:trace-beta}
\tr_{\C^4} \bigg( \bsbeta^d \prod_{k = 1}^3 \bsalpha_k^{d_k} \bigg) = 0,
\end{equation}
when $d$ is odd, we obtain
\begin{align*}
\tr_{\C^4} \big( \gm \, Q_{1, 1}(\omega, & \bA) \big) := 8 \sum_{j = 0}^2 c_j \, \omega^2 \frac{1}{(D_{m_j, 0}^2 + \omega^2)^2} \bigg( - 2 m_j^2 V \tr_{\C^4} \big( \gm \big)\\
& + \tr_{\C^4} \Big( \gm \big( \bsalpha \cdot p \big) \big( \big\{ p , A - V \bsalpha \big\}_{\R^3} + B \cdot \bsSigma \big) \Big) \bigg) \frac{1}{D_{m_j, 0}^2 + \omega^2},
\end{align*}
for any of the matrices $\gm = I_4, \bsalpha_1, \bsalpha_2, \bsalpha_3$. On the other hand, we can compute
\begin{equation}
\label{eq:decomp-carres2}
\begin{split}
\frac{1}{(p^2 + m_j^2 + \omega^2)^2} & = \frac{1}{(p^2 + m_0^2 + \omega^2)^2} + \frac{m_0^2 - m_j^2}{(p^2 + m_j^2 + \omega^2) (p^2 + m_0^2 + \omega^2)^2}\\
& + \frac{m_0^2 - m_j^2}{(p^2 + m_j^2 + \omega^2)^2 (p^2 + m_0^2 + \omega^2)},
\end{split}
\end{equation}
as well as
\begin{align*}
\frac{1}{(p^2 + m_j^2 + \omega^2)^2} = \frac{1}{(p^2 + m_0^2 + \omega^2)^2} + & \frac{2 (m_0^2 - m_j^2)}{(p^2 + m_0^2 + \omega^2)^3}\\
+ \frac{2 (m_0^2 - m_j^2)^2}{(p^2 + m_j^2 + \omega^2) (p^2 + m_0^2 + \omega^2)^3} + & \frac{(m_0^2 - m_j^2)^2}{(p^2 + m_j^2 + \omega^2)^2 (p^2 + m_0^2 + \omega^2)^2}.
\end{align*}
Combining again with conditions~\eqref{cond:PV} and identities~\eqref{eq:decomp_carres}, we obtain the estimate
\begin{align*}
\Big\| \tr_{\C^4} \big( \gm \, Q_{1, 1}(\omega, \bA) \big) \Big\|_{\gS_1} \leq K & \sum_{j = 0}^2 |c_j| \big( m_j^2 - m_0^2 \big) \, \frac{\omega^2}{(m_0^2 + \omega^2)^\frac{5}{2}} \times\\
& \times \Big( m_j^2 \big\| V \big\|_{L^1} + \big( m_j^2 - m_0^2 \big) \big\| \bA \big\|_{L^1}\Big).
\end{align*}
In view of~\eqref{eq:dec-Q1} and~\eqref{eq:est-Q12}, we have
\begin{equation}
\label{eq:Q1-L1}
\begin{split}
\int_\R & \Big\| \tr_{\C^4} \Big( \gm \, \big( Q_1(\omega, \bA) + Q_1(- \omega, \bA) \big) \Big) \Big\|_{\gS_1} \, d\omega\\
& \leq K \sum_{j = 0}^2 |c_j| \, \big( m_j^2 - m_0^2 \big) \, \bigg( \frac{m_j^2}{m_0^2} \, \big\| V \big\|_{L^1} + \frac{m_j^2 - m_0^2}{m_0^2} \, \big\| \bA \big\|_{L^1} \bigg).
\end{split}
\end{equation}
Similarly, we can check that
\begin{equation}
\label{eq:Q1-L6}
\begin{split}
\int_\R & \Big\| \tr_{\C^4} \Big( \gm \, \chi \big( Q_1(\omega, \bA) + Q_1(- \omega, \bA) \big) \chi \Big) \Big\|_{\gS_1} \, d\omega\\
& \leq K \sum_{j = 0}^2 |c_j| \, \big( m_j^2 - m_0^2 \big) \, \bigg( \frac{m_j^2}{m_0^2} \, \big\| V \big\|_{L^6} + \frac{m_j^2 - m_0^2}{m_0^2} \, \big\| \bA \big\|_{L^6} \bigg) \, \big\| \chi \big\|_{L^\frac{12}{5}}^2.
\end{split}
\end{equation}

For $n = 2$, the analysis is identical. We compute
\begin{equation}
\label{eq:dec-Q2}
Q_2(\omega, \bA) + Q_2(- \omega, \bA) = Q_{2, 1}(\omega, \bA) - Q_{2, 2}(\omega, \bA),
\end{equation}
where
\begin{align*}
& Q_{2, 1}(\omega, \bA) := 12 \sum_{j = 0}^2 c_j \, \omega^2 \times\\
& \times \bigg( \frac{D_{m_j, 0}}{(D_{m_j, 0}^2 + \omega^2)^2} \big\{ \bsalpha \cdot A - V, D_{m_j, 0} \big\} \frac{1}{D_{m_j, 0}^2 + \omega^2} (\bsalpha \cdot A - V) \frac{D_{m_j, 0}}{D_{m_j, 0}^2 + \omega^2}\\
& + \frac{D_{m_j, 0}^2}{(D_{m_j, 0}^2 + \omega^2)^2} \big\{ \bsalpha \cdot A - V, D_{m_j, 0} \big\} \frac{1}{D_{m_j, 0}^2 + \omega^2} (\bsalpha \cdot A - V) \frac{1}{D_{m_j, 0}^2 + \omega^2} \bigg), 
\end{align*}
and
\begin{align*}
Q_{2, 2}(\omega, \bA) & := 6 \sum_{j = 0}^2 c_j \, \omega^2 \bigg( \frac{2 D_{m_j, 0}}{D_{m_j, 0}^2 + \omega^2} \Big( \big( \bsalpha \cdot A - V \big) \frac{1}{D_{m_j, 0}^2 + \omega^2} \Big)^2\\
& + \frac{1}{D_{m_j, 0}^2 + \omega^2} \big( \bsalpha \cdot A - V \big) \frac{D_{m_j, 0}}{D_{m_j, 0}^2 + \omega^2} \big( \bsalpha \cdot A - V \big) \frac{1}{D_{m_j, 0}^2 + \omega^2}\\
& + \Big( \frac{1}{D_{m_j, 0}^2 + \omega^2} \big( \bsalpha \cdot A - V \big) \Big)^2 \frac{D_{m_j, 0}}{D_{m_j, 0}^2 + \omega^2} \bigg).
\end{align*}
In order to estimate the operators $Q_{2, 1}(\omega, \bA)$ and $Q_{2, 2}(\omega, \bA)$, we again take the $\C^4$--trace. For $\gm = I_4$ or $\gm = \bsalpha_k$, we derive from~\eqref{eq:trace-beta} that
\begin{align*}
& \tr_{\C^4} \big( \gm \, Q_{2, 1}(\omega, \bA) \big) = 12 \sum_{j = 0}^2 c_j \, \omega^2 \times\\
& \times \tr_{\C^4} \bigg( \frac{\gm (\bsalpha \cdot p)}{(D_{m_j, 0}^2 + \omega^2)^2} \big\{ \bsalpha \cdot A - V, \bsalpha \cdot p \big\} \frac{1}{D_{m_j, 0}^2 + \omega^2}(\bsalpha \cdot A - V)\frac{1}{D_{m_j, 0}^2 + \omega^2}\\
& - m_j^2 \, \gm \frac{1}{(D_{m_j, 0}^2 + \omega^2)^2} \big\{ \bsalpha \cdot A + V, \bsalpha \cdot p \big\} \frac{1}{D_{m_j, 0}^2 + \omega^2}(\bsalpha \cdot A + V)\frac{1}{D_{m_j, 0}^2 + \omega^2}\\
& - 2 m_j^2 \, \gm \frac{1}{(D_{m_j, 0}^2 + \omega^2)^2} V \frac{1}{D_{m_j, 0}^2 + \omega^2}(\bsalpha \cdot A - V)\frac{\bsalpha\cdot p}{D_{m_j, 0}^2 + \omega^2}\\
& + 2 m_j^2 \, \gm \frac{\bsalpha \cdot p}{(D_{m_j, 0}^2 + \omega^2)^2} V \frac{1}{D_{m_j, 0}^2 + \omega^2}(\bsalpha \cdot A + V)\frac{1}{D_{m_j, 0}^2 + \omega^2}\\
& + \gm \frac{p^2+m_j^2}{(D_{m_j, 0}^2 + \omega^2)^2} \big\{ \bsalpha \cdot A - V, \bsalpha \cdot p \big\} \frac{1}{D_{m_j, 0}^2 + \omega^2}(\bsalpha \cdot A - V)\frac{1}{D_{m_j, 0}^2 + \omega^2}\bigg),
\end{align*}
while
\begin{align*}
& \tr_{\C^4} \big( \gm \, Q_{2, 2}(\omega, \bA) \big) = 6 \sum_{j = 0}^2 c_j \, \omega^2 \tr_{\C^4} \bigg( 2 \gm \frac{\bsalpha \cdot p}{D_{m_j, 0}^2 + \omega^2} \times\\
& \times \Big( \big( \bsalpha \cdot A - V \big) \frac{1}{D_{m_j, 0}^2 + \omega^2} \Big)^2 + \gm \frac{1}{D_{m_j, 0}^2 + \omega^2} \big( \bsalpha \cdot A - V \big) \frac{\bsalpha \cdot p}{D_{m_j, 0}^2 + \omega^2} \times\\
& \times \big( \bsalpha \cdot A - V \big) \frac{1}{D_{m_j, 0}^2 + \omega^2} + \gm \Big( \frac{1}{D_{m_j, 0}^2 + \omega^2} \big( \bsalpha \cdot A - V \big) \Big)^2 \frac{\bsalpha \cdot p}{D_{m_j, 0}^2 + \omega^2} \bigg).
\end{align*}
Invoking conditions~\eqref{cond:PV}, as well as identities~\eqref{eq:decomp_carres} and~\eqref{eq:decomp-carres2}, we deduce that
\begin{align*}
\Big\| \tr_{\C^4} \big( \gm \, Q_{2, 1}(\omega, \bA) \big) \Big\|_{\gS_1} \leq & K \sum_{j = 0}^2 |c_j| \, \bigg( \big( m_j^2 - m_0^2 \big) \, \big\| \bA \big\|_{L^2}^2 \, \frac{\omega^2}{(m_0^2 + \omega^2)^2}\\
& + m_j^2 \, \big\| \bA \big\|_{L^2} \, \big( \big\| \bA \big\|_{L^2} + \big\| V \big\|_{L^2} \big) \, \frac{\omega^2}{(m_j^2 + \omega^2)^2} \bigg),
\end{align*}
and
$$\Big\| \tr_{\C^4} \big( \gm \, Q_{2, 2}(\omega, \bA) \big) \Big\|_{\gS_1} \leq K \bigg( \sum_{j = 0}^2 |c_j| \, \big( m_j^2 - m_0^2 \big) \bigg) \big\| \bA \big\|_{L^2}^2 \, \frac{\omega^2}{(m_0^2 + \omega^2)^2}.$$
It follows that
\begin{align*}
\int_\R & \Big\| \tr_{\C^4} \Big( \gm \, \big( Q_2(\omega, \bA) + Q_2(- \omega, \bA) \big) \Big) \Big\|_{\gS_1} \, d\omega\\
& \leq K \sum_{j = 0}^2 |c_j| \bigg( \frac{m_j^2 - m_0^2}{m_0} \, \big\| \bA \big\|_{L^2}^2 + m_j \, \big\| \bA \big\|_{L^2} \, \big( \big\| \bA \big\|_{L^2} + \big\| V \big\|_{L^2} \big) \bigg).
\end{align*}
Similarly, we have
\begin{equation}
\label{eq:Q2-L6}
\begin{split}
\int_\R & \Big\| \tr_{\C^4} \Big( \gm \, \chi \big( Q_2(\omega, \bA) + Q_2(- \omega, \bA) \big) \chi \Big) \Big\|_{\gS_1} \, d\omega\\
& \leq K \sum_{j = 0}^2 |c_j| \bigg( \frac{m_j^2 - m_0^2}{m_0} \, \big\| \bA \big\|_{L^6}^2 + m_j \, \big\| \bA \big\|_{L^6} \, \big( \big\| \bA \big\|_{L^6} + \big\| V \big\|_{L^6} \big) \bigg) \, \big\| \chi \big\|_{L^3}^2.
\end{split}
\end{equation}
In view of~\eqref{eq:Q1-L1} and~\eqref{eq:Q1-L6}, we conclude that the integrals
$$\tr_{\C^4} \big( \gm \, \boQ_n \big) := \frac{1}{4 \pi} \int_\R \tr_{\C^4} \Big( \gm \, \big( Q_n(\omega, \bA) + Q_n(- \omega, \bA) \big) \Big) \, d\omega,$$
also define local trace-class operators on $L^2(\R^3, \R^4)$ for $n = 1, 2$, as soon as $\bA \in \Hdiv$. The operators are trace-class when $\bA$ is in $L^n(\R^3)$. Concerning the related densities $\rho_n(\bA)$ and currents $j_n(\bA)$, they are well-defined and locally integrable on $\R^3$ for $\bA \in \Hdiv$, and integrable on $\R^3$ for $\bA \in L^n(\R^3)$. Their continuity follows again by multi-linearity.

At this stage, it remains to recall Formulas~\eqref{eq:devel-Q} and~\eqref{eq:resolvent-Q} to complete the proof of Lemma~\ref{lem:prop-QA}.
\end{proof}

We are now in position to complete the proof of Theorem~\ref{thm:differentiability}.

\begin{proof}[End of the proof of Theorem~\ref{thm:differentiability}]
We have shown that the functional $\boF_{\rm PV}$ is smooth on the open subset $\boH$ of four-potentials $\bA$ such that $0$ is not an eigenvalue of $D_{m_j, \bA}$ for each $j = 0, 1, 2$. In particular, the differential ${\rm d} \boF_{\rm PV}(\bA)$ is a bounded form on $\Hdiv$. By duality, it can be identified with a couple of functions $(\rho_*, j_*)$ in the Coulomb space $\boC$ defined in~\eqref{eq:def_Coulomb}. Our task reduces to verify that $\rho_* = \rho_\bA$ and $j_* = - j_\bA$.

We first restrict our attention to four-potentials $\bA$ which are moreover integrable on $\R^3$. In this case, the functional $\boF_{\rm PV}(\bA)$ is given by Formula~\eqref{eq:proper-def-F}, which may be written in view of~\eqref{eq:dev-TA} as
$$\boF_{\rm PV}(\bA) = \sum_{n = 1}^5 \boF_n(\bA) + \boR_6(\bA),$$
where we recall that
$$\boF_n(\bA) := \frac{1}{4 \pi} \int_\R \tr \Big( \tr_{\C^4} \big( R_n(\omega, \bA) + R_n(- \omega, \bA) \big) \Big) \, d\omega,$$
and
$$\boR_6(\bA) := \frac{1}{4 \pi} \int_\R \tr \Big( \tr_{\C^4} \big( R_6'(\omega, \bA) + R_6'(- \omega, \bA) \big) \, d\omega.$$
We have computed the differential of ${\rm d} \boR_6(\bA)$ in~\eqref{eq:dboR6}. On the other hand, the functionals $\boF_n$ are $n$-linear with respect to $\bA$, so that their differentials are given by
\begin{align*}
{\rm d} & \boF_n(\bA)(\gv, \ga)\\
& = \frac{1}{4 \pi} \int_\R \tr \Big( \tr_{\C^4} \big( {\rm d}_\bA R_n(\omega, \bA)(\gv, \ga) + {\rm d}_\bA R_n(- \omega, \bA)(\gv, \ga) \big) \Big) \, d\omega,
\end{align*}
with
\begin{equation}
\label{eq:dRn}
\begin{split}
{\rm d}_\bA R_n(\omega, \bA)(\gv, \ga) & = \sum_{j = 0}^2 c_j \, \frac{i \omega}{D_{m_j, 0} + i \omega} \sum_{k = 0}^{n - 1} \Big( \big( \bsalpha \cdot A - V \big) \frac{1}{D_{m_j, 0} + i \omega} \Big)^k \times\\
& \times \big( \bsalpha \cdot \ga - \gv \big) \frac{1}{D_{m_j, 0} + i \omega} \Big( \big( \bsalpha \cdot A - V \big) \frac{1}{D_{m_j, 0} + i \omega} \Big)^{n - 1 - k},
\end{split}
\end{equation}
for any $(\gv, \ga) \in L^1(\R^3, \R^4) \cap \Hdiv$. It follows that the differential ${\rm d} \boF_{\rm PV}(\bA)$ is equal to
$${\rm d} \boF_{\rm PV}(\bA)(\gv, \ga) = \frac{1}{4 \pi} \int_\R \Xi(\omega, \bA)(\gv, \ga) \, d\omega,$$
with
\begin{align*}
\Xi(\omega, \bA)(\gv, \ga) := & \tr \bigg( \sum_{n = 1}^5 \tr\Big(\tr_{\C^4} \big( {\rm d}_\bA R_n(\omega, \bA)(\gv, \ga) + {\rm d}_\bA R_n(- \omega, \bA)(\gv, \ga) \big)\Big)\\
& + \tr\Big(\tr_{\C^4} \big( {\rm d}_\bA R_6'(\omega, \bA)(\gv, \ga) + {\rm d}_\bA R_6'(- \omega, \bA)(\gv, \ga) \big) \Big)\bigg).
\end{align*}
At this stage, we make use of Formulas~\eqref{eq:dR6'} and \eqref{eq:dRn} to check that
\begin{equation}
\label{eq:Phi}
\Xi(\omega, \bA)(\gv, \ga) = \tr \bigg( \tr_{\C^4} \bigg( \sum_{j = 0}^2 c_j \, \frac{i \omega}{(D_{m_j, \bA} + i \omega)^2} \big( \bsalpha \cdot \ga - \gv \big) \bigg) \bigg).
\end{equation}
Indeed, we have established in the course of Lemma~\ref{lem:6th-reg} that each term in the decomposition of ${\rm d}_\bA R_6'(\omega, \bA)(\gv, \ga)$ which is provided by Formula~\eqref{eq:dR6'} is trace-class. As a consequence, we can write
\begin{align*}
\tr & \Big( \tr_{\C^4} \, {\rm d}_\bA R_6'(\omega, \bA)(\gv, \ga) \Big)\\
& = \sum_{j = 0}^2 c_j \, \tr \, \frac{i \omega}{D_{m_j, \bA} + i \omega} \big( \bsalpha \cdot \ga - \gv \big) \frac{1}{D_{m_j, \bA} + i \omega} \Big( \big( \bsalpha \cdot A - V \big) \frac{1}{D_{m_j, 0} + i \omega} \Big)^6\\
& + \sum_{j = 0}^2 c_j \, \sum_{k = 0}^5 \tr \, \frac{i \omega}{D_{m_j, \bA} + i \omega} \Big( \big( \bsalpha \cdot A - V \big) \frac{1}{D_{m_j, 0} + i \omega} \Big)^{5 - k} \times\\
& \quad \times \big( \bsalpha \cdot \ga - \gv \big) \frac{1}{D_{m_j, 0} + i \omega} \Big( \big( \bsalpha \cdot A - V \big) \frac{1}{D_{m_j, 0} + i \omega} \Big)^k.
\end{align*}
An advantage of this further decomposition is that we are allowed to commute the products in the right-hand side, so as to obtain
\begin{align*}
\tr & \Big( \tr_{\C^4} \, {\rm d}_\bA R_6'(\omega, \bA)(\gv, \ga) \Big)\\
& = \sum_{j = 0}^2 c_j \, \tr \, \frac{i \omega}{D_{m_j, \bA} + i \omega} \Big( \big( \bsalpha \cdot A - V \big) \frac{1}{D_{m_j, 0} + i \omega} \Big)^6 \frac{1}{D_{m_j, \bA} + i \omega} \big( \bsalpha \cdot \ga - \gv \big)\\
& + \sum_{j = 0}^2 c_j \, \sum_{k = 0}^5 \tr \, \frac{i \omega}{D_{m_j, 0} + i \omega} \Big( \big( \bsalpha \cdot A - V \big) \frac{1}{D_{m_j, 0} + i \omega} \Big)^k \frac{1}{D_{m_j, \bA} + i \omega} \times\\
& \quad \times \Big( \big( \bsalpha \cdot A - V \big) \frac{1}{D_{m_j, 0} + i \omega} \Big)^{5 - k} \big( \bsalpha \cdot \ga - \gv \big).
\end{align*}
This follows from the property that the operator $(i \omega)(D_{m_j, \bA} + i \omega)^{-1}$ is bounded, while the operators $(\bsalpha \cdot A - V)(D_{m_j, 0} + i \omega)^{-1}$ and $(\bsalpha \cdot \ga - \gv)(D_{m_j, 0} + i \omega)^{-1}$ belong to suitable Schatten spaces. Using the resolvent expansion~\eqref{eq:resolvent_expansion}, we are led to
\begin{equation}
\label{eq:good-dR6'}
\begin{split}
\tr \Big( \tr_{\C^4} & \, {\rm d}_\bA R_6'(\omega, \bA)(\gv, \ga) \Big) = \tr \bigg( \tr_{\C^4} \sum_{j = 0}^2 c_j \bigg( \frac{i \omega}{(D_{m_j, \bA} + i \omega)^2} \big( \bsalpha \cdot \ga - \gv \big)\\
- & \sum_{k = 0}^4 \, \sum_{l = 0}^{4 - k} \, \frac{1}{D_{m_j, 0} + i \omega} \Big( \big( \bsalpha \cdot A - V \big) \frac{1}{D_{m_j, 0} + i \omega} \Big)^k \frac{i \omega}{D_{m_j, 0} + i \omega} \times\\
& \quad \times \Big( \big( \bsalpha \cdot A - V \big) \frac{1}{D_{m_j, 0} + i \omega} \Big)^l \big( \bsalpha \cdot \ga - \gv \big) \bigg) \bigg).
\end{split}
\end{equation}
Similarly, we can deduce from~\eqref{eq:dRn} that
\begin{align*}
\tr \Big( \tr_{\C^4} & \, {\rm d}_\bA R_n(\omega, \bA)(\gv, \ga) \Big)\\
= & \tr \bigg( \tr_{\C^4} \sum_{j = 0}^2 c_j \, \sum_{k = 0}^{n - 1} \frac{1}{D_{m_j, 0} + i \omega} \Big( \big( \bsalpha \cdot A - V \big) \frac{1}{D_{m_j, 0} + i \omega} \Big)^{n - 1 - k} \times\\
& \quad \times \frac{i \omega}{D_{m_j, 0} + i \omega} \Big( \big( \bsalpha \cdot A - V \big) \frac{1}{D_{m_j, 0} + i \omega} \Big)^k \big( \bsalpha \cdot \ga - \gv \big) \bigg).
\end{align*}
Formula~\eqref{eq:Phi} follows combining with~\eqref{eq:good-dR6'}.

As a conclusion, we have derived the following expression of ${\rm d} \boF_{\rm PV}(\bA)$,
\begin{align*}
{\rm d} \boF_{\rm PV}(\bA)(\gv, \ga) = \frac{1}{4 \pi} \int_\R\tr \bigg( \tr_{\C^4} \bigg( & \sum_{j = 0}^2 c_j \, \Big( \frac{i \omega}{(D_{m_j, \bA} + i \omega)^2}\\
& - \frac{i \omega}{(D_{m_j, \bA} - i \omega)^2} \Big) \big( \bsalpha \cdot \ga - \gv \big) \, d\omega \bigg) \bigg).
\end{align*}
In view of~\eqref{eq:sign} and Lemma~\ref{lem:prop-QA}, we deduce that
$${\rm d} \boF_{\rm PV}(\bA)(\gv, \ga) = \tr \Big( \tr_{\C^4} \Big( Q_\bA \big( \gv -\bsalpha \cdot \ga \big) \Big) \Big) = \int_{\R^3} \big( \rho_\bA \gv - j_\bA \cdot \ga \big),$$
so that $\rho_* = \rho_\bA$ and $j_* = j_\bA$, when $\bA \in L^1(\R^3, \R^4) \cap \Hdiv$.

In the general case where $\bA$ is only in $\Hdiv$, we can construct a sequence of maps $(\bA_n)_{n \in \N}$ in $ L^1(\R^3, \R^4) \cap \Hdiv$, for which $0 \notin \sigma(D_{m_j, \bA_n})$ for any $n \in \N$ and each $j = 0, 1, 2$, and such that
$$\bA_n \to \bA \quad {\rm in} \quad \Hdiv,$$
as $n \to \infty$. The existence of such a sequence follows from the density of $L^1(\R^3, \R^4) \cap \Hdiv$ in $\Hdiv$, and statement $(ii)$ in Lemma~\ref{lem:spectre}. For each integer $n$, we know that
$${\rm d} \boF_{\rm PV}(\bA_n)(\gv, \ga) = \int_{\R^3} \Big( \rho_{\bA_n} \gv - j_{\bA_n} \cdot \ga \Big),$$
for any four-potential $(\gv, \ga) \in \boC_c^\infty(\R^3, \R^4)$. Combining the continuous differentiability of the functional $\boF_{\rm PV}$ with statement $(i)$ in Lemma~\ref{lem:prop-QA}, we obtain, taking the limit $n \to \infty$,
$${\rm d} \boF_{\rm PV}(\bA)(\gv, \ga) = \int_{\R^3} \Big( \rho_{\bA} \gv - j_{\bA} \cdot \ga \Big),$$
which completes the proof of $(ii)$ in Theorem~\ref{thm:differentiability}.

Concerning $(iii)$, recall that the second order differential of $\boF_{\rm PV}$ is equal to
$${\rm d}^2 \boF_{\rm PV}(\bA) = {\rm d}^2 \boF_2(\bF) + {\rm d}^2 \boF_4(\bA) + {\rm d}^2 \boR(\bA).$$ 
Since $\boF_2$ is quadratic and $\boF_4$ is quartic, estimate~\eqref{eq:estim_Hessian} appears as a consequence of Formula~\eqref{eq:formF2}, and inequalities~\eqref{eq:trT4A} and~\eqref{eq:borne_d2_R6_0}. This completes the proof of Theorem~\ref{thm:differentiability}.
\end{proof}


\end{document}